\keywords{asymptotically-tight, multivariate, disjunctive, worst-case, polynomial bounds}
\newcommand{\tool}[1]{\textsc{#1}\xspace}  
\newcommand{\eps}{{\varepsilon}}
\newcommand{\pgt}[1]{{\tt #1}}
\newcommand{\lsem}{\mbox{$\lbrack\hspace{-0.3ex}\lbrack$}}
\newcommand{\rsem}{\mbox{$\rbrack\hspace{-0.3ex}\rbrack$}}
\newcommand{\sempar}[1]{\mbox{\lsem\pgt{#1}\rsem}}
\newcommand{\pass}{\texttt{:=}\,}
\newcommand{\semi}{\texttt{;}}
\newcommand{\X}{\texttt{X}}
\newcommand{\C}{\texttt{C}}
\newcommand{\Y}{\texttt{Y}}
\newcommand{\setS}{\mathcal{S}}
\newcommand{\setT}{\mathcal{T}}
\newcommand{\setB}{\mathcal{B}}
\newcommand{\setA}{\mathcal{A}}
\newcommand{\PB}{{\texttt{PB}}}
\newcommand{\join}{\sqcup}
\renewcommand{\vec}[1]{{\mathbf{#1}}}
\newcommand{\ppol}{{\texttt{\upshape MPol}}}
\newcommand{\abspol}{{\texttt{\upshape APol}}}
\newcommand{\tabspol}{{\texttt{\upshape $\tau$APol}}}
\newcommand{\absppol}{{\texttt{\upshape AMPol}}}
\newcommand{\tabsppol}{{\texttt{\upshape $\tau$AMPol}}}
\newcommand{\tpol}{{\texttt{\upshape $\tau$Pol}}}
\newcommand{\tppol}{{\texttt{\upshape $\tau$MPol}}}
\newcommand{\idppol}{\mathit{Id}}
\newcommand{\acirc}{\mathop{\bullet}}   
\newcommand{\taucomp}{\mathop{\star}}
\newcommand{\tra}[1]{\ensuremath{\alpha!({#1})}}
\newcommand\lepoly[0]{\sqsubset}  
\newcommand\gepoly[0]{\sqsupset}  
\newcommand{\wgt}[1]{\ensuremath{{\Vert #1 \Vert}}}
\newcommand\closure[1]{\ensuremath{{Cl(#1)}}}
\newcommand\sd[1]{\ensuremath{{{\textsc{SD}}(#1)}}}
\newcommand\sdpart[1]{\ensuremath{{\lfloor #1 \rfloor}}}
\newcommand\sdm[1]{\ensuremath{{\lfloor\!\lfloor #1 \rfloor\!\rfloor}}}
\newcommand\sdc[1]{{\sdm{#1}}}
\newcommand\unimono[1]{\dot{#1}}
\newcommand\nunimono[1]{\ddot{#1}}
\newcommand\mnl[0]{{:}} 
\newcommand\procSDL{\textsc{Solve}\xspace}
\newcommand{\trans}[3]{{#2\xrightarrow{#1}#3}}
\newcommand{\wtrans}[4]{{#2\xrightarrow{#1|#4}#3}}
\newcommand{\tseqs}[1]{{#1}^*}
\newcommand{\wtseqs}[1]{\mathcal{W}{#1}^*}
\newcommand{\tseq}[3]{{#1\stackrel{#2}{\leadsto}#3}}
\newcommand{\concrete}[1]{{\widetilde{#1}}}  
\newcommand{\concat}[0]{}  
\newcommand\myforall[1]{\ensuremath{\forall #1\ .\ }}
\newcommand{\eqdef}{\stackrel{def}{=\;}}
\newcommand\tuple[1]{\ensuremath{\langle #1\rangle}}
\newcommand\fourtuple[4]{\ensuremath{\langle #1,\ #2,\ #3,\ #4\rangle}}
\newcommand{\ints}{\mathbb{Z}}
\newcommand{\nats}{\mathbb{N}}
\begin{document}

\title{Tight Polynomial Worst-Case Bounds \texorpdfstring{\\}{} for Loop Programs\rsuper{*}}

\titlecomment{\lsuper{*}A preliminary version of this paper appeared in FoSSaCS 2019, LNCS 11425:80--97}

\author[A.M. Ben-Amram]{Amir M. Ben-Amram\rsuper{a}}
\address{\lsuper{a}Qiryat Ono, Israel}
\email{amirben@mta.ac.il}

\author[G.W. Hamilton]{Geoff Hamilton\rsuper{b}}
\address{\lsuper{b}School of Computing, Dublin City University, Ireland}
\email{hamilton@computing.dcu.ie}

\begin{abstract}
In 2008, Ben-Amram, Jones and Kristiansen showed that
for a simple programming language---representing non-deterministic
imperative programs with bounded loops,
and arithmetics limited to addition and multiplication---it is possible
to decide precisely whether a program has certain growth-rate
properties, in particular whether
a computed value, or the program's running time, has a polynomial growth rate.

A natural and intriguing problem was to move from answering the decision problem to giving a quantitative result, namely, a tight polynomial upper bound.
This paper shows how to obtain \emph{asymptotically-tight}, \emph{multivariate}, \emph{disjunctive} polynomial bounds for this class of programs.
This is a complete solution: whenever a polynomial bound exists it will be found.

A pleasant surprise is that the algorithm is quite simple; but it relies on some subtle reasoning.
 An important ingredient in the proof is the \emph{forest factorization theorem},
a strong structural result on homomorphisms into a finite monoid.
\end{abstract}

\maketitle

\section{Introduction}

One of the most important properties we would like to know about programs is their \emph{resource usage}, i.e., the amount of resources
(such as time, memory and energy) required for their execution. This information is useful during development, when performance bugs
and security vulnerabilities exploiting performance issues can be avoided. It is also particularly relevant for mobile applications,
where resources are limited, and for cloud services, where resource usage is a major cost factor.

In the literature, a lot of different ``cost analysis'' problems (also called ``resource bound analysis'', etc.)
have been studied (e.g.~\cite{Wegbreit:75,Rosendahl89,ACE,Albert-et-al:TCS:2011,APROVE-JAR2017,CiaoPP-TPLP2018,CHS:pldi2015,SZV:jar2017});
several of them may be grouped under the following general definition.
The \emph{countable resource problem} asks about the maximum usage of a ``resource'' that accumulates during execution,
and which one can explicitly count, by instrumenting the program with an accumulator variable and instructions to increment it where
necessary. For example, we can estimate the \emph{execution time} of a program  by counting certain ``basic steps''.
 Another example is counting the number of visits to designated program locations. Realistic problems of this type include bounding
the number of calls to specific functions, perhaps to system services; the number of I/O operations;  number of accesses to memory, etc.
The consumption of resources such as \emph{energy} suits our problem formulation as long as such explicit bookkeeping is possible (we have to assume that the
increments, if not constant, are given by a monotone polynomial expression).

In this paper we solve the \emph{bound analysis problem} for a particular class of programs, defined in~\cite{BJK08}.
The bound analysis problem is to find symbolic bounds on the maximal possible value of an integer variable
at the end of the program,
in terms of some integer-valued variables that appear in the initial state of a computation.
Thus, a solution to this problem might be used for any of the resource-bound analyses above.
 In this work we focus on values that
grow polynomially (in the sense of being bounded by a polynomial), and our goal is to find polynomial bounds that are tight, in the sense
of being precise up to a constant factor.

The programs we study are expressed by the so-called \emph{core language}. It is
an imperative language, including bounded loops, non-deterministic branches and restricted arithmetic expressions;
the syntax is shown in Figure~\ref{fig-syntax}. The semantics is explained and motivated below, but is largely intuitive;
see also the illustrative example in Figure~\ref{fig:intro-example}.
In 2008, it was proved~\cite{BJK08} that for this language it is decidable whether a computed result is polynomially bounded or not.  This makes the language
an attractive target for work on the problem of computing tight bounds. However, for the past ten years there has been no improvement on~\cite{BJK08}.
We now present an algorithm to compute, for every program in the language, and every variable in the program which
has a polynomial upper bound (in terms of input values), a tight  polynomial bound  on its largest attainable value (informally, ``the worst-case value'')
as a function of the input values. The bound is guaranteed to be tight up to a multiplicative constant factor but constants are left implicit (for example a bound
quadratic in $n$ will always be represented as $n^2$).
The algorithm could
be extended to compute upper and lower bounds with explicit constant factors, but choosing to ignore coefficients simplifies the algorithm considerably.
In fact, we have striven for a simple, comprehensible algorithm, and we believe that the  algorithm we present is sufficiently simple that, beyond
being comprehensible, offers insight into the structure of computations in this model.

Our philosophy is that research on complete solutions to static analysis questions regarding weak languages is desirable for several reasons.
First, it is theoretically satisfying---it establishes a clear and definite result. The algorithm can be possibly employed later in more complex situations,
and we will at least have an assurance that it does its part; arguments about the value of relying on decidable problems in program analysis have recently been
given in~\cite{McMillanPadon18}, and~\cite{Kincaid:sas18} gives a methodology for incorporating them as parts in a bigger system (albeit for safety problems).
Knowing that a problem is solvable for a certain weak language gives us a point of reference for future research, and makes it meaningful to further discuss
questions of computational complexity.
As pointed out in~\cite{McMillanPadon18}, when an algorithm to prove a property has a completeness proof, it usually means that it is possible to furnish a justification
for a negative answer, which is of value to the user. In our case, if the algorithm returns a bound which is higher than what you wanted, you can obtain from it
an \emph{execution pattern} which shows how the result arose.
Finally, the quest for complete solutions drives research forward by setting challenges which invite new insights and ideas.

Next, we explain the definition of the language in more detail. We will comment
about the motivation for the definitions, in particular vis-{\`a}-vis the analysis of fuller programming languages.  The main argument is that  choices in
the definition of the language are driven by the idea of using it as a
\emph{conservative abstraction}.

\subsection{The core language}%
\label{sec:language}

\hspace*{5cm} \\
\begin{figure}[htb]
\[\renewcommand{\arraystretch}{1.3}
\begin{array}{rcl}
\verb+X+\in\mbox{Variable} &\;\; ::= \;\; & \X_1 \mid\X_2 \mid \X_3 \mid
 \ldots  \mid \X_n\\
\verb+E+\in\mbox{Expression} & ::= & \verb+X+ \mid \verb/E + E/ \mid
\verb+E * E+\\
\verb+C+\in\mbox{Command} & ::= & \verb+skip+ \mid \verb+X:=E+
                                \mid \verb+C+_1 \semi \verb+C+_2
                                \mid \texttt{loop E \{C\}}
                         \mid \texttt{choose}\;  \C_1  \; \texttt{or} \; \C_2
 \end{array} \renewcommand{\arraystretch}{1.0}\]
\caption{Syntax of the core language.}%
\label{fig-syntax}
\end{figure}

\paragraph{\bf Data.}
The only type of data in the core language is non-negative integers.\footnote{This could be modified to all integers, as explained later.}
In a practical setting, a program may include  statements that manipulate
non-integer data that can, however, be abstracted away without losing the information critical to loop control
---hence to a complexity analysis---as loops are often controlled by integer variables.
In other cases, it is possible to preprocess a program to replace complex data values with their size (or ``norm''), which is the quantity of importance
for loop control. Methods for this process have been widely studied in conjunction with termination and cost analysis.
These considerations motivate the study of weak languages that handle integers.

\paragraph{\bf Command semantics.}
The core language is inherently non-deterministic.
The {\tt choose} command represents a non-deterministic choice, and can be used to abstract any concrete conditional command by simply ignoring the condition.
 Note that what we ignore is branches within a loop body
and not branches that implement the loop control, as loops are represented by a dedicated loop command.
The command $\verb+loop E {C}+$ repeats \pgt{C} a non-deterministic number of times
bounded by the value of $\pgt{E}$, which is evaluated just before the loop is entered. Thus, as a conservative abstraction, it
may be used to model different forms of loops (for-loops, while-loops)
as long as a bound on the number of iterations, as a function of the program state on loop initiation,
 can be determined and expressed in the language.
There is an ample body of research on analysing programs to find such bounds where they are not explicitly given by the programmer;
in particular,  bounds can be obtained from a \emph{ranking function} for the loop~\cite{PR:04,BagnaraHZ08,ADFG:2010,Ben-AmramG13jv,Ben-AmramG17}. 
Note that the arithmetic in our language is too restricted to allow for the maintenance of counters and the management of \emph{while} loops,
as there is no subtraction, no explicit constants and no tests. Thus, for realistic ``concrete'' programs which use such devices, loop-bound analysis is supposed to be
performed \emph{on the concrete program} as part of the process of abstracting it to the core language. This process is illustrated in~\cite[Sect.~2]{BAPineles:2016}.
The semantics of the loop is non-deterministic, so that the loop is allowed to
actually perform fewer iterations than indicated by the bound expression; this is useful both for modeling loops that can ``break,''
as well as for using the results of auxiliary analyses, as those usually provide just a bound, not a precise number of iterations.

An interesting observation has been made by Jones and Kristiansen~\cite{JK08}: typically, algorithms whose goal is to prove loop termination
and establish loop bounds do so by
focusing on values that \emph{decrease} (examples are the Size-Change Termination principle~\cite{leejonesbenamram01} and numerous
methods that discover \emph{ranking functions}).   In contrast, by abstracting to our core language (\cite{JK08} uses a very similar
language), we focus on values that \emph{grow}.  Recent work in static analysis~\cite{Giesl:toplas2016} describes an analysis system which
combines a subsystem for
loop-bound analysis (via ranking functions) with a subsystem for
growth-rate analysis, which establishes symbolic bounds on data that grow along a loop.
 Our definition of the core language separates the concerns and concentrates on the problem of value growth,
for a program (or program fragment) where loop bounds are already known.  Note however that the loop bound is to be given as a function
of the state in which the loop is started, and may well depend on values that are the result of previous computations, as our example program in Figure~\ref{fig:intro-example} illustrates.

\begin{figure}[htb]
\begin{Verbatim}[codes={\catcode`$=3\catcode`_=8}]
loop X$_1$ {
   loop X$_2$ + X$_3$ { choose { X$_3$:= X$_1$; X$_2$:= X$_4$ } or { X$_3$:= X$_4$; X$_2$:= X$_1$ } };
   X$_4$:= X$_2$ + X$_3$
};
loop X$_4$ { choose  { X$_3$:= X$_1$ + X$_2$ + X$_3$ } or { X$_3$:= X$_2$;  X$_2$:= X$_1$ } }
\end{Verbatim}
\caption{A core-language program. \texttt{loop} $n$ \pgt{C} means ``do \pgt{C} at most $n$ times.''}%
\label{fig:intro-example}
\end{figure}

From a computability viewpoint,
the use of bounded loops restricts the programs that can be represented to such that compute
primitive recursive functions; this is a rich enough class to cover a lot of useful algorithms and make
the analysis problem challenging. In fact, our language resembles a weak version of Meyer and Ritchie's LOOP language~\cite{MR:67},
which computes all the primitive recursive functions, and where behavioral questions like ``is the result linearly bounded'' are undecidable.

\subsection{The algorithm}

Consider the program in Figure~\ref{fig:intro-example}. Suppose that it is started with the values of the variables $\X_1,\X_2,\dots$~being $x_1,x_2,\dots$.
Our purpose is to bound the values of all variables at the conclusion of the program in terms of those initial values. Indeed, they are all polynomially bounded, and
our algorithm provides tight bounds. For instance, it establishes that the final value of $\X_3$ is tightly bounded (up to a constant factor) by
$\max (x_4(x_4 + x_1^2),x_4(x_2 + x_3 + x_1^2) )$.

Actually, the algorithm produces information in a more precise form, as \emph{a disjunction of simultaneous bounds}.
This means that it generates a set of
tuples of polynomials, called \emph{multi-polynomials}. Each such tuple provides simultaneous bounds on all variables in a subset of possible executions; for example,
with the program in Figure~\ref{fig:intro-example}, one such multi-polynomial is $\tuple{x_1, x_2, x_3, x_2+x_3}$, which means that, starting with the valuation
$\X_1\mapsto x_1,\ \X_2\mapsto x_2,\ \X_3\mapsto x_3, \X_4\mapsto x_4$ the value of $\X_4$ at the end
of some possible execution (specifically, one in which the second and the third loop commands both exit immediately) is tightly (in this case, exactly)
described by $x_2+x_3$; while the other three variable retain their initial values.
 Other multi-polynomials will represent other sets of possible executions, their union covering all executions.
This \emph{disjunctive} form is  important in the context of a compositional analysis. To see why, suppose that we provide, for a command with variables
$\X,\Y$, the bounds $\tuple{x,y}$ and $\tuple{y,x}$. Then we know that the \emph{sum} of their values is always bounded by $x+y$, a result that
would have not been deduced had we given the bound $\max(x,y)$ on each of the variables. The difference may be critical
for the success of analyzing an enclosing or subsequent command.

 \emph{Multivariate} bounds are often of interest, and perhaps require no justification,
 but let us point out that multivariate polynomials are necessary even if we are
ultimately interested in a univariate bound, in terms of some single initial value, say $n$.
 This is, again, due to the analysis being compositional.   When we analyze an internal command that uses variables
 $\X,\Y,\dots$ we do not know in what possible contexts the command will be executed and how the values of these variables will be related to $n$.

Some highlights of our solution are as follows.
\begin{itemize}
\item We reduce the problem of analyzing any core-language program to the problem of analyzing a single loop, whose body is
already processed, and therefore presented as a collection of abstract state-transitions. This is typical of algorithms that analyze a structured imperative
language and do so compositionally.
\item Since we are computing bounds only up to a constant factor, we work with \emph{abstract polynomials}, that have no numeric coefficients.
\item We further introduce \emph{$\tau$-polynomials}, to describe the evolution of values in a loop.
These have an additional
parameter $\tau$ (for ``time''; more precisely, number of iterations).
Introducing $\tau$-polynomials was a key step in the solution.
\item The analysis of a loop is simply a closure computation under two operations: ordinary composition, and \emph{generalization} which is the
operation that predicts the evolution of values by judiciously adding $\tau$'s to \emph{idempotent} abstract transitions.
\end{itemize}

\noindent
The remainder of this paper is structured as follows. In Section 2 we give some definitions and state our main result.
In Sections~\ref{sec:algorithm-prelim}---\ref{sec:closure-algorithm} we present our algorithm.
In Section~\ref{sec:correctness-sdl}, we give the correctness statement for our algorithm, and in Section~\ref{sec:correctness-main} we give the correctness proofs for this.
In Section~\ref{sec-complexity} we consider the computational complexity of our algorithm, and in Section~\ref{sec:extensions} we consider extensions to our algorithm and
open problems. Section~\ref{sec:rw} describes related work, and Section~\ref{sec:conclusion} concludes and discusses ideas for further work.

\section{Preliminaries}%
\label{sec:prelim}

In this section, we give some basic definitions, complete the presentation of our programming language and precisely state the main result.

\subsection{Some notation and terminology}

\paragraph{\bf The language} We remark that in our language syntax
there is no special form for a ``program unit;'' in the text we sometimes use ``program'' for the subject of our analysis,
yet syntactically it is just a command.

\paragraph{\bf Polynomials and multi-polynomials}
 We work throughout this article with
multivariate polynomials in $x_1,\dots,x_n$
that have non-negative integer coefficients and no variables other than $x_1,\dots,x_n$; when we speak of a polynomial we always mean one of this kind.
Note that over the non-negative integers, such polynomials
are monotonically (weakly) increasing in all variables.
Our algorithm sometimes deals with monomials, and the reader may assume that a polynomial is always represented as a non-redundant set of monomials
(i.e., $p(x)=2x$ is never represented as $x+x$ or $2x+0x^2$).

The post-fix substitution operator $[a/b]$ may be applied to any sort of expression containing a variable $b$, to substitute $a$ instead; e.g.,
$(x^2+yx+y)[2z/y] = x^2+2zx+2z$.

When discussing a command, state-transition, or program trace, with a variable $\X_{i}$, $x_i$ will denote, as a rule, the initial value of this variable,
and $x'_i$ its final value. Thus we distinguish the syntactic entity by the typewriter font.

The parameter $n$ always refers to the number of variables in the subject program. The set $[n]$ is
$\{1,\dots,n\}$.
For a set $S$ an $n$-tuple over $S$ is a mapping from $[n]$ to $S$.
The set of these tuples is denoted by $S^n$.
Throughout the paper, various natural liftings of operators to collections of objects are used, e.g., if $S$ is a set of integers then
$S+1$ is the set $\{s+1 \mid s\in S\}$ and $S+S$ is $\{s+t \mid s,t\in S\}$.  We use such lifting with sets as well as with tuples.
If $S$ is ordered, we extend the ordering to $S^n$ by comparing tuples element-wise (this leads to a partial order, in general, e.g., with natural numbers,
$\tuple{1,3}$ and $\tuple{2,2}$ are incomparable).

\begin{defi}
A function of the form $\tuple{ {\vec p}[1], \dots, {\vec p}[n] }$, i.e., an $n$-tuple of polynomials,
is called \emph{a multi-polynomial (MP)}.
We denote by $\ppol$ the set of multi-polynomials, namely ${(\nats[\vec x])}^n$,where the number of variables $n$ is fixed by context.
\end{defi}

\begin{defi}
A \emph{polynomial transition (PT)} is a computation that transforms an initial state $\vec x = \tuple{x_1,\dots,x_n}$ to a new state
${\vec x}' = \tuple{x'_1,\dots,x'_n} = {\vec p}({\vec x})$ where $\vec p \in \ppol$.
\end{defi}

The distinction between a MP and a PT is perhaps a bit philosophical: An MP is a function: a mathematical object that exists independently
of computation. A PT is a computation whose effect is described by such a function. We can say, for example, that the command
$\X_2 \verb/:= X/_2+\X_1$\ performs a PT\@.  Its associated MP is $\tuple{x_1, x_2+x_1}$.

Various operations will be applied to
MPs, mostly obvious---in particular, composition (which corresponds to sequential application of the transitions).
Note that composition of multi-polynomials, ${\vec q} \circ {\vec p}$, is naturally defined since $\vec p$ supplies $n$ values for the $n$
variables of $\vec q$, 
and we have:  $({\vec p}\circ{\vec q})[i]= {\vec p}[i] \circ {\vec q}$.
We define $\idppol$ to be the identity transformation, ${\vec x}' = {\vec x}$ (in MP notation: ${\vec p}[i] = x_i$ for $i=1,\dots,n$).

\subsection{Formal semantics of the core language}%
\label{sec:semantics}

The semantics
associates with every command
\verb+C+ over variables $\X_1,\dots,\X_n$
a relation $\sempar{C} \subseteq \nats^n \times \nats^n$.
 In the expression $\vec x \sempar{C} \vec y$,
vector $\vec x$ (respectively $\vec y$) is the state before (after) the
execution of \verb+C+.

The semantics of {\tt skip} is the identity. The semantics of an assignment
$\X_i\verb+:=E+$  associates to each state $\vec x$ a new state $\vec y$ obtained by replacing
the component $x_i$ by the value of the expression \verb+E+ when evaluated over state $\vec x$. This is defined in the
natural way (details omitted), and is denoted by $\sempar{E}{\vec x}$.
Composite commands are described by the straightforward  equations:
\begin{eqnarray*}
\lsem{\tt C}_1 {\tt ;C}_2\rsem &=&
  \sempar{C$_2$}\circ\sempar{C$_1$} \\
\lsem\verb+choose C+_1\verb+ or C+_2\rsem &=&
  \sempar{C$_1$}\cup\sempar{C$_2$} \\
\lsem\verb+loop E {C}+\rsem &=&
\{ (\vec{x},\vec{y}) \mid \exists i \le \sempar{E}{\vec x} :
\vec{x} \sempar{C}^i \vec{y} \}
\end{eqnarray*}
where $\sempar{C}^i$ represents $\sempar{C}\circ\cdots\circ\sempar{C}$
(with $i$ occurrences of $\sempar{C}$); and $\sempar{C}^0 = \idppol$.

\paragraph{\bf Remarks} The following two changes may enhance the applicability
of the core language for simulating  concrete programs; we include
them as ``options'' because they do not affect the validity of our proofs.
\begin{enumerate}
\item
The semantics of an assignment operation may be non-deterministic:
\verb+X := E+ assigns to {\tt X} some non-negative value \emph{bounded} by {\tt E}.
This is useful to abstract expressions which are not in the core language,
and also to use the results of size analysis of subprograms. Such an analysis
may determine invariants such as ``the value of \verb+f(X,Y)+ is at most the sum
of {\tt X} and {\tt Y}.''
The reason that this change does not affect our results is that we work
exclusively with monotone increasing functions.
This assignment semantics is called \emph{a lossy assignment} in~\cite{BK11}, due to an analogy
with lossy counter machines~\cite{Mayr03}.

\item\label{rmrk-ints}
The domain of the integer variables may be extended to $\ints$. In this case
the bounds that we seek are on the absolute value of the output in terms of
absolute values of the inputs. This change does not affect our conclusions
because of the facts $|xy| = |x|\cdot |y|$ and $|x+y| \le |x|+|y|$.
The semantics of the loop command may be defined either as doing nothing if the loop bound is not positive, or using the absolute value as a bound.

%
\end{enumerate}

\subsection{Detailed statement of the main result}%
\label{sec:goals}

The \emph{polynomial-bound analysis problem} is to find,
for any given command, which output variables
are bounded by a polynomial in the input values (which are simply the values of all variables
upon commencement of the program), and to bound these output values tightly (up to constant factors).
The problem of \emph{identifying} the polynomially-bounded variables is completely solved by~\cite{BJK08}.
We rely on that algorithm, which is polynomial-time, for doing this classification (as further explained in Section~\ref{sec:basic}).
Our main result is thus stated as follows.

 \begin{thm}%
 \label{thm:mainresult}
 There is an algorithm which, for a command \verb+C+, over variables $\X_1$ through $\X_n$,
 outputs a set $\setB$ of multi-polynomials, such that the following hold, where $\PB$ is the set of indices $i$ of variables
 $\X_i$ whose final value under $\sempar{C}$  is polynomially bounded.
\begin{enumerate}
\item (Bounding) There is a constant $c_{\vec p}$ associated with each ${\vec p}\in \setB$, such that
\begin{equation*}
 \myforall{\vec x, \vec y} \vec x \sempar{C} \vec y \Longrightarrow \exists {\vec p}\in\setB \,.\forall i\in\PB \,.\, y_i \le c_{\vec p}{\vec p}[i](\vec x)
 \end{equation*}
\item (Tightness) For every ${\vec p}\in \setB$ there are constants $d_{\vec p}>0$, ${\vec x}_0$ such that
for all ${\vec x}\ge {\vec x}_0$ there is a $\vec y$ such that
\[
  \vec x \sempar{C} \vec y \text{ and } \forall i\in\PB \,.\, y_i \ge d_{\vec p}{\vec p}[i](\vec x)
\]
\end{enumerate}
\end{thm}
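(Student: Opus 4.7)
The plan is to proceed by structural induction on the command \verb+C+, building up the set $\setB$ compositionally. The base cases are trivial: for \verb+skip+ take $\setB=\{\idppol\}$; for an assignment $\X_i\verb+:=E+$ take the singleton MP that acts as identity on all variables except $\X_i$, which receives the polynomial denoted by \verb+E+. For \verb+choose+ $\C_1$ \verb+or+ $\C_2$ simply take the union of the sets obtained inductively for $\C_1$ and $\C_2$: bounding is preserved because any trace picks one branch, and tightness is preserved because each witness of each branch is still a witness here. For $\C_1\semi\C_2$ take $\{{\vec q}\circ{\vec p}\mid{\vec p}\in\setB_1,\ {\vec q}\in\setB_2\}$; bounding follows from monotonicity of polynomials together with the inductive bounding claims (with multiplicatively accumulated constants), and tightness follows because composing two achievable witnesses yields an achievable witness for the sequence. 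The classification $\PB$ of polynomially-bounded variables is imported from the decidable procedure of~\cite{BJK08}; entries of MPs corresponding to indices outside $\PB$ can be left unspecified.

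The main obstacle is the \verb+loop+ case. Given an inductively obtained set $\setB_0$ describing the body, the set of transitions generated by $k$ iterations is contained in all $k$-fold compositions of members of $\setB_0$, which is in general infinite and of unbounded degree. Plain closure under composition cannot terminate as soon as some body transition exhibits polynomial growth, so the plan is to enrich the setting to \emph{$\tau$-polynomials}, where a fresh symbolic parameter $\tau$ stands for the number of iterations, and to perform the closure also under a \emph{generalization} operation that, given an \emph{idempotent} abstract transition $\vec p$ (one stable under composition with itself), inserts factors of $\tau$ in precisely the positions where iterating $\vec p$ causes values to grow. After the closure terminates, substituting $\tau$ by the evaluated loop expression \verb+E+ yields ordinary MPs. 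Working with \emph{abstract} polynomials (without numeric coefficients) and over the finite set of idempotents of a suitable finite monoid of abstractions should ensure that the closure generates only finitely many MPs.

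Soundness of the loop analysis will be proved by induction on the iteration count, showing that the state reached after any $k\le\sempar{E}{\vec x}$ iterations is dominated, up to a constant depending on the chosen MP, by some element of the closure evaluated at $\tau=k$. The crux is correctness of generalization: for an idempotent $\vec p$, the effect of iterating any transition abstracted by $\vec p$ for $k$ steps must be dominated by the generalized $\tau$-polynomial with $\tau=k$. Tightness (achievability) will be established by exhibiting, for each MP $\vec q$ in the closure, an explicit execution pattern --- a concrete schedule of branch choices and iteration counts --- that realizes $\vec q$ up to a constant factor on all inputs $\vec x\ge\vec x_0$.

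The hard part will be the tightness argument for generalized MPs, because we must show that whenever the closure inserts a $\tau$ via idempotence, there is a genuine iteration schedule that exercises exactly that growth. This is where I plan to invoke the \emph{forest factorization theorem}: viewing any long sequence of body transitions as a word mapped homomorphically into the finite monoid of abstract transitions, the theorem furnishes a factorization of bounded depth in which each level is controlled by an idempotent. Reading this factorization bottom-up matches the way generalization builds up $\tau$-polynomials from idempotents, so it both certifies that every generalized MP corresponds to a realizable execution and shows, conversely, that every long execution is dominated by some MP in the closure --- since the factorization depth is bounded by a constant $N$ depending only on the monoid, the constants $c_{\vec p}$ and $d_{\vec p}$ in the theorem statement remain finite.
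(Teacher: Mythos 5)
Your overall architecture matches the paper's: compositional reduction to a single disjunctive loop, $\tau$-polynomials with coefficient abstraction, closure plus generalization restricted to idempotents, explicit execution patterns for tightness, and the forest factorization theorem. However, you have assigned the factorization theorem to the wrong half of the argument, and this hides a genuine gap. Tightness does \emph{not} need factorization forests: in the paper each generalized MP ${\vec p}^\tau$ is realized directly by a hand-built pattern of the form ${(\pi(1))}^*\pi^n$ (a star-free-nesting ``pattern''), and the realizability lemmas are purely constructive. The factorization theorem cannot ``certify that every generalized MP corresponds to a realizable execution'' --- it decomposes traces that are given to you; it does not manufacture a trace witnessing a prescribed polynomial. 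Conversely, the place where the theorem \emph{is} indispensable is the bounding direction, which your proposal dispatches with ``induction on the iteration count.'' That induction does not go through as stated: the iterations of a disjunctive loop are arbitrary words over the body's transitions, and even after fixing an idempotent abstract type, the ordinary composition of unboundedly many concrete MPs of that type has coefficients that grow without bound (e.g.\ iterating $\tuple{x_1+x_2+x_3,\ x_3,\ x_3}$, which is abstractly idempotent). Bounded factorization depth alone does not save you, because a single idempotent node of the forest has unboundedly many children. The paper closes exactly this gap with the machinery of iterative MPs, the self-dependent cut, the $\tau$-absorbing composition $\taucomp$, and the Finite Closure Lemma, which together show that a \emph{finite} set of concrete $\tau$-MPs (with fixed constants) dominates all such products. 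Your proposal contains no substitute for this step.

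A second, related gap: you assert that working ``over the finite set of idempotents of a suitable finite monoid of abstractions'' ensures the closure is finite. But the monoid of abstract $\tau$-MPs is infinite (degrees can grow), so neither termination of the algorithm nor the applicability of the factorization theorem is automatic. In the paper both are consequences of the \emph{lower-bound} proof: every computed MP is realizable, so an infinite closure would force unbounded exponents on realizable transitions, contradicting polynomial boundedness (Corollary~\ref{cor:termination}); only then is the image monoid known to be finite and Simon's theorem applicable. So the logical order is realizability first, factorization second --- roughly the reverse of the dependency you sketch. Fixing your write-up requires (a) moving the factorization theorem to the upper-bound proof, (b) supplying the coefficient-control argument for long idempotent blocks, and (c) deriving finiteness of the closure from realizability rather than assuming it.
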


\noindent
A remark regarding the decision to provide bounds only up to constant factors: based on the proof,
the algorithm could be extended to explicitly generate the coefficients $c_{\vec p}$ and $d_{\vec p}$ in the above statement; but ignoring
constant factors simplifies the algorithm significantly.  In essence, all the considerations regarding constants are removed from the algorithm to the proofs.
 Moreover, methodically, our goal was to show that
in the sense that we consider (i.e., ``big Theta'' bounds), the problem could be solved \emph{completely}.
We do not attempt to completely solve the problem of computing bounds with the utmost precise constants.
We assume that readers with a Computer Science background are aware of the ubiquity of ``big Theta'' bounds in the analysis of algorithms and dispense
with a discussion of justifications for using them.

\paragraph{\bf A comment on counters}

As discussed in the introduction, various \emph{countable resource problems}, such as bounding the number of program
steps, may be reduced to the bound analysis problem.
We'd like to point out that the reduction, by introducing a variable to count the events of interest, may be carried out  in our core
language, despite the lack of explicit constants. It suffices to reserve a dedicated variable, say \pgt{U}, to represent a unit of the resource.
Then, we advance the counter by adding \pgt{U}, and its worst-case bound will equal the worst-case bound of the resource, times \pgt{U}.
As an example, Figure~\ref{fig:instrumented} shows how the program from Figure~\ref{fig:intro-example}
could be instrumented for computing its time complexity. We use a fresh variable $\X_5$ as a ``unit'' and another one $\X_6$ to count the steps.
The algorithm will deduce that the final value of $\X_6$, $x_6'$, is tightly bounded by $(x_2+x_3+x_1^3 + x_1 x_4)x_5$.  Erasing the $x_5$ factor
we obtain a tight bound on the unit-cost time.
\begin{figure}[htb]
\begin{Verbatim}[codes={\catcode`$=3\catcode`_=8}]
loop X$_1$
   X$_6$ := X$_6$ + X$_5$;
   loop X$_2$ + X$_3$ { X$_6$ := X$_6$ + X$_5$;
                 choose { X$_3$:= X$_1$; X$_2$:= X$_4$ } or { X$_3$:= X$_4$; X$_2$:= X$_1$ }
                };
   X$_4$:= X$_2$ + X$_3$
};
loop X$_4$ { X$_6$ := X$_6$ + X$_5$;
         choose  { X$_3$:= X$_1$ + X$_2$ + X$_3$ } or { X$_3$:= X$_2$;  X$_2$:= X$_1$ }
        }
\end{Verbatim}
\caption{An instrumented core-language program.}%
\label{fig:instrumented}
\end{figure}

\section{Analysis Algorithm: First Concepts}%
\label{sec:algorithm-prelim}
\label{sec:tau-poly}  

The following sections describe our analysis algorithm.  Naturally, the most intricate part of the analysis concerns loops.
In fact we break the description into stages: first we reduce the problem of analyzing any program to that of analyzing
\emph{simple disjunctive loops}, defined next. Then, we approach the analysis of such loops, which is the main effort in this work.

\begin{defi}%
\label{def:sdl}
A \emph{simple disjunctive loop (SDL)} is  a finite set $\setS$ of polynomial transitions.
\end{defi}
The loop is ``disjunctive'' because its meaning is that in every iteration, any of the given transitions may be chosen.
A SDL does not specify the number of iterations; our analysis generates polynomials that depend on the number of iterations as well as the initial state. For this purpose, we now
introduce $\tau$-polynomials where $\tau$ represents the number of iterations.
\begin{defi}
$\tau$-polynomials are polynomials in $x_1,\dots,x_n$ and $\tau$.
\end{defi}
If $p$ is a $\tau$-polynomial, then $p(v_1,\dots,v_n)$ is the result of substituting each $v_i$ for the respective $x_i$;
and we also write $p(v_1,\dots,v_n,t)$ for the result of substituting $t$ for $\tau$ as well.
The set of $\tau$-polynomials in $n$ variables ($n$ known from context) is denoted $\tpol$.

Multi-polynomials and polynomial transitions are formed from $\tau$-polynomials just as previously defined and are used to represent the
effect of a variable number of iterations.
For example, the $\tau$-polynomial transition $\tuple{x'_1,x'_2} = \tuple{x_1,\ x_2+\tau x_1}$ represents the effect of repeating ($\tau$ times) the assignment
$\X_2 \verb/:= X/_2+\X_1$.   The effect of iterating the composite command
$\X_2 \verb/:= X/_2+\X_1\verb/; X/_3\verb/:= X/_3+\X_2$ is described by  ${\vec x}' = \tuple{x_1,\ x_2+\tau x_1, \ x_3 + \tau x_2 + \tau^2 x_1}$
(here we already have an upper bound which is not reached precisely, but is correct up to a constant factor).

We denote the set of $\tau$-polynomial transitions by $\tppol$.
Note that $\tau$ has a special status: as it does not represent a program variable, there is no component in the multi-polynomial for giving its value.
We should note that composition ${\vec q}\circ {\vec p}$ over $\tppol$ is performed by substituting ${\vec p}[i]$ for each occurrence of $x_i$ in $\vec q$.
Occurrences of $\tau$ are unaffected (since $\tau$ is not part of the state).

\section{Reduction to Simple Disjunctive Loops}%
\label{sec:basic}

We show how to reduce the problem of analysing core-language programs to the analysis of simple disjunctive loops.

\subsection{Symbolic evaluation of straight-line code}

Straight-line code consists of atomic commands---namely assignments (or \texttt{skip}, equivalent to
$\X_1\pass \X_1$), composed sequentially.  It is obvious that symbolic evaluation of such code leads to polynomial
transitions.

\begin{exa}
Consider the following command:
\begin{center}
$\X_2 \verb/:= X/_1\verb/; X/_4 \verb/:= X/_2\verb/ + X/_3\verb/; X/_1 \verb/:= X/_2\verb/ * X/_3$
\end{center}
This is precisely represented by the transition $\tuple{x_1,x_2,x_3,x_4}' = \tuple{ x_1x_3,\ x_1,\ x_3,\ x_1+x_3}$.
\end{exa}

\subsection{Evaluation of non-deterministic choice}

Evaluation of the command \[\texttt{choose}\; \C_1 \; \texttt{or} \; \C_2\] yields a set of possible
outcomes.
Hence, the result of analyzing a command will be a \emph{set} of multi-polynomial transitions. We express this in the common notation of
abstract semantics:
\[
{\sempar{C}}^{S} \in \wp(\ppol) \,.
\]
For uniformity, we consider $\sempar{C}^{S}$ for an atomic command to be a singleton in $\wp(\ppol)$.
Composition is naturally extended to sets, and the semantics of a choice command is now simply set union, so we have:
\begin{eqnarray*}
\lsem{\tt C}_1 {\tt ;C}_2\rsem^{S} &=&
  \sempar{C$_2$}^{S}\circ\sempar{C$_1$}^{S} \\
\lsem\verb+choose C+_1\verb+ or C+_2\rsem^{S} &=&
  \sempar{C$_1$}^{S}\cup\sempar{C$_2$}^{S}
\end{eqnarray*}

\begin{exa}
Consider the following command:
\begin{center}
$\X_2 \texttt{:= X}_1\texttt{; choose \{ X}_4 \texttt{:= X}_2 \texttt{ + X}_3 \texttt{ \} or \{ X}_1 \texttt{:= X}_2 \texttt{* X}_3 \texttt{ \}}$
\end{center}
This is represented by the set $\{ \tuple{ x_1, x_1, x_3, x_1+x_3}, \tuple{ x_1 x_3, x_1, x_3, x_4 } \}$.
\end{exa}

\subsection{Handling loops}%
\label{sec:solve}

The above shows that any loop-free command in our language can be \emph{precisely} represented by a finite set of MPs,
that is, its semantics $\sempar{C}$ equals the union of a finite number of PTs, which we can compute by symbolic evaluation.
Consequently, the problem of analyzing any loop command is reduced to the analysis of simple disjunctive loops, by first analyzing the loop body.
The goal of the loop analysis is to
produce a set of multi-polynomials that provide
tight worst-case bounds on the results of any number of iterations of the loop body---``tight'' in the sense of Theorem~\ref{thm:mainresult}
(see also Example~\ref{ex:solve} below).
  We will refer to this, concisely, as \emph{solving} the loop.

  Suppose that we have an algorithm \procSDL that takes a simple disjunctive loop and computes tight bounds for its results, with explicit dependence
  on the number of iterations using the parameter $\tau$.
We use it to complete the analysis of any program by the following definition:
\begin{eqnarray*}
\lsem\verb+loop E {C}+\rsem^{S} &=&
(\procSDL(\sempar{C}^{S}))[\sempar{E}/\tau] \ .
\end{eqnarray*}

Thus, the whole solution is constructed as an ordinary abstract interpretation, following the semantics of the language, except for what happens
inside \procSDL, which is the subtle part of the algorithm; this procedure accepts a set of multi-polynomials, describing the body of the loop,
and generates a set of $\tau$-MPs that finitely describes the result of all execution sequences of the loop, thanks to a \emph{generalization}
procedure that captures the dependence of results on the iteration count by means of the parameter $\tau$.  After completing this procedure, we replace
$\tau$ with the expression $\sempar{E}$, which gives the actual loop bound as a polynomial in $x_1,\dots,x_n$.

\begin{exa}%
\label{ex:solve}
Consider the following command:
\begin{center}
$\X_4 \texttt{:= X}_1\texttt{; loop X}_4 \texttt{ \{ X}_2 \texttt{:= X}_1\texttt{ + X}_2\texttt{; X}_3 \texttt{:= X}_2 \texttt{\}}$
\end{center}
Solving just the loop yields the set $\mathcal L = \{ \tuple{ x_1, x_2, x_3, x_4}$, $\tuple{ x_1, x_2+\tau x_1, x_2+\tau x_1, x_4 } \}$
(the first MP accounts for zero iterations, the second covers any positive number of iterations).
We can now compute the effect of the given command as:
\begin{align*}
 {\mathcal L}[x_4/\tau] \circ \sempar{X$_4$  := X$_1$}^{S} &=
 {\mathcal L}[x_4/\tau] \circ \{ \tuple{ x_1, x_2, x_3, x_1} \}  \\ &=
   \{ \tuple{ x_1, x_2, x_3, x_1},  \tuple{ x_1, x_2+x_1^2, x_2+x_1^2, x_1 } \}.
\end{align*}
We can now demonstrate that the analysis of a loop may involve approximation.  Suppose that we slightly modify the code as follows:
\begin{center}
$\X_4 \texttt{:= X}_1\texttt{; loop X}_4 \texttt{ \{ X}_3 \texttt{:= X}_2\texttt{; X}_2 \texttt{:= X}_1\texttt{ + X}_2 \texttt{\}}$
\end{center}

Our algorithm will produce the same MPs, although a careful examination shows that the precise result of the loop (when it executes $x_1>0$ iterations)
is now given by
$\tuple{ x_1, x_2+x_1^2, x_2+x_1(x_1-1), x_1 }$.  This expression is not a MP (since we only consider polynomials with non-negative coefficients),
but is tightly approximated by the MP we produced, as $x_1(x_1-1) = \Theta(x_1^2)$.  The precise statement of what it means to tightly approximate
all possible results of a program by a finite set of MPs is the essence of Theorem~\ref{thm:mainresult}.
\end{exa}

The next section describes the procedure \procSDL, and operates under the assumption that all variables are polynomially bounded in the loop.
However, a loop can generate exponential growth. To cover this eventuality, we first apply the algorithm of~\cite{BJK08} that identifies which
variables are polynomially bounded. If some $\X_i$ is \emph{not} polynomially bounded we replace the $i$th component of all the loop transitions
with $x_n$ (we assume $x_n$ to be a dedicated, unmodified variable). Clearly, after this change, all variables are polynomially bounded;
moreover, variables which are genuinely polynomial are unaffected, because they cannot depend on a super-polynomial quantity (given the restricted
arithmetics in our language).
In reporting the results of the algorithm, we should display ``\texttt{super-polynomial}'' instead of any expression that includes $x_n$.

\section{Simple Disjunctive Loop Analysis Algorithm}%
\label{sec:closure-algorithm}

Intuitively, evaluating  $\verb/loop E {C}/$ abstractly
consists of simulating any finite number of iterations, i.e., computing
\begin{equation}
\label{eq:closure-seq}
Q_i  = \{\idppol\} \cup P \cup (P\circ P ) \cup \dots \cup {P}^{(i)}
\end{equation}
where $P =  \sempar{C}^{S} \in \wp(\ppol)$.  The question now is whether the sequence~\eqref{eq:closure-seq} reaches a fixed point.
In fact, it often doesn't. However, it is quite easy to see that in the \emph{multiplicative
fragment} of the language, that is, where the addition operator is not used, such non-convergence is associated with exponential growth. Indeed, since there is no
addition, all our polynomials are monomials with a leading coefficient of 1 (\emph{monic monomials})---this is easy
to verify. It follows that if the sequence~\eqref{eq:closure-seq} does not converge, higher and higher exponents must
appear, which indicates that some variable cannot be bounded polynomially. Taking the contrapositive, we conclude
that if all variables are known to be polynomially bounded the sequence will converge.  Thus we have the following easy (and not so satisfying) result:

\begin{obs}
For a SDL $P$ that does not use addition, the sequence $Q_i$ as in~\eqref{eq:closure-seq} reaches a fixed point, and the fixed point provides
tight bounds for all the polynomially-bounded variables.
\end{obs}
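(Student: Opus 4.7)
The plan is to encode each polynomial transition in the multiplicative fragment as an $n\times n$ matrix of natural-number exponents, and to reduce convergence of $Q_i$ to boundedness of the entries of these matrices. This boundedness is in turn shown to be equivalent to polynomial boundedness of every variable.

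First, I would verify by structural induction that every element of $P^{(k)}$ is a tuple of monic monomials. The base case is $P$ itself: in the absence of addition, any expression built from variables and $*$ evaluates to a monic monomial. For the inductive step, if $\vec{p}[i]=\prod_j x_j^{a_{ij}}$ and $\vec{q}[i]=\prod_j x_j^{b_{ij}}$ are monic monomials, then
\[
(\vec{q}\circ\vec{p})[i] \;=\; \vec{q}[i]\circ\vec{p} \;=\; \prod_j \vec{p}[j]^{b_{ij}} \;=\; \prod_k x_k^{\sum_j b_{ij}a_{jk}},
\]
which is again a monic monomial. Thus each $\vec{p}\in P^{(k)}$ is represented by its exponent matrix $A(\vec{p})\in \nats^{n\times n}$, and composition corresponds to multiplication of these matrices over the natural-number semiring.

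Next I would establish the key dichotomy. Since matrices in $\nats^{n\times n}$ whose entries are all bounded by some fixed $d$ form a finite set, if $Q_i$ does not stabilize then for each bound $d$ some composition in $\bigcup_k P^{(k)}$ produces an entry $a_{ij}>d$. Picking an input $\vec{x}$ with $x_j=2$ and all other coordinates equal to $1$, the $i$th component of this MP evaluates to at least $2^{d}$; since $d$ is arbitrary, this exceeds every polynomial in $\vec{x}$, contradicting polynomial boundedness of $\X_i$. Contrapositively, when all variables are polynomially bounded, the exponent matrices of $P^{(k)}$ range over a finite subset of $\nats^{n\times n}$, so the sequence $Q_i$ stabilizes at some step and yields a fixed point.

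For tightness, I would observe that every MP in the fixed point arises as the symbolic evaluation of an actual iterated composition $\vec{p}_{i_k}\circ\cdots\circ\vec{p}_{i_1}$ of transitions from $P$, hence is \emph{exactly} realized by the corresponding execution of the loop (not just approximated). Therefore for each such $\vec{p}$ the identity map $\vec{x}\mapsto\vec{x}$ witnesses tightness with $d_{\vec p}=1$, while the bounding property holds with $c_{\vec p}=1$ because these MPs cover every finite iteration. The main subtlety I anticipate is being careful with the role of the empty product (the identity transition $\idppol$), which handles the zero-iteration case in the loop semantics and ensures that $\{\idppol\}$ must be included in $Q_i$ from the start; once this is kept in mind, the argument is essentially a finiteness pigeonhole on exponent matrices.
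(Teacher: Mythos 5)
Your proposal is correct and follows essentially the same route as the paper, whose own justification is the informal paragraph preceding the observation: without addition all transitions are monic monomials, non-convergence forces unboundedly large exponents and hence super-polynomial growth, and the contrapositive plus the exactness of symbolic evaluation gives convergence and tightness. Your exponent-matrix encoding is just a clean formalization of that same pigeonhole argument.
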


\begin{exa}
The following loop is in the multiplicative fragment, and has no exponential behaviour:
\[ \verb/loop X/_3\verb/ { X/_1 \verb/:= X/_2\verb/ * X/_2 \verb/;  X/_2 \verb/:= X/_3 \verb/ }/ \]
The effect of an iteration is given by the multi-polynomial ${\vec p} = \tuple{x_2^2,\ x_3,\ x_3}$.
Let $P=\{{\vec p}\}$.
The accumulating effect of the loop is given by
the union of:
\begin{align*}
\{&\idppol\} \\
P &= \{ \tuple{x_2^2,\ x_3,\ x_3} \} \\
P\circ P &= \{ \tuple{ x_3^2,\ x_3,\ x_3} \} \\
P^{(3)} &= P^{(2)} \\ &\hspace{-0.5em} \dots
\end{align*}
That is, the results are completely characterized by the three MPs above.
\end{exa}

\noindent
When we have addition, we find that knowing that all variables are polynomially bounded does not imply
convergence of the sequence~\eqref{eq:closure-seq}.  An example is
$\texttt{loop X}_3\texttt{ \{ X}_1 \texttt{:= X}_1\texttt{ + X}_2 \texttt{ \}}$
yielding the infinite sequence of MPs $\tuple{x_1,x_2,x_3}$, $\tuple{x_1+x_2,x_2,x_3}$, $\tuple{x_1+2x_2,x_2,x_3}$, \dots~%
Our solution employs two means. One is the introduction of $\tau$-polynomials, already presented.  The other
is a kind of \emph{abstraction}---intuitively, ignoring the concrete values of (non-zero) coefficients.
Let us first define this abstraction:

\begin{defi}\label{def:abspol}
$\abspol$, the set of abstract polynomials, consists of formal sums of distinct monomials over $x_1,\dots,x_n$,
where the coefficient of every monomial included is $1$.
We extend the definition to an abstraction of $\tau$-polynomials,
denoted $\tabspol$.
\end{defi}


\noindent
The meaning of abstract polynomials is given by the following rules:
\begin{enumerate}
\item\label{def:alpha}
The abstraction of a polynomial $p$, $\alpha(p)$, is obtained by modifying all (non-zero) coefficients to 1.
\item Addition and multiplication in $\tabspol$ are defined in a natural way so that $\alpha(p)+\alpha(q) = \alpha(p+q)$
and $\alpha(p)\cdot \alpha(q) = \alpha(p\cdot q)$ (to carry these operations out, you just go through the motions of adding or
multiplying ordinary polynomials, ignoring the coefficient values).
\item\label{def:gamma}
The \emph{canonical concretization} of an abstract polynomial, $\gamma({\vec p})$  is obtained by simply regarding it
as an ordinary polynomial.
\item These definitions extend to tuples of (abstract) polynomials in the natural way.
\item The set of abstract multi-polynomials $\absppol$ and their extension with $\tau$ ($\tabsppol$) are defined
as $n$-tuples over $\abspol$ (respectively, $\tabspol$). We use AMP as an abbreviation for abstract multi-polynomial.
\item Composition ${\vec p}\acirc {\vec q}$, for ${\vec p}, {\vec q} \in \absppol$ (or $\tabsppol$)
 is defined
as $\alpha(\gamma({\vec p})\circ \gamma({\vec q}))$; it is easy to see that one  can perform the calculation without the detour through polynomials
with coefficients.
The different operator symbol (``$\acirc$'' versus ``$\circ$'') helps in disambiguating expressions.
\end{enumerate}
An abstract polynomial can be reduced by deleting dominated monomials (e.g.,  $x^2+x \to x^2$); which clearly
preserves its meaning when viewed in ``big Theta'' terms. We expect that the combinatorial explosion in the algorithm can be attenuated
by reducing every abstract polynomial, but in this paper, we ignore this optimization to alleviate the description and proof of the algorithm.
Another useful optimization (which we also ignore in our analysis) is to delete dominated \emph{multi-polynomials} in a set of MPs.

\paragraph{\bf Analysing a SDL} To analyse a SDL specified by a set of MPs $\setS$, we start by computing $\alpha(\setS)$. The rest of the algorithm
computes within $\tabsppol$.
We  define two operations that are combined in the analysis of loops.
The first, which we call \emph{closure}, is simply the fixed point of accumulated iterations as in the multiplicative case. It is introduced by the following
two definitions.

\begin{defi}[iterated composition]%
\label{def:itcomp}
Let $\vec{t}$ be any abstract $\tau$-MP\@.  We define ${\vec t}^{\acirc (k)}$, for $k\ge 0$, by:
\begin{align*}
{\vec t}^{\acirc (0)} & = \idppol \\
{\vec t}^{\acirc (k+1)} & =  {\vec t}\acirc {\vec t}^{\acirc (k)} .
\end{align*}
For a set $\setT$ of  abstract $\tau$-MPs, we define, for $k\ge 0$:
\begin{align*}
\setT^{\acirc (0)} & = \{\idppol\} \\
\setT^{\acirc (k+1)} & =  \setT^{\acirc (k)}  \cup \    \bigcup_{ \vec q \in \setT,\ \vec p \in \setT^{\acirc (k)}} {\vec q}\acirc {\vec p} \, .
\end{align*}
\end{defi}
\noindent
Note that ${\vec t}^{\acirc (k)} = \alpha({\gamma(\vec{t})}^{(k)})$, where ${\vec p}^{(k)}$ is defined using ordinary composition.

\begin{exa}
Let $\vec p = \tuple{x_1, x_1+x_2}$; clearly its $k$th iterate (for $k\ge 1$) is ${\vec p}^{(k)} = \tuple{x_1, x_1+kx_2}$. However, with abstract polynomials,
we get ${\alpha({\vec p})}^{\acirc (k)} = \tuple{x_1, x_1+x_2}$; the growth in the coefficient is abstracted away.
Similarly, we have ${\{\alpha(\vec p)\}}^{\acirc (k)} = \{\idppol, \alpha({\vec p})\}$ for all $k\ge 1$.
\end{exa}

\begin{defi}[abstract closure]
 For finite
$P\subset \tabsppol$, we define:
\[
\closure{P} = \bigcup_{i=0}^{\infty} P^{\acirc (i)} \,.
\]
\end{defi}
In the correctness proof, we argue that when all variables are polynomially
bounded in a loop $\setS$, the closure of $\alpha(\setS)$ can be computed in finite time; equivalently, it equals
$\bigcup_{i=0}^{k} {(\alpha({\setS}))}^{\acirc (i)}$ for some $k$.
The argument for finiteness is essentially the same as in the multiplicative case.


The second operation is called \emph{generalization} and its role is to capture the behaviour of accumulator
variables, meaning variables that grow by accumulating increments in the loop, and make explicit the dependence on the
number of iterations.   The identification of which additive terms in a MP should be considered as increments that accumulate
is at the heart of our problem.

\begin{exa}%
\label{ex:motivate-tau}
Consider the following loop:
\begin{center}
$\texttt{loop \ldots\ \{ X}_1\texttt{:= X}_1\texttt{ + X}_3\texttt{; X}_2\texttt{:= X}_2\texttt{ + X}_3\texttt{ + X}_4\texttt{;  X}_4\texttt{:= X}_3\texttt{\}}$
\end{center}
We have omitted the loop bound in order to emphasize that we are now analyzing the \emph{growth of values as the loop progresses} rather
than its final result. The generalization construction below introduces $\tau$'s into the AMPs, obtained from the loop body, in order to express
this growth in terms of the number of iterations. Informally, consider the assignment to $\X_1$: it adds $x_3$ (the initial value of variable $\X_3$),
and does so on every iteration: so we have an arithmetic series with an increment of $x_3$. After $\tau$ iterations, $\X_1$  will have grown by $\tau x_3$.
The assignment to $\X_2$ similarly adds the values of two other variables, $\X_3$ and $\X_4$, but as $\X_4$ is rewritten with $x_3$ already at the end of the first
iteration, subsequent iterations add $2x_3$ (not $x_3+x_4$).  This will be expressed by introducing the term $\tau x_3$ (the coefficient 2 is abstracted away).
We return to this example, and the issue it illustrates, further below.
\end{exa}

The definition of ${\vec p}^\tau$ will is greatly simplified by concentrating on idempotent AMPs, defined next. We will later give
an example that shows that our definition does not work for AMPs which are not idempotent; while the fact that a complete solution can
be obtained by generalizing only idempotent elements comes out of a Ramsey-like property of finite monoids~\cite{Simon:TCS:90}, as will be
seen in Section~\ref{sec-proof-ub}.

\begin{defi}
$\vec p \in \tabsppol$ is called \emph{idempotent} if $\vec p \acirc \vec p = \vec p$.
\end{defi}
\noindent
Note that this is composition in the abstract domain. So, for instance, $\tuple{x_1, x_2}$ is idempotent, and so is
$\tuple{x_1 + x_2, x_2}$, while $\tuple{x_1x_2, x_2}$ and $\tuple{x_1 + x_2, x_1}$ are not.

\begin{defi}
For $\vec p$ an (abstract) multi-polynomial,
we say that $x_i$ is \emph{self-dependent} in $\vec p$ if $\vec p[i]$ depends on $x_i$.
We also say that the entry $\vec p[i]$ is self-dependent; the choice of term depends on context and the meaning should be clear either way.
We call a monomial self-dependent if all the variables appearing in it are.
We denote by $\sd{\vec p}$ the set $\{i \,:\, x_i \text{ is self-dependent in } \vec p\}$.
\end{defi}

We later show that in polynomially-bounded loops, if $x_i$ is self-dependent then $\vec p[i]$ must include the monomial $x_i$. To illustrate the
significance of self-dependent monomials, let us consider an
example where
${\vec p}[1] = x_1 +\dots$ and ${\vec p}[2] = x_2 + \dots$, and further ${\vec p}[3] = x_1 x_2 + \dots$. Then the monomial $x_1 x_2$ reappears in
every iterate ${\vec p}^{\acirc (k)}$. This is because all the variables in it are self-dependent.

\begin{defi}%
\label{def:MPnotation}
We define a notational convention for $\tau$-MPs, specifically for self-dependent entries of the MP\@.  Assuming that $x_i$ appears in ${\vec p}[i]$, we write:
\[ {\vec p}[i] = x_i + \tau {\vec p}[i]' + {\vec p}[i]''+ {\vec p}[i]''' \,,\] 
where ${\vec p}[i]'''$ includes all the non-self-dependent monomials of ${\vec p}[i]$, while the self-dependent monomials 
(other than $x_i$) are grouped into two sums:
$\tau{\vec p}[i]'$, including all monomials with a positive degree of $\tau$, and ${\vec p}[i]''$ which includes all the $\tau$-free monomials.
\end{defi}

\begin{exa}%
\label{ex:MPnotation}
Let $\vec p = \tuple{ x_1 + \tau x_2 + \tau x_3 +  x_3 x_4,\   x_3,\ x_3,\ x_4 }$. Then $\sd{\vec p} = \{1,3,4\}$.
Since $x_1$ is self-dependent, we will apply the above definition to ${\vec p}[1]$, so that
 ${\vec p}[1]' = x_3$, ${\vec p}[1]'' = x_3 x_4$ and ${\vec p}[1]''' = \tau x_2$. 
 Note that a factor of $\tau$ is stripped in ${\vec p}[1]'$.  Had the monomial been $\tau^2 x_3$, we would have ${\vec p}[1]' = \tau x_3$.
 \end{exa}

\begin{defi}[generalization]
Let $\vec p$ be idempotent in $\tabsppol$; define
${\vec p}^\tau$ by: \[
 {\vec p}^\tau [i] = \begin{cases}
   x_i + \tau {\vec p}[i]' + \tau{\vec p}[i]''+ {\vec p}[i]''' &  \text{if $i\in \sd{\vec p}$} \\ 
   {\vec p}[i] & \text{otherwise.}
\end{cases}
 \]

 \end{defi}
 \noindent
 Note that the arithmetic here is abstract (see examples below). Note also that in the term $\tau {\vec p}[i]'$ the $\tau$ is already present in $\vec p$,
while in $\tau {\vec p}[i]''$ it is added to existing monomials.  In this definition, the monomials of ${\vec p}[i]'''$ are treated like those of $\tau {\vec p}[i]'$; 
however, in certain steps of the proofs we will treat them differently, which is why the notation separates them. We next give a couple of examples, just
to illustrate the definition; we later (Page~\pageref{cog}) motivate the definition in the context of the loop analysis.

\begin{exa}%
\label{ex:tauisation:1}
Let $\vec p = \fourtuple{ x_1 + x_3}{ x_2 + x_3 + x_4}{ x_3}{ x_3 }$. Note that it corresponds to the loop body from Example~\ref{ex:motivate-tau}.
Further note that ${\vec p}\acirc {\vec p} = {\vec p}$, i.e., $\vec p$ is idempotent.
We have ${\vec p}^\tau = \fourtuple{ x_1 + \tau x_3}{ x_2 + \tau x_3+x_4}{ x_3}{ x_3 }$.
 \end{exa}

\begin{exa}\label{ex:tauisation:2}
Let $\vec p = \tuple{ x_1 + \tau x_2 + \tau x_3 + \tau x_3 x_4,\   x_3,\ x_3,\ x_4 }$.
 Note that ${\vec p}\acirc {\vec p} = {\vec p}$.
 The self-dependent variables are all but $x_2$.
 We have
${\vec p}^\tau = \tuple{ x_1 + \tau x_2 + \tau x_3 + \tau x_3 x_4,\   x_3,\ x_3,\ x_4 } = {\vec p}$.
 \end{exa}


\noindent
Finally we can present the analysis of the loop command.
\begin{algo} \procSDL{}$(\setS)$ \\
 Input: $\setS$, a polynomially-bounded disjunctive simple loop. \\
 Output: a set of $\tau$-MPs that tightly approximates the effect of $\tau$ iterations of loop $\setS$.
\end{algo}
\begin{enumerate}
\item Set $T = \alpha(\setS)$.

\item Repeat the following steps until $T$ remains fixed:
\begin{enumerate}
\item Closure: Set $T$ to $\closure{T}$.
\item Generalization: For all ${\vec p}\in T$ such that ${\vec p}\acirc {\vec p} = {\vec p}$, add ${\vec p}^\tau$ to $T$.
\end{enumerate}
\end{enumerate}

\begin{exa}
Consider the following loop:
\begin{center}
$\texttt{loop \ldots\ \{ X}_1\texttt{:= X}_1\texttt{ + X}_2\texttt{; X}_2\texttt{:= X}_2\texttt{ + X}_3\texttt{; X}_4\texttt{:= X}_3\texttt{ \}}$
\end{center}
The body of the loop is evaluated symbolically and yields the multi-polynomial:
\[
 \vec p = \fourtuple{ x_1+x_2}{ x_2+x_3}{ x_3}{ x_3 }
 \]
 Now, computing within $\absppol$,
 \begin{align*}
{\alpha({\vec p})}^{\acirc (2)} &=  \alpha({\vec p}) \acirc \alpha({\vec p}) = \fourtuple{ x_1+  x_2+x_3}{ x_2+ x_3}{ x_3}{ x_3 } ; \\
 {\alpha({\vec p})}^{\acirc (3)} &=  {\alpha({\vec p})}^{\acirc (2)}  \,.
 \end{align*}
 Here the closure computation stops.
 Since $\alpha( {\vec p}^{\acirc (2)} )$ is
 idempotent, we compute
\[
{\vec q} = {({\alpha({\vec p}) }^{\acirc (2)})}^\tau  = \fourtuple{ x_1+\tau x_2 + \tau x_3 }{ x_2+\tau x_3 }{ x_3}{ x_3 }
\]
and applying closure again, we obtain
\[\begin{array}{ll}
{\vec q}  \acirc {\alpha({\vec p}) } &= \fourtuple{ x_1+ x_2 + x_3 + \tau x_2 + \tau x_3 }{ x_2+ x_3 +\tau x_3 }{ x_3}{ x_3 } \\
{({\vec q})}^{\acirc (2)} &= \fourtuple{ x_1+ \tau x_2 + \tau x_3  + \tau^2 x_3 }{ x_2 +\tau x_3 }{ x_3}{ x_3 } \\
\end{array}
\]
where the first one simplifies to $\vec q$ by deleting dominated terms, and the second to
$\fourtuple{ x_1+ \tau x_2 + \tau^2 x_3 }{ x_2 +\tau x_3 }{ x_3}{ x_3 }$.
The last element is idempotent but applying generalization does not generate anything new. Thus the algorithm ends.
 The reader may reconsider the source code to verify that we have indeed obtained tight bounds for the loop.
 \end{exa}

 Note that when a program contains nested loops, innermost loops will be processed first and result in a set of abstract polynomials, so we might actually
 analyze any enclosing commands entirely in the abstract domain. This means rephrasing our definition of SDL by defining the input set as a set of AMPs rather
 than concrete polynomials; then the initial step above, where $\setS$ is abstracted, is skipped.  In the forthcoming sections we ignore this issue and
 continue to treat the input as being a set of concrete polynomials; this is for convenience  only. It is easy enough to restate the results such that the input
 is understood as abstract, but known to tightly describe the effect of some concrete piece of code.

\paragraph{\bf Comments on generalization}%
\label{cog}
The precise definition of the generalization operator has been one of the key steps in the development of this
algorithm (operators which are similar---but insufficient for our purpose---appear in  related work~\cite{KasaiAdachi:80,
KN04,NW06,JK08,BJK08,Giesl:toplas2016}). Let us attempt to give some intuition for its definition (not a correctness proof, yet).  Say $x_i$ is
self-dependent. An important point is the
partition of  the terms added to $x_i$ into $\tau {\vec p}[i]'$, ${\vec p}[i]''$ and ${\vec p}[i]'''$.  To see why 
the non-self-dependent monomials ${\vec p}[i]'''$ are not multiplied by $\tau$, consider Example~\ref{ex:tauisation:1}.  Had we been too eager to 
insert $\tau$'s and do this to non-self-dependent monomials as well, we would have got:
\[
  \fourtuple{ x_1 + \tau x_3}{ x_2 + \tau x_3 + \tau x_4}{ x_3}{ x_3 } \,
  \]
 (differing from the correct result in the second component). This would give a sound, but loose (overestimated) upper bound: as already discussed,
 when the transition represented by $\vec p$ is iterated, copies of the initial value $x_4$ do \emph{not} accumulate.  Next, the distinction of
 ${\vec p}[i]''$ from $\tau {\vec p}[i]'$ prevents $\tau$'s from being added where they already appear; this is important because we reapply generalization
 to results of previous steps while analyzing a given loop.
 Finally, an important aspect particular to our algorithm (compared to the above-cited) is the application of the generalization
 operator only to idempotent elements. Again, we can illustrate with an example that applying it too eagerly (to all AMPs in the closure) would be
 incorrect. Consider the following AMP:\@
 \[
 \vec p = \tuple{ x_1 +  x_2 + x_3 +  x_4,\   x_3,\ x_4,\ x_4 }
 \]
Generalization (incorrectly applied to $\vec p$, since it is not idempotent) would yield:
\[
  \fourtuple{ x_1 + x_2 + x_3 + \tau x_4}{ x_3}{ x_4}{ x_4 }
  \]
  This is bad because if $\vec p$ is iterated to accumulate copies of $x_4$ in $\X_1$ (a behaviour represented by the term $\tau x_4$), the value of $\X_2$ will no longer be $x_3$ (but $x_4$). Hence, this AMP
  describes a result that is not realizable (and when we substitute the loop bound for $\tau$, we will get a loose upper bound).

\paragraph{\bf A comment on linear bounds}
When values accumulate in a loop, a non-linear result is obtained. Stated contrapositively: linear results do not involve accumulation.  Indeed,
it is not hard to verify that if for linear polynomials (only) we maintain precise numeric coefficients, our algorithm would still converge.  Therefore, if we wish,
we can modify our algorithm so that \emph{for linear results we have bounds in which the coefficients are explicit and precise}.
This is not a strong result as might seem at first, since it is due to the weakness of our language: linear results can only be built up by a finite number of additions.
Still, if the algorithm is employed as a brick in a larger system, this property might possibly be useful.

\section{Correctness Statement for Simple Disjunctive Loops}%
\label{sec:correctness-sdl}

We claim that our algorithm obtains a description of the worst-case results of the program that is
precise up to constant factors.  That is, we claim that the set of MPs returned provides a ``big O'' upper bound (on all executions) which is also tight;
tightness means that every MP returned is also a lower bound (up to a constant factor) on an infinite sequence of possible executions.
 The main, non-trivial part of the algorithm is of course the solution of a simple disjunctive loop, procedure \procSDL. 
Completing this to show correctness for an arbitrary program is not difficult.

In this section we formulate the \emph{correctness statement for Simple Disjunctive Loops}, in other words, we state the contract
for procedure \procSDL, which, when satisfied, justifies its proper functioning within the whole algorithm.
This formulation forces us to step up the level of detail; specifically, we introduce \emph{traces}, to give a more concrete semantics to loops
(compared to Section~\ref{sec:semantics}).


\begin{defi}%
\label{def:trace}
Let $\setS$ be a set of polynomial transitions.
An \emph{(abstract) trace} over $\setS$ is a finite
sequence $\sigma = {\vec p}_1;\dots; {\vec p}_{|\sigma|}$ of elements of $\setS$.
Thus $|\sigma|$ denotes the \emph{length} of the trace.
The set of all traces is denoted $\tseqs{\mathcal S}$.
We write $\lsem{\sigma}\rsem$ for the composed relation
$ {\vec p}_{|\sigma|} \circ \dots\circ {\vec p}_1$ (for the empty trace, $\eps$, we have $\lsem{\eps}\rsem = \idppol$).
\end{defi}


\noindent
Using the following definition we will be able to give the desired correctness statement for \procSDL. 

\begin{defi}%
\label{def:unimono}
Let $p(\vec x)$ be a (concrete or abstract) $\tau$-polynomial.  We write $\unimono{p}$ for the \emph{linear monomials} of $p$, namely any one of
the form $ax_i$ for a constant coefficient $a$.  
We write $\nunimono{p}$ for the rest. Thus $p = \unimono{p}+\nunimono{p}$.
\end{defi}

\begin{thm}[Solution of disjunctive loop problem]%
\label{thm:sdlp}
Given a polynomially-bounded SDL
represented as a set $\setS$ of MPs, procedure \procSDL finds
a finite set $\setB$ of $\tau$-MPs which \emph{tightly bound} all traces over $\setS$. More precisely,
it guarantees:
\begin{enumerate}
\item (Bounding) There is a constant $c_{\vec p} > 0$ associated with each ${\vec p}\in \setB$, such that
\begin{equation*}
 \myforall{\vec x, \vec y, \sigma} \vec x \lsem\sigma\rsem \vec y \Longrightarrow \exists {\vec p}\in\setB \,.\, {\vec y} \le c_{\vec p}{\vec p}(\vec x, |\sigma|)
 \end{equation*}
\item (Tightness) For every ${\vec p}\in \setB$ there are constants $d_{\vec p}>0$, ${\vec x}_0$
such that for all ${\vec x}\ge {\vec x}_0$ there are a trace $\sigma$ and a state vector $\vec y$ such that
\[
 \vec x \lsem\sigma\rsem \vec y  \, \land \, {\vec y} \ge
           \unimono{\vec p}({\vec x}, |\sigma|)+ d_{\vec p}\nunimono{\vec p}({\vec x}, |\sigma|)
           \,.
\]
\end{enumerate}
\end{thm}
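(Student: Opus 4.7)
The plan is to prove the two clauses separately, with tightness being the easier direction (essentially by construction along the algorithm's execution) and bounding being the hard direction that leans on a Ramsey-like factorization of traces in the finite monoid of abstract transitions.

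For \textbf{tightness}, I would argue inductively on the stages of \procSDL, associating to each $\vec p \in T$ a ``witness'' family of traces parameterized by a target value of $\tau$. The base case gives each element of $\alpha(\setS)$ the singleton trace that produced it; here the MP is concrete and the coefficients are absorbed into $d_{\vec p}$. For the closure step, if $\vec q \acirc \vec p$ is generated and the witnesses for $\vec p, \vec q$ realize them on initial states $\ge \vec x_0^{(p)}, \vec x_0^{(q)}$, one concatenates $\sigma_p; \sigma_q$ and checks monotonicity (all polynomials are monotone in $\vec x$) to obtain a witness family with a uniformly larger $\vec x_0$. The nontrivial case is \emph{generalization}: if $\vec p$ is idempotent and realized by $\sigma_p$, the witness for ${\vec p}^\tau$ at length $\ell$ should be $\sigma_p^\ell$. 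I would verify that after $\ell$ iterations of (a concrete representative of) $\vec p$, every self-dependent entry $\vec p[i]$ accumulates $\Theta(\ell)$ copies of each term in ${\vec p}[i]'\cup{\vec p}[i]''$, while the non-self-dependent part ${\vec p}[i]'''$ persists as a single copy; this matches the definition of ${\vec p}^\tau$ (noting Definition~\ref{def:unimono} lets the linear monomials escape the $d_{\vec p}$ slack, consistent with the statement).

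For \textbf{bounding}, the algorithm's fixed-point needs two things: termination and completeness. Termination follows because $\tabsppol$ restricted to polynomially-bounded values is a finite set (bounded monomial degrees, finitely many monomials in each entry), so both closure and generalization inflate the set only finitely often. Completeness---that every concrete trace $\sigma$ is dominated by some $\vec p \in \setB$ evaluated at $\tau = |\sigma|$---is where the forest factorization theorem enters. View $\alpha(\setS)$ as a set of generators for a finite monoid $M \subseteq \absppol$ under $\acirc$. Simon's theorem provides, for every word $\sigma$, a Ramsey-style factorization tree of depth bounded by a constant $N$ depending only on $|M|$, whose internal nodes are labeled by idempotents of $M$ and whose children either concatenate (bounded arity) or are repetitions of a block of constant idempotent value. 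I would then show by induction on the depth of the factorization tree that the abstract value produced at each node is represented in $\setB$: concatenation nodes correspond to closure steps, and repetition nodes labeled by an idempotent $\vec e$ correspond to applying generalization to $\vec e$ and then evaluating at $\tau$ equal to the length of the subblock. Because $\setB$ is closed under both operations (it is a fixed point), every node, hence the root $\lsem\sigma\rsem$, is dominated by some $\vec p \in \setB$.

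The \textbf{main obstacle} is the interaction between nested generalizations and the single parameter $\tau$. When an idempotent $\vec e$ has already been generated from a previous round (so $\vec e$ itself contains $\tau$'s), applying generalization again converts additional self-dependent monomials to accumulate a new factor. I will need a careful invariant stating that if a sub-trace of length $\ell$ is abstractly represented by $\vec p(\vec x, \tau)$ then its concrete effect on state $\vec x$ is bounded by $c \cdot \vec p(\vec x, \ell)$, and that this invariant is preserved through both composition and the outer-level iteration; in particular, when a block of length $\ell$ is repeated $k$ times to form a length-$k\ell$ super-block, I must show that substituting $\tau \mapsto k$ in the generalized MP of the inner block, then bounding, yields the same order as substituting $\tau \mapsto k\ell$ in the outer generalized MP. This is exactly the point at which Definition~\ref{def:MPnotation}'s careful distinction between ${\vec p}[i]'$, ${\vec p}[i]''$, and ${\vec p}[i]'''$ pays off: non-self-dependent terms must not acquire a spurious $\tau$, while previously existing $\tau$'s must not double. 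The remaining technical work is an uneventful but detail-heavy use of monotonicity and of the bound $N$ on factorization depth, which keeps the constants $c_{\vec p}$ finite.
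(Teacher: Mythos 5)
Your overall architecture coincides with the paper's: witness traces (realizability) for tightness, and Simon's factorization theorem with an induction over the factorization structure for bounding. However, two steps that you treat as routine are exactly where the paper spends most of its effort, and as written your plan fails at both. In the tightness direction, the witness ``$\sigma_p^{\ell}$'' for ${\vec p}^\tau$ only makes sense when $\vec p$ is $\tau$-free. Once $\vec p$ already contains $\tau$'s (which happens as soon as generalization is applied to a product involving an earlier ${\vec q}^\tau$), its witness is a starred pattern $\pi$, and iterating it would require nested stars, i.e., two independent repetition counts that must both be expressed by the single parameter $\tau$; moreover, for the monomials of ${\vec p}[i]'$ that already carry a $\tau$ the mechanism is not ``accumulate $\Theta(\ell)$ copies'' but ``make the inner iteration count grow with $\ell$.'' The paper resolves this with the pattern ${(\pi(1))}^{*}\pi^{n}$ and needs for it the structural facts that realizable idempotents are neat (Lemmas~\ref{obs:no-doublin} and~\ref{lem:nearly-dag}) together with Corollary~\ref{cor:lb-nsd}; none of this appears in your plan. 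Relatedly, your termination argument is circular as stated: $\tabsppol$ is infinite, and the reason only bounded-degree AMPs arise on a polynomially bounded loop is precisely that every computed AMP is realizable, so termination is a corollary of the tightness proof (Corollary~\ref{cor:termination}), not of an a priori finiteness of the abstract domain.

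The more serious gap is in the bounding direction. The idempotent-repetition nodes of the factorization have unbounded arity, and their children are only \emph{similar} (equal after abstraction), not equal as concrete $\tau$-MPs; composing $k$ of them with ordinary $\circ$ makes numeric coefficients grow with $k$, so ``$\setB$ is a fixed point, hence the root is dominated by some ${\vec p}\in\setB$ up to a constant'' does not go through --- the constant would depend on $k$. Your proposed invariant gestures at the right statement but supplies no mechanism for keeping the constant uniform over arbitrarily long repetitions. The paper's mechanism is the chain: extract iterative MPs via the self-dependent cut $\sdm{{\vec p}_1\cdots{\vec p}_n}$ (Lemma~\ref{lem:bounding-by-sdm}), replace ordinary composition by the $\tau$-absorbing composition $\taucomp$, which pushes the growth into the $\tau$ parameter rather than into coefficients (Lemma~\ref{lem:taucomp-ub}), and prove the Finite Closure Lemma (Lemma~\ref{lem:tau-finiteness}) so that only finitely many concrete bounds ever arise. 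Without a substitute for this machinery, the induction over the factorization tree does not close at the repetition nodes.
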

\noindent
Note that in the lower-bound clause (2), the linear monomials of $p$ are not multiplied by the constant $d_{\vec p}$; this
sets, in a sense, a stricter requirement for them: if the trace maps $x$ to $x^2$ then the bound
$2x^2$ is acceptable, but if it maps $x$ to $x$, the bound $2x$ is not accepted. The reader may understand this technicality by considering the
effect of iteration: it is important to distinguish the transition $x'_1 =  x_1$, which can be iterated ad libitum, from the transition
$x'_1 = 2x_1$, which produces exponential growth on iteration. Distinguishing $x'_1 = x_1^2$ from $x'_1 = 2x_1^2$ is not as important.
   We remark  that $c_{\vec p}, d_{\vec p}$ range over real numbers. However, our data and the coefficients of polynomials remain integers, it is only such
comparisons that are performed with real numbers (specifically, to allow $d_{\vec p}$ to be smaller than one; in the upper bounds, it is possible to stick
to integers).
Note also that polynomial boundedness is ensured by our algorithm
before applying the procedure (Section~\ref{sec:solve}), so the precondition of the correctness theorem is satisfied.

\section{Correctness Proofs}%
\label{sec:correctness-main}

 In this section we prove the correctness of the main, non-trivial part of our algorithm, namely the solution of a simple disjunctive loop,
 showing that it satisfies the requirements set forth in the last section.

 \paragraph{\textbf{Overview of the proof}}
 Intuitively, what we want to prove is that the multi-polynomials we compute cover all ``behaviors'' of the loop. More precisely, in the upper-bound part of the
proof we want to cover all behaviors: upper-bounding is a universal statement. To prove that bounds are tight, we show that each such bound constitutes a
\emph{lower bound} on a certain ``worst-case behavior'': tightness is an existential statement. The main aspects of these proofs are as follows:
\begin{itemize}
\item A key notion in our proofs is that of \emph{realizability}. Intuitively, when we come up with a bound, we want to show that there are traces that achieve (realize) this bound for arbitrarily large input values.
\item In the lower-bound proof, we describe a set of traces by a \emph{pattern}. A pattern is constructed like a regular expression
with concatenation and Kleene-star.
However, they allow no nested iteration constructs, and when expanding a pattern into a concrete trace,
the starred sub-expressions have to be repeated the same number of times; for example,
the pattern ${\vec p}^*{\vec q}^*$ generates the traces $\{ {\vec p}^t{\vec q}^t,\ t\ge 0\}$.
The proof constructs a pattern for every
multi-polynomial computed,  showing that it is realizable. It is interesting that such simple patterns suffice to establish tight lower bounds for all loops in our class.
\item In the upper-bound proof, we describe all  traces by a finite set of \emph{well-typed regular expressions}~\cite{Bojanczyk:DLT:09}.
This elegant tool channels the power of the Factorization Forest Theorem~\cite{Simon:TCS:90}; this theorem
exposes the role of idempotent elements, which is key in our algorithm.
\item Interestingly, the lower-bound proof not only justifies the tightness of our upper bounds, it also justifies the termination of the algorithm and
the application of the Factorization Forest Theorem in the upper-bound proof, because it shows that our abstract multi-polynomials generate a finite monoid. 
\end{itemize}

\noindent
Recall that traces are just sequences of polynomial transitions (Definition~\ref{def:trace}).
Next, we define \emph{concrete traces}, which represent executions on concrete data. The following definition assumes that the steps of the trace are specified by
$\tau$-free multi-polynomials; it suffices for the first part of the proof (Section~\ref{sec-proof-lb}).

\begin{defi}[Concrete traces (unweighted)]
A \emph{concrete trace} corresponding to $\sigma \in \tseqs{\mathcal S}$ is a
path of labeled arcs in the state space $\nats^n$:
\[ \concrete{\sigma} = \trans{{\vec p}_1}{s_{0}}{s_1}\dots\trans{{\vec p}_t}{s_{t-1}}{s_t},\]
where  $s_{i+1} = {\vec p}_{i+1}(s_i)$.
We write $ \tseq{{s_0}}{{\sigma}}{{s_t}}$ to indicate just the initial and final states of a trace (as a special case, the empty trace $\eps$
corresponds to a path with no arcs: $\tseq{s_0}{\eps}{s_0}$).
The set of all concrete traces is denoted $\concrete{\tseqs{\mathcal S}}$.
\end{defi}

\noindent
Note that the semantics of polynomial transitions never block a state. That is, given $\sigma$ and $s_0$ there always is a $\concrete{\sigma}$ starting
with $s_0$.
\paragraph{\textbf{Notations}}
\begin{itemize}
\item
Concatenation of traces $\sigma, \rho$
is written as $\sigma \concat \rho$. For concrete traces, concatenation requires the final state of $\sigma$ to be the initial state of $\rho$,
assuming both are non-empty.
\item
In the proofs, we handle the abstract polynomials computed by the algorithm as if they were concrete polynomials.
This should be understood as an ``implicit cast,'' applying the $\gamma$ conversion function (see Page~\pageref{def:gamma}).
In fact, it is useful to bear in mind that
the correctness claim---namely, Theorem~\ref{thm:sdlp}---is a claim about concrete MPs, relating their numeric values to values generated by program executions.
We mostly omit $\alpha$'s and $\gamma$'s and the reader should interpret a MP as concrete or abstract according to context. For example,
comparison of value $p \ge q$ applies to concrete polynomials, while the statement that $\vec p$ is idempotent indicates that $\vec p$ is
(or should be ``cast'' into) an abstract MP\@.
\item
 We call a $\tau$-MP $\vec p$ $\tau$-closed if ${\vec p}^\tau = {\vec p}$.
\end{itemize}


\subsection{Lower-Bound Correctness for Simple Loops}%
\label{sec-proof-lb}

The key notion in proving that the upper bounds that we compute are tight---equivalently, that they provide lower bounds (up to constant factors) on the
worst-case results---is that of \emph{realizability}. Intuitively, when we come up with a bound, we want to show that there are traces which achieve (realize) this bound.
Importantly, with asymptotic analysis, a bound is not justified, in general, by showing a \emph{single} trace;
what we  need is a \emph{pattern} that generates  arbitrarily long traces.

Formally, we define the class of \emph{patterns} (over a given set of polynomial transitions, $\mathcal S$)
and their associated languages (sets of traces). The following statements define patterns $\pi$ along with corresponding sets of languages, $L(\pi,t)$.
The role of $t$ is related to loop counts; it tells how many times to repeat $\vec a$ in a pattern ${\vec a}^*$.
\begin{itemize}
\item The empty string $\eps$ is also a pattern. It generates the language $L(\eps,t) = \{\eps\}$, consisting of the empty trace.
\item A single MP, ${\vec p}\in \mathcal S$, is a pattern. It generates the language  $L({\vec p},t) = \{ {\vec p} \}$, consisting of a single trace.
\item A concatenation of patterns is a pattern, and $L(\pi_1 \pi_2, t) \eqdef L(\pi_1, t) L(\pi_2, t)$, where the right-hand side applies concatenation to traces. \\
We define concatenation of patterns to be associative, so ${\vec p}({\vec q}{\vec r})$
and $({\vec p}{\vec q}){\vec r}$ are the same pattern, which may be written ${\vec p}{\vec q}{\vec r}$ (this works because concatenation of traces is also associative).
\item If $\pi$ is a pattern, $\pi^*$ is a pattern. However, nested application of the star is not allowed; in other words, $\pi$ is required to be star-free.
We define $L(\pi^*, t) = {L(\pi)}^t$, where $L^0 = \{\eps\}$ and $L^{t+1} = L^t\cdot L$. \\
 Parentheses are used for syntactic disambiguation of the operand of the star, e.g., ${\vec p}{({\vec q}{\vec r})}^*$.
\end{itemize}

\noindent
We use the notation $\pi^n$ as a shorthand for a concatenation of $n$ copies of $\pi$.
The set of traces corresponding to pattern $\pi$, denoted $L(\pi)$,
is defined by $L(\pi) = \bigcup_{t\ge 0} L(\pi,t)$.
 E.g., $L(a^*b^*) = \{ a^n b^n \mid n\ge 1 \}$.
We denote by $\pi(n)$ the result of substituting $n$ for all the stars in $\pi$ (obtaining a star-free pattern).
In particular, $\pi(1)$ is the result of substituting 1 for all the stars in $\pi$. Since nesting of iterative expressions is not allowed,
$|\pi(t)| = \Theta(t)$.

\begin{defi}[Realizability]%
\label{def:realized}
A polynomial transition (PT)  ${\vec p}\in \tppol$  is said to be \emph{realizable} over the given set $\setS\subset \ppol$ if there is a pattern $\pi$
and a constant $0 < c \le 1$,
such that for all $t\ge 1$, for all $\sigma\in L(\pi(t))$, if $\tseq{\vec{x}}{{\sigma}}{\vec{y}}$, then%
\begin{equation} \label{eq:realization}
\vec y \ge  \unimono{\vec p}({\vec x}, t) + c\nunimono{\vec p}({\vec x}, t) \,.
\end{equation}
We say that $\vec p$ is realized by $\pi;c$, or that it is $c$-realizable.   For $\tau$-free MPs, we use the same definition but $\pi$ should not
 include a star, so $t$ may be omitted. Thus a $\tau$-free MP is to be realized by a single abstract trace.
 A set of MPs is called realizable if all its members are.
 \end{defi}

 \begin{exa}
 We reconsider the loop in Example~\ref{ex:motivate-tau}. In Example~\ref{ex:tauisation:1} we have computed from its body the $\tau$-MP
 ${\vec q} = {\vec p}^\tau = \fourtuple{ x_1 + \tau x_3}{ x_2 + \tau x_3+x_4}{ x_3}{ x_3 }$.  We claim that it is realizable. We use the simple pattern $\pi = {\vec p}^*$.
 The final state after $t\ge 1$ iterations (a trace in $L(\pi(t))$) is
 \[
  {\vec y} = \fourtuple{ x_1 + t x_3 }{ x_2 + (2t-1)x_3 + x_4 }{ x_3 }{ x_3 },
\]
which is easy to verify by inspection of the program. We now verify that
\[
{\vec y} \ge \unimono{{\vec q}}({\vec x}, t) + c\nunimono{{\vec q}}({\vec x}, t),
\]
where $c = 1$.
\end{exa}

 Note that realizability is a monotone property in the sense that if ${\vec p} \ge {\vec q}$ and $\vec p$ is realizable, then $\vec q$ is.
 This is natural, since we are arguing that $\vec p$ is a \emph{lower bound}.
 Note also that in contrast to Clause~2 of Theorem~\ref{thm:sdlp}, we pass $t$ rather than $|\sigma|$ as the $\tau$ parameter to the polynomial bound.
 However, since $|\pi(t)| = \Theta(t)$, the inequality in the theorem will be satisfied, just with a different constant factor.%
\footnote{This consideration only applies to the non-linear part $\nunimono{\vec p}$, since $\unimono{\vec p}$ is $\tau$-free anyway.}

\noindent
We introduce a short form that makes inequalities such as~\eqref{eq:realization} easier to manipulate:
\begin{defi}
For any $\tau$-polynomial $p$, and real number $c \le 1$,
\[
 c\mnl p({\vec x},t) \eqdef   \unimono{p}({\vec x}, t) + c\nunimono{p}({\vec x}, t) \,.
\]
We also write $c\mnl {\vec p}$ for component-wise application of ``$c\mnl$'' to a multi-polynomial $\vec p$.
\end{defi}
\noindent
It is useful to be aware of properties of this operation. We leave the verification of the next lemma to the reader.

\begin{lem}%
\label{lem:mnl-properties}
Let $r\in \tpol$, ${\vec p}\in \tppol$, and $c\le 1$. Then we have
\begin{align*}
(c\mnl r ) \circ {\vec p} &\ge c\mnl (r\circ {\vec p}) \\
r \circ (c\mnl {\vec p}) &\ge c^{\deg r}\mnl (r\circ {\vec p}) \,.
\end{align*}
\end{lem}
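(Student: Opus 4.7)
The plan is to prove both inequalities by decomposing every polynomial through the split $p = \unimono{p}+\nunimono{p}$ and exploiting an invariant that I would first record: every polynomial that appears in $\ppol$ or $\tppol$ has \emph{no constant term in the $x_i$'s}, since the expression grammar forbids constants and generalization only inserts $\tau$ as a multiplier of self-dependent (hence $x$-bearing) terms. Consequently, whenever $m$ is a monomial in $r$ that is \emph{not} of the form $a x_i$ with constant $a$ (so $m$ either carries a $\tau$ or has $x$-degree $\ge 2$), every monomial of $m({\vec p})$ either carries a $\tau$ or has $x$-degree $\ge 2$ and so lies entirely in $\nunimono{\cdot}$. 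Equivalently, $\unimono{r\circ{\vec p}}$ is produced only by the linear monomials $a x_i$ of $r$, and equals $\sum_{a x_i\in\unimono{r}} a\,\unimono{{\vec p}[i]}$.

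For the first inequality I would simply expand both sides. Write $N = \sum_{a x_i\in\unimono{r}} a\,\nunimono{{\vec p}[i]}$. By the observation above,
\[
\unimono{r}\circ{\vec p} = \unimono{r\circ{\vec p}} + N,\qquad
\nunimono{r\circ{\vec p}} = N + \nunimono{r}\circ{\vec p},
\]
where the second equality uses that $\nunimono{r}\circ{\vec p}$ is entirely nonlinear. Plugging these into $(c\mnl r)\circ{\vec p} = \unimono{r}\circ{\vec p} + c\,\nunimono{r}\circ{\vec p}$ and into $c\mnl(r\circ{\vec p}) = \unimono{r\circ{\vec p}} + c\,\nunimono{r\circ{\vec p}}$, the difference telescopes to $(1-c)\,N$, which is nonnegative because $c\le 1$ and all coefficients of ${\vec p}$ are nonnegative.

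For the second inequality I would argue monomial-by-monomial on $r$ and then sum, using that $c\mnl(\cdot)$ distributes over pointwise addition. Fix a monomial $m$ of $r$ with total degree $d\le\deg r$, and consider $m(c\mnl{\vec p})$: expanding the product factor-by-factor, each term arises by choosing a monomial of either $\unimono{{\vec p}[i]}$ or $c\,\nunimono{{\vec p}[i]}$ for each of the $x$-factors, and thus acquires a factor $c^{j}$ where $j\le d$ counts the nonlinear choices. Hence $m(c\mnl{\vec p})\ge c^{d}\,m({\vec p})\ge c^{\deg r}\,m({\vec p})$ since $c\le 1$. If $m=a x_i$ is linear, then in fact $m(c\mnl{\vec p}) = c\mnl m({\vec p})$, which dominates $c^{\deg r}\mnl m({\vec p})$; if $m$ is nonlinear, the invariant above gives $\unimono{m({\vec p})}=0$, so $c^{\deg r}\mnl m({\vec p}) = c^{\deg r} m({\vec p})$ and the desired bound is exactly what we already established. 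Summing over all monomials of $r$ and invoking distributivity of $c^{\deg r}\mnl(\cdot)$ over addition yields $r\circ(c\mnl{\vec p})\ge c^{\deg r}\mnl(r\circ{\vec p})$.

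The main subtlety—and the only place any care is needed—is keeping the strict reading of $\unimono{\cdot}$ (constant coefficient, so $\tau x_i\in\nunimono{\cdot}$) synchronized with the no-constants invariant on ${\vec p}$; together these guarantee that composition cannot create spurious linear monomials, which is exactly what lets the linear parts of $c\mnl(r\circ{\vec p})$ be matched without paying any factor of $c$.
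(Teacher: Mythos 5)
Your argument is correct. Note that the paper itself gives no proof of this lemma (``We leave the verification of the next lemma to the reader''), so there is nothing to compare against; your write-up is a valid verification and can stand on its own.

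One point deserves emphasis: the no-constant-term invariant you record at the outset is not mere bookkeeping but is genuinely necessary. Taken literally, for arbitrary $r\in\tpol$ and ${\vec p}\in\tppol$ the first inequality can fail: with $r=x_1x_2$ and ${\vec p}=\tuple{x_1,1}$ one gets $(c\mnl r)\circ{\vec p}=cx_1$ while $c\mnl(r\circ{\vec p})=x_1$. The lemma is true only because every polynomial the algorithm ever manipulates is built from variables by $+$ and $*$ (plus $\tau$-multipliers from generalization) and hence has no constant monomials, which is exactly what guarantees that composition cannot turn a nonlinear monomial of $r$ into a linear monomial of $r\circ{\vec p}$, i.e.\ that $\unimono{r\circ{\vec p}}=\sum_{ax_i\in\unimono{r}}a\,\unimono{{\vec p}[i]}$. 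Your telescoping computation for the first inequality ($(c\mnl r)\circ{\vec p}-c\mnl(r\circ{\vec p})=(1-c)N\ge 0$) and the monomial-by-monomial factor count $c^{j}$ with $j\le\deg r$ for the second are both sound, including the separate treatment of linear monomials $m=ax_i$ via $m(c\mnl{\vec p})=c\mnl(m\circ{\vec p})\ge c^{\deg r}\mnl(m\circ{\vec p})$ and the use of the fact that $c\mnl(\cdot)$ distributes over sums of polynomials with non-negative coefficients. No gaps.
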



\begin{lem}
Let ${\vec p},{\vec q} \in \tabsppol$. Then ${\vec p}\circ {\vec q} \ge  {\vec p}\acirc {\vec q}$.
\end{lem}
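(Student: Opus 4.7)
The plan is to unfold the definition of the abstract composition $\acirc$ and observe that the abstraction map $\alpha$ can only weaken a polynomial when viewed as a function on the non-negative integers. Following the paper's ``implicit cast'' convention, I read both sides as concrete multi-polynomials; that is, $\vec{p}\circ\vec{q}$ denotes $\gamma(\vec p)\circ\gamma(\vec q)$ and $\vec p\acirc\vec q$ denotes $\gamma(\alpha(\gamma(\vec p)\circ\gamma(\vec q)))$. The comparison $\ge$ is understood pointwise over $\nats^n$.

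The first step is to note that $\gamma(\vec p)$ and $\gamma(\vec q)$ have coefficients in $\{0,1\}$, so their ordinary composition $r \eqdef \gamma(\vec p)\circ\gamma(\vec q)$ is a multi-polynomial with non-negative integer coefficients. In particular, no cancellation can occur: the set of monomials whose coefficient in $r$ is positive coincides (component by component) with the set of monomials appearing in $\alpha(r)$.

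The second step is the key observation. Applying $\gamma\circ\alpha$ to $r$ replaces every positive integer coefficient by $1$, while leaving zero coefficients alone. Since each monomial takes non-negative values over $\nats^n$ and its coefficient in $r$ is at least the corresponding coefficient in $\gamma(\alpha(r))$, we obtain $r \ge \gamma(\alpha(r))$ componentwise as multi-polynomials on $\nats^n$. Written out, this is exactly
\[
\gamma(\vec p)\circ\gamma(\vec q) \;\ge\; \gamma\bigl(\alpha(\gamma(\vec p)\circ\gamma(\vec q))\bigr),
\]
which is the claimed inequality $\vec p\circ\vec q \ge \vec p\acirc\vec q$.

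There is no real obstacle here; the only thing to be careful about is to make the implicit casting explicit and to remark that because all coefficients involved are non-negative, passing through $\alpha$ preserves the set of monomials (and only collapses their magnitudes down to $1$). The same argument would immediately extend to $\tau$-MPs, since $\tau$ is treated as an additional non-negative variable in the evaluation.
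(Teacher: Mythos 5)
Your proof is correct and follows essentially the same route as the paper: both arguments observe that $\gamma(\vec p)\circ\gamma(\vec q)$ and $\gamma(\vec p\acirc\vec q)$ consist of exactly the same monomials (no cancellation is possible since all coefficients are non-negative), and that passing through $\alpha$ merely collapses the positive integer coefficients down to $1$, which can only decrease the value pointwise over $\nats^n$. Your write-up is just a more explicit unfolding of the paper's one-line justification, including the casts via $\gamma$ and the remark that $\tau$ behaves as another non-negative variable.
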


\noindent
Note that on the left-hand side of the inequality, we compose in $\tppol$, while on the right-hand side, we compose in $\tabsppol$. A more
explicit expression of this claim would be:
\[\gamma({\vec p})\circ \gamma({\vec q}) \ge \gamma ( {\vec p}\acirc {\vec q} ). \]

\noindent
Now this becomes obvious, since both sides of the equality have the same monomials (they are abstractly the same), and
the coefficients on the right-hand side are all 1, while on the left-hand side they are integers.
The fact that both sides have corresponding monomials allows us to strengthen the statement
to the following:
\begin{lem}%
\label{lem:circ-vs-acirc}
Let ${\vec p},{\vec q} \in \tabsppol$ and $c\le 1$. Then $c\mnl({\vec p}\circ {\vec q}) \ge  c\mnl({\vec p}\acirc {\vec q})$.
\end{lem}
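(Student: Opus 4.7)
The plan is to extract a componentwise monomial-by-monomial comparison from the structural observation already made for the preceding lemma, and then check that the $c\mnl$ operator preserves the inequality.

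First I would unpack what $c\mnl$ does to a $\tau$-polynomial $r$: it leaves every linear monomial $a x_i$ untouched, and multiplies every non-linear monomial by the factor $c$. Consequently, for two polynomials $f,g \in \tpol$ whose monomial sets coincide, say $f = \sum_{m} a_m m$ and $g = \sum_{m} b_m m$ (summed over the same finite collection of monomials $m$), the inequality $c\mnl f \ge c\mnl g$ reduces to the pair of inequalities $a_m \ge b_m$ for every linear $m$ and $c a_m \ge c b_m$ for every non-linear $m$. Since $c > 0$, both follow from $a_m \ge b_m$.

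Next I would apply this observation componentwise. For each index $i$, set $f_i = \gamma({\vec p})[i] \circ \gamma({\vec q})$ and $g_i = \gamma({\vec p}\acirc{\vec q})[i]$. The discussion just before Lemma~\ref{lem:circ-vs-acirc} already established that $f_i$ and $g_i$ consist of exactly the same monomials, that every coefficient of $g_i$ is $1$, and that every coefficient of $f_i$ is a positive integer, hence $\ge 1$. Thus the per-monomial comparison $a_m \ge b_m$ holds, and the previous paragraph yields $c\mnl f_i \ge c\mnl g_i$ for each $i$.

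Finally I would conclude by noting that $c\mnl$ on a multi-polynomial is defined componentwise, so the componentwise inequalities assemble into $c\mnl({\vec p}\circ{\vec q}) \ge c\mnl({\vec p}\acirc{\vec q})$, which is exactly the statement of the lemma. I do not expect any real obstacle: the entire content of the proof is the identification of the matching monomial structure between the concrete and abstract compositions, which the preceding lemma has already done. The only subtlety worth flagging is that the argument genuinely uses $c > 0$ (not merely $c \le 1$), which is consistent with the use of $c\mnl$ throughout the development, where constants arising from realizability satisfy $0 < c \le 1$.
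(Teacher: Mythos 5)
Your proof is correct and follows essentially the same route as the paper: the paper justifies this lemma by the remark preceding it, namely that $\gamma({\vec p})\circ\gamma({\vec q})$ and $\gamma({\vec p}\acirc{\vec q})$ have exactly the same monomials with coefficients $\ge 1$ versus $=1$, so the inequality survives the monomial-wise scaling performed by $c\mnl$. (Only trivial quibble: $c\ge 0$ already suffices for the non-linear monomials, though of course all constants in the development are positive.)
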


 \noindent
 Recall that we aim to prove that all PTs computed by our algorithm are realizable.  We do this by a series of lemmas which shows that realizability is preserved
 by the various constructions in our algorithm. Note that our algorithm produces abstract MPs; so, let us clarify that when we say that ${\vec p}\in\absppol$
 is realizable, Definition~\ref{def:realized} is actually applied to $\gamma({\vec p})$.

\subsubsection{Realizability without generalization}

The realizability of PTs computed in the closure step will follow quite easily using a few simple lemmas.

\begin{lem}%
\label{lem:idppol-realizable}
$\idppol$ is realizable.
\end{lem}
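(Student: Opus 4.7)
The plan is very short because the statement is essentially immediate from the definitions. Recall that $\idppol$ is the multi-polynomial with $\idppol[i] = x_i$ for every $i$, so every entry is a single linear monomial. Hence $\unimono{\idppol} = \idppol$ and $\nunimono{\idppol} = 0$, and the realizability inequality reduces to demanding $\vec y \ge \vec x$ on every trace generated by the chosen pattern.

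The natural candidate pattern is the empty one, $\pi = \eps$. By the definition of the pattern languages, $L(\eps, t) = \{\eps\}$ for every $t \ge 1$, and the only trace in this language is the empty trace, for which $\tseq{\vec x}{\eps}{\vec x}$ by the base case of Definition~\ref{def:trace} (recall $\lsem\eps\rsem = \idppol$). Thus the final state satisfies $\vec y = \vec x$, which trivially gives $\vec y \ge \unimono{\idppol}(\vec x, t) + 1 \cdot \nunimono{\idppol}(\vec x, t) = \vec x$. Taking $c = 1$ completes the verification.

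I do not foresee any obstacle: there is no need to invoke Lemma~\ref{lem:mnl-properties} or Lemma~\ref{lem:circ-vs-acirc}, and no ``non-trivial'' pattern is required. The witness $\pi; c = \eps;\, 1$ suffices.
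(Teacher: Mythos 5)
Your proof is correct and is exactly the paper's argument: the paper's entire proof is ``$\idppol$ is realized by the empty pattern,'' and you have simply filled in the (routine) verification that the empty trace yields $\vec y = \vec x$ and that $c=1$ works since every entry of $\idppol$ is a linear monomial. No discrepancy to report.
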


\begin{proof}
$\idppol$ is realized by the empty pattern.
\end{proof}

\begin{lem}
Every member of $\setS$ is realizable.
\end{lem}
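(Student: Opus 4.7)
The plan is to realize each $\vec p \in \setS$ by the trivial pattern $\pi = \vec p$ (a single-step, star-free pattern) together with the constant $c=1$. Since $\vec p \in \setS$, the single-element trace $\sigma = \vec p$ belongs to $L(\pi)$, and for any initial state $\vec x$ the relation $\vec x \lsem \sigma \rsem \vec y$ forces $\vec y = \vec p(\vec x)$ exactly.

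The only subtlety is the implicit cast: although we are applying realizability to a member of $\setS$, the lemma is used in the algorithm on its abstract version, so Definition~\ref{def:realized} should be read as applying to $\gamma(\alpha(\vec p))$. Write $\vec q = \gamma(\alpha(\vec p))$. Then $\vec q$ and $\vec p$ consist of the \emph{same} monomials, with $\vec q$ having all coefficients equal to $1$ while $\vec p$ has positive integer coefficients. Hence monomial-by-monomial, $\vec p \ge \vec q$ on non-negative inputs, and separating linear and non-linear parts yields $\unimono{\vec p} \ge \unimono{\vec q}$ and $\nunimono{\vec p} \ge \nunimono{\vec q}$.

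Putting these together,
\[
\vec y \;=\; \vec p(\vec x) \;=\; \unimono{\vec p}(\vec x)+\nunimono{\vec p}(\vec x) \;\ge\; \unimono{\vec q}(\vec x)+1\cdot\nunimono{\vec q}(\vec x) \;=\; 1\mnl \vec q(\vec x),
\]
which is exactly the realizability condition~\eqref{eq:realization} (with $t$ omitted, as permitted for $\tau$-free MPs). There is no real obstacle here; the only thing worth flagging is to make clear in the write-up that the pattern $\pi = \vec p$ is star-free (so the omission of $t$ is legitimate) and that the constant $c=1$ absorbs the ``abstraction gap'' between $\vec p$ and $\gamma(\alpha(\vec p))$ because abstraction can only shrink coefficients.
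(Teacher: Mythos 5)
Your proposal is correct and takes exactly the paper's route: the paper's entire proof is the one-liner ``${\vec p}\in \setS$ is realized by the pattern $\vec p$,'' and you simply spell out the verification with $c=1$ together with the (valid) observation that the abstraction gap between $\vec p$ and $\gamma(\alpha(\vec p))$ is absorbed because abstraction only shrinks coefficients. Nothing is missing.
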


\begin{proof}
${\vec p}\in \setS$ is realized by the pattern $\vec p$.
\end{proof}

\begin{lem}%
\label{lem:comp-realizable}
Let $\vec p$, $\vec q$ be realizable $\tau$-MPs. Then ${\vec q}\circ {\vec p}$ is realizable.
\end{lem}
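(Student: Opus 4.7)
The plan is to realize ${\vec q}\circ {\vec p}$ by simply concatenating the patterns that realize $\vec p$ and $\vec q$. Suppose $\vec p$ is realized by $\pi_1; c_1$ and $\vec q$ by $\pi_2; c_2$. I claim that $\pi_1\pi_2$, together with an appropriate constant, realizes ${\vec q}\circ {\vec p}$. For any $t\ge 1$ and any $\sigma\in L((\pi_1\pi_2)(t)) = L(\pi_1(t))\cdot L(\pi_2(t))$, split $\sigma = \sigma_1\concat \sigma_2$ with $\sigma_i \in L(\pi_i(t))$; let $\tseq{\vec x}{\sigma_1}{\vec y_1}$ and $\tseq{\vec y_1}{\sigma_2}{\vec y}$. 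By realizability of $\vec p$ and $\vec q$ we then have ${\vec y_1} \ge c_1\mnl {\vec p}(\vec x,t)$ and ${\vec y} \ge c_2\mnl {\vec q}(\vec y_1,t)$.

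Next I would propagate the first inequality through the second using monotonicity. Since each component of $\vec q$ is a polynomial with non-negative coefficients, it is monotone in its arguments, and so is the operator $c_2\mnl(\cdot)$ (it is linear with non-negative coefficients in the entries of its polynomial). Thus
\[
  {\vec y} \;\ge\; c_2\mnl {\vec q}(\vec y_1,t) \;\ge\; c_2\mnl {\vec q}\bigl(c_1\mnl {\vec p}(\vec x,t),t\bigr) \;=\; c_2\mnl \bigl({\vec q}\circ (c_1\mnl {\vec p})\bigr)(\vec x,t).
\]
Now I would apply the second inequality of Lemma~\ref{lem:mnl-properties} component-wise: for each $i$, ${\vec q}[i]\circ(c_1\mnl {\vec p}) \ge c_1^{\deg {\vec q}[i]}\mnl ({\vec q}[i]\circ {\vec p})$. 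Setting $d = \max_i \deg {\vec q}[i]$ and using $c_1\le 1$ so that $c_1^d \le c_1^{\deg {\vec q}[i]}$, this yields ${\vec q}\circ(c_1\mnl {\vec p}) \ge c_1^{d}\mnl ({\vec q}\circ {\vec p})$ component-wise.

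Combining and observing that $c_2\mnl(c_1^d\mnl r) = (c_2 c_1^d)\mnl r$ for any $\tau$-polynomial $r$ (the linear part is preserved under each $\mnl$, and the non-linear part is multiplied by $c_2 c_1^d$), we obtain
\[
  {\vec y} \;\ge\; (c_2 c_1^d)\mnl ({\vec q}\circ {\vec p})(\vec x,t).
\]
Since $c \eqdef c_2 c_1^d \in (0,1]$, this establishes that ${\vec q}\circ {\vec p}$ is realized by $\pi_1\pi_2;\,c$. The main subtlety---and the only real obstacle---is handling the interaction between composition and the $\mnl$ operation, which requires Lemma~\ref{lem:mnl-properties} and the observation that nested $\mnl$ operations collapse multiplicatively on the non-linear part while leaving the linear part intact; once that is in place, the proof is a direct chain of monotone substitutions.
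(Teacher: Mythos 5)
Your proof is correct and follows essentially the same route as the paper's: realize ${\vec q}\circ{\vec p}$ by the pattern $\pi_1\pi_2$, apply the two realizability inequalities to the two segments of the trace, and combine them via Lemma~\ref{lem:mnl-properties} to obtain the constant $c_1^d c_2$ with $d$ the maximum degree of a component of $\vec q$. The only nitpick is that your middle step $c_2\mnl {\vec q}(c_1\mnl {\vec p}(\vec x,t),t) = c_2\mnl({\vec q}\circ (c_1\mnl {\vec p}))(\vec x,t)$ is really a ``$\ge$'' rather than an equality (linear monomials of $\vec q$ may become non-linear after composition, so the two sides split differently under $\mnl$), but the inequality goes in the direction you need, so the argument stands.
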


\begin{proof}
Suppose that $\vec p$ is realized by $\pi_1;c_1$ and $\vec q$ by $\pi_2;c_2$.  We claim that
${\vec q}\circ {\vec p}$ is realized by $\pi_1 \pi_2 ; c'$ for some $c'$.  Consider a concrete transition sequence $\widetilde \sigma$ with
$\sigma\in L((\pi_1\pi_2)(t))$, it has the form $\tseq{\vec x}{\sigma_1}{\vec y}\tseq{\:}{\sigma_2}{\vec z}$ with
$\sigma_i \in L(\pi_i(t))$.
We have:
%
\begin{align*}
\vec y & \ge  c_1\mnl{\vec p}({\vec x}, t) \\
\vec z & \ge  c_2\mnl{\vec q}({\vec y}, t),
\end{align*}
consequently, using Lemma~\ref{lem:mnl-properties}:
\[
 \vec z \ge c_1^d c_2\mnl({{\vec q}\circ {\vec p}})({\vec x}, t)
\]
where $d$ is the highest degree of any component of $\vec q$.
\end{proof}

\begin{lem}\label{lem:acomp-realizable}
Let ${\vec p}, {\vec q}\in \tabsppol$ be realizable. Then ${\vec q}\acirc {\vec p}$ is realizable.
\end{lem}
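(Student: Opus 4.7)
The plan is to reduce this directly to the already-proved Lemma~\ref{lem:comp-realizable} (composition in $\tppol$) together with the comparison Lemma~\ref{lem:circ-vs-acirc} that relates concrete composition $\circ$ to abstract composition $\acirc$. Recall that when we talk about realizability of an element of $\tabsppol$ we really mean realizability of its concretization $\gamma(\cdot)$, and that realization is witnessed by a pattern $\pi$ together with a constant $c\le 1$.

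The first step is to apply Lemma~\ref{lem:comp-realizable} to $\gamma({\vec p})$ and $\gamma({\vec q})$, which by assumption are realizable (say by $\pi_1;c_1$ and $\pi_2;c_2$). This yields a pattern $\pi_1\pi_2$ and a constant $c' > 0$ (of the shape $c_1^d c_2$, as in the proof of Lemma~\ref{lem:comp-realizable}) such that for every $t \ge 1$ and every $\sigma \in L((\pi_1\pi_2)(t))$, if $\tseq{\vec x}{\sigma}{\vec z}$ then
\[
\vec z \;\ge\; c'\mnl\bigl(\gamma({\vec q})\circ \gamma({\vec p})\bigr)({\vec x}, t).
\]

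The second step is to invoke Lemma~\ref{lem:circ-vs-acirc}, which gives
\[
c'\mnl\bigl(\gamma({\vec q})\circ \gamma({\vec p})\bigr) \;\ge\; c'\mnl\bigl(\gamma({\vec q}\acirc {\vec p})\bigr),
\]
since the two sides have the same set of monomials and the coefficients obtained by honest composition are at least $1$. Chaining these two inequalities shows that $\pi_1\pi_2; c'$ realizes ${\vec q}\acirc {\vec p}$, which is what we needed.

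There is no real obstacle here beyond bookkeeping: the only thing worth checking is that the comparison lemma is applied to the exact quantity that comes out of Lemma~\ref{lem:comp-realizable}, namely $c'\mnl(\gamma({\vec q})\circ\gamma({\vec p}))$, and that the resulting constant $c'$ satisfies $c' \le 1$ so that Lemma~\ref{lem:circ-vs-acirc} may be invoked; but both realization constants are assumed to lie in $(0,1]$, and their product (with a non-negative integer exponent on $c_1$) remains in $(0,1]$.
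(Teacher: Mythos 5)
Your proof is correct and takes essentially the same route as the paper, whose entire proof of this lemma reads ``This follows from the previous lemma and Lemma~\ref{lem:circ-vs-acirc}.'' You have simply made the bookkeeping explicit: the pattern $\pi_1\pi_2$, the constant $c'=c_1^dc_2\le 1$, and the $\gamma$-casts, all of which match the intended argument.
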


 \begin{proof}
 This follows from the previous lemma and Lemma~\ref{lem:circ-vs-acirc}.
 \end{proof}

\begin{cor}
Let $T$ be a set of realizable abstract $\tau$-MPs.  Then every member of $\closure{T}$ is realizable.
 \end{cor}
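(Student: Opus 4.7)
The plan is to prove this by straightforward induction on the stage index $i$ in the definition $\closure{T} = \bigcup_{i=0}^\infty T^{\acirc (i)}$, using the lemmas immediately preceding the corollary.

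For the base case $i=0$, we have $T^{\acirc(0)} = \{\idppol\}$, and Lemma~\ref{lem:idppol-realizable} tells us that $\idppol$ is realizable (by the empty pattern). For the inductive step, suppose every member of $T^{\acirc(k)}$ is realizable; I need to show that every member of $T^{\acirc(k+1)}$ is realizable. By the definition of iterated composition, any element of $T^{\acirc(k+1)}$ is either already in $T^{\acirc(k)}$ (handled by the inductive hypothesis) or has the form $\vec q \acirc \vec p$ with $\vec q \in T$ and $\vec p \in T^{\acirc(k)}$. Since $T$ consists of realizable AMPs by hypothesis and $\vec p$ is realizable by induction, Lemma~\ref{lem:acomp-realizable} gives that $\vec q \acirc \vec p$ is realizable. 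This closes the induction.

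Since every element of $\closure{T}$ belongs to some $T^{\acirc (i)}$, realizability propagates to the entire closure. There is no real obstacle here, as the two relevant preservation lemmas (realizability of $\idppol$ and preservation under $\acirc$) have already been established; the only care needed is the bookkeeping of the constants $c'$ that accumulate through compositions, but Lemma~\ref{lem:comp-realizable} (and its abstract counterpart via Lemma~\ref{lem:circ-vs-acirc}) already hides this calculation, producing a fresh constant for each composite MP. Since the closure is a union rather than a single pattern, each element carries its own realizing pattern and its own constant, which is exactly what Definition~\ref{def:realized} requires.
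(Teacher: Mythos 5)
Your proof is correct and is exactly the argument the paper intends (the corollary is left without an explicit proof there, being an immediate consequence of the preceding lemmas): induction over the stages $T^{\acirc(i)}$, with the base case $\{\idppol\}$ handled by the realizability of $\idppol$ and the inductive step by the preservation of realizability under $\acirc$. No gaps.
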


\subsubsection{Dependence graphs and neat MPs}

We are still missing a realizability lemma for the generalization operation.
As a preparation for this proof, we
 introduce the dependence graph of a PT and state an important structural property of dependency graphs associated with idempotent transitions.
 This leads to the definition of the class we call {\em neat MPs}.

\begin{defi}
Let ${\vec p} \in \tppol$. Its \emph{dependency graph} $G({\vec p})$ is
a directed graph with node set $[n]$. The graph includes an arc $i\to j$ if and only if
${\vec p}[j]$ depends on $x_i$.
\end{defi}

\noindent
Intuitively, $G({\vec p})$ shows the data flow in the transition $\vec p$. It is easy to see that paths in the graph correspond to data-flow
effected by a sequence of  transitions.  For instance, if we have a path $i\to j \to k$ in $G({\vec p})$, then $x_i$ will appear in the expression
$({\vec p}\circ {\vec p})[k]$.

\begin{lem}\label{obs:no-doublin}
Suppose that the SDL $\setS$ is polynomially bounded.
Then for all ${\vec p}\in \tppol$ which is realizable over $\setS$,  ${\vec p}[i]$  has no monomial divisible by $x_i$ other than $x_i$.
\end{lem}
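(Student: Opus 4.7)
My plan is to proceed by contradiction and derive, from a hypothetical ``bad'' monomial of $\vec p[i]$, an execution of $\setS$ whose $i$-th output outgrows every polynomial in $\vec x$. Assume for contradiction that some $\vec p \in \tppol$ is realizable over the polynomially bounded SDL $\setS$, yet $\vec p[i]$ contains a monomial $m \neq x_i$ divisible by $x_i$. Write $m = x_i^a \cdot q_0(\vec x, \tau)$ where $q_0$ is a monomial that does not involve $x_i$; the hypothesis $m \neq x_i$ forces either $a \geq 2$ or $\deg q_0 \geq 1$, so $\deg m \geq 2$ (treating $\tau$ as having degree $1$).

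By repeated application of Lemma~\ref{lem:comp-realizable}, $\vec p^k = \vec p \circ \cdots \circ \vec p$ (composed $k$ times) is realizable for every $k \geq 1$ via some pattern $\pi_k$ and positive constant $c_k$. Hence, for every $k$ and every $t \geq 1$, any concrete trace $\tseq{\vec x}{\sigma}{\vec y}$ with $\sigma \in L(\pi_k(t))$ satisfies $y_i \geq c_k \nunimono{\vec p^k[i]}(\vec x, t)$; in particular, $y_i \geq c_k \cdot m'(\vec x, t)$ for any single monomial $m'$ of $\vec p^k[i]$ of degree at least $2$.

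The combinatorial heart of the argument is the recursion $\vec p^{k+1}[i] = \vec p[i] \circ \vec p^k$, obtained by substituting $\vec p^k[l]$ for each $x_l$ in $\vec p[i]$. The monomial $m = x_i^a q_0$ then contributes the term $(\vec p^k[i])^a \cdot q_0(\vec p^k)$ to $\vec p^{k+1}[i]$. Since every expression in the core language is constant-free, each $\vec p^k[l]$ is a non-zero polynomial of total degree at least $1$. Letting $d_k$ denote the largest total degree of any monomial in $\vec p^k[i]$ (with $\tau$ counted), an inductive choice of such a monomial yields the recurrence $d_{k+1} \geq a \cdot d_k + \deg q_0$, starting from $d_1 \geq \deg m \geq 2$. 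When $a \geq 2$ this drives $d_k$ to infinity at least like $2^k$; when $a = 1$ and $\deg q_0 \geq 1$, it still grows linearly in $k$.

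The contradiction with polynomial boundedness is then immediate. By assumption, there is a fixed polynomial $P_i$, say of degree $D$, with $y_i \leq P_i(\vec x)$ along every trace over $\setS$. Choose $k$ large enough that $d_k > D$, and set $\vec x = (N, \dots, N)$ and $t = N$ in the realizability lower bound for $\vec p^k$: this gives $c_k N^{d_k} \leq P_i(N, \dots, N) = O(N^D)$, which fails for $N$ sufficiently large. The main obstacle I anticipate is the case $a = 1$ in the degree recurrence, where $q_0$ may involve variables that are not themselves self-dependent in $\vec p$; one might worry that substitution into $q_0(\vec p^k)$ could collapse the degree. This is avoided cleanly by invoking only the minimal property that each $\vec p^k[l]$ has positive total degree, rather than tracking which variables are self-dependent.
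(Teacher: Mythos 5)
Your overall strategy is the same as the paper's: iterate the realizing pattern and show that the offending monomial forces super-polynomial growth, contradicting polynomial boundedness. The paper does this directly on concrete values (each repetition of a realizing trace multiplies the value of $\X_i$ by a factor that can be made at least $2$ once the starting values are large enough, treating the $\tau$-free and $\tau$-dependent cases separately), whereas you route the argument through Lemma~\ref{lem:comp-realizable} and a degree recurrence on the iterates ${\vec p}^{(k)}$. That packaging is legitimate and pleasantly uniform, but it rests on a step that is not justified as written.

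The weak point is the claim that each ${\vec p}^{(k)}[l]$ is a non-zero polynomial of positive total degree, which you defend by noting that core-language expressions are constant-free. That property holds for the transitions in $\setS$, but $\vec p$ is an \emph{arbitrary realizable element of} $\tppol$, not a core-language expression: realizability is a lower-bound condition and is preserved when components are decreased (the paper notes this monotonicity explicitly), so a realizable $\vec p$ may well have ${\vec p}[j]=0$ for some $j$. If the cofactor $q_0$ of your bad monomial depends on such an $x_j$, then $q_0({\vec p}^{(k)})=0$, the tracked contribution $({\vec p}^{(k)}[i])^a\, q_0({\vec p}^{(k)})$ vanishes, and the recurrence $d_{k+1}\ge a\,d_k+\deg q_0$ fails---precisely in the case you flag as delicate. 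The paper sidesteps this by arguing on concrete states: since the transitions of $\setS$ are constant-free with non-negative coefficients, every variable stays above any chosen threshold along the run, so the factor $\mathfrak m$ of the monomial $\mathfrak m x_i$ is large at every intermediate state regardless of what lower bound $\vec p$ asserts for the other variables. To repair your version, either make that concrete-state argument for the variables occurring in $q_0$, or restrict attention to the MPs actually produced by the algorithm, whose components are all non-empty sums of degree-$\ge 1$ monomials. A second, smaller slip: polynomial boundedness of an SDL cannot mean $y_i\le P_i(\vec x)$ uniformly over all traces (already $\X_1\pass\X_1+\X_2$ iterated $t$ times violates that); the bound must be a polynomial in $\vec x$ \emph{and} the trace length. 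Your contradiction survives this correction, since with $t=N$ the trace length is $O_k(N)$, but the premise should be stated that way.
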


\begin{proof} Suppose that $\vec p$ has such a monomial, then it has the form ${\mathfrak m} x_i$.
If $\mathfrak m$ is $\tau$-free, then
 starting with a state in which all variables have
values greater than 1,
and repeatedly executing a trace that realizes $\vec p$,
we clearly obtain an exponential growth of the value in $\X_i$, contradicting the assumption.
In the case that $\tau$ occurs in $\mathfrak m$, suppose that $\vec p$ is realized by $\pi;c$. Choose $t \ge \max(c^{-1}, 2)$.
Then repeating the trace $\pi(t)$ creates an exponential growth in $\X_i$.
\end{proof}

\begin{lem}%
\label{lem:nearly-dag}
Suppose that ${\vec p}\in \tppol$ is realizable, and $\alpha({\vec p})$ is
idempotent. Assuming that the loop under analysis is polynomially bounded,
 $G({\vec p})$ does not have any simple cycle longer than one arc. In other words,
it consists of a directed acyclic graph (DAG) plus some self-loops.
\end{lem}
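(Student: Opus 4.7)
The plan is to argue by contradiction: suppose $G({\vec p})$ contains a simple cycle $C\colon i_1 \to i_2 \to \cdots \to i_k \to i_1$ with $k \geq 2$, and show that $\setS$ cannot then be polynomially bounded, contradicting the hypothesis. I will first extract strong structural consequences from the idempotence of $\alpha({\vec p})$. For each arc $i_j \to i_{j+1}$ (indices read modulo $k$), pick any monomial $M_j = m_j\, x_{i_j}^{a_j}$ of ${\vec p}[i_{j+1}]$ divisible by $x_{i_j}$; such a monomial exists by the definition of an arc of $G({\vec p})$. Following the cycle once in the composition ${\vec p}^{\acirc(k)}$ --- start at $M_k \in {\vec p}[i_1]$, within the substituted $\vec p[i_k]$ pick the monomial $M_{k-1}$, and continue in this way --- produces a monomial of ${\vec p}^{\acirc(k)}[i_1]$ whose variable content is $x_{i_1}^{a_1 a_2 \cdots a_k}$ times substituted copies of each $m_j$. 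Idempotence gives ${\vec p}^{\acirc(k)}[i_1] = {\vec p}[i_1]$, so this monomial actually occurs in ${\vec p}[i_1]$; but by Lemma~\ref{obs:no-doublin} the only monomial of ${\vec p}[i_1]$ divisible by $x_{i_1}$ is $x_{i_1}$ itself, so the cycle-traversal monomial equals $x_{i_1}$. Since the core language has no numeric constants, every monomial of every ${\vec p}[l]$ has degree at least one, so the substituted copies of the $m_j$'s can collapse to the trivial factor only if each $m_j$ was itself $1$; matching degrees then forces $a_1 = \cdots = a_k = 1$. Consequently every cycle arc is witnessed by the bare linear monomial $x_{i_j} \in {\vec p}[i_{j+1}]$, and ${\vec p}[i_1]$ contains $x_{i_1}$ itself, i.e.\ $i_1$ carries a self-loop. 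Starting the traversal at each cycle node in turn shows that every $i_j$ carries a self-loop.

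Second, I convert this structure into super-polynomial growth via realizability. Let $\pi;c$ realize ${\vec p}$; instantiating $\tau = 1$ if needed (this preserves the cycle's linear monomials) we may assume $\pi$ is star-free. Fix the input $\vec x = (1, \ldots, 1)$ and build concrete states $\vec z^{(m)}$ iteratively: set $\vec z^{(0)} = \vec x$ and let $\vec z^{(m+1)}$ be the endpoint of a trace in $L(\pi)$ started from $\vec z^{(m)}$. For every $j$, the monomials $x_{i_{j+1}}$ (self-loop) and $x_{i_j}$ (cycle arc) in ${\vec p}[i_{j+1}]$ are linear, hence lie in $\unimono{{\vec p}[i_{j+1}]}$ and carry coefficient $1$ under $c\mnl$; realizability therefore gives
\[
  z^{(m+1)}_{i_{j+1}} \;\geq\; z^{(m)}_{i_{j+1}} + z^{(m)}_{i_j}.
\]
Summing over the $k$ cycle indices yields $S^{(m+1)} \geq 2\,S^{(m)}$ with $S^{(0)} = k$, so some cycle variable reaches value at least $2^m/k$ along the concatenated trace of length $m\,|\pi| = O(m)$. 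This super-polynomial dependence on trace length, with the input held constant, contradicts the polynomial boundedness of $\setS$.

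The hard part is the monomial bookkeeping in the first paragraph: arguing carefully that the cycle-traversal monomial really does survive into ${\vec p}^{\acirc(k)}[i_1]$ with $x_{i_1}$ occurring, and then squeezing from Lemma~\ref{obs:no-doublin}, together with the absence of numeric constants in the language, both the linearity of every cycle arc and the existence of a self-loop at every cycle node. Once that structure is in hand, the doubling recurrence in the second paragraph is routine.
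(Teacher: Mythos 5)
Your proof is correct and rests on the same three pillars as the paper's: idempotence of $\alpha({\vec p})$ to identify the monomials of the iterate ${\vec p}^{(k)}$ with those of ${\vec p}$ itself, Lemma~\ref{obs:no-doublin} to constrain the monomials divisible by a cycle variable, and realizability to convert the resulting ``doubling'' into genuine exponential growth contradicting polynomial boundedness. The execution differs in two respects. First, you extract a stronger structural conclusion (every cycle arc is witnessed by the bare monomial $x_{i_j}$ and every cycle node carries a self-loop), which lets you run a clean doubling recurrence on the \emph{sum} of the cycle variables; the paper instead localizes the doubling at a single variable $x_i$, showing only that ${\vec p}^{(r)}[i] = {\vec p}^{(r-1)}[i] + (q\circ{\vec p}^{(r-1)})$ contains two occurrences of $x_i$, and concludes exponential growth directly from that. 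Second, your structural step leans on the fact that no component of ${\vec p}$ contains a monomial of degree zero in the program variables (otherwise the substituted copies of the $m_j$ could collapse to the trivial factor without each $m_j$ being trivial). This is true of every MP the algorithm ever manipulates---the core language has no constants, and composition and generalization preserve the property---but it does not follow from the stated hypotheses (realizability plus idempotence) alone, so it should be recorded as an explicit standing assumption; the paper's more economical argument does not need it. With that caveat made explicit, your proof is complete, and the doubling recurrence in your second paragraph is, if anything, a tidier way to land the final contradiction than the paper's step~(4).
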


\begin{proof}
Assume, to the contrary, that $G({\vec p})$ does have a cycle $i\to \dots \to k \to i$, where $k\ne i$. Let $r$ be the length of the cycle. Then
\begin{enumerate}
\item\label{stp1}
 ${\vec p}^{(r)}[i]$ depends on $x_i$; we can write ${\vec p}^{(r)}[i] = x_i + p(\vec x, \tau)$ (the occurrence of $x_i$ must be in a linear monomial
with a concrete coefficient of 1, otherwise iteration would cause exponential growth).
By the remark after Definition~\ref{def:itcomp}, $\alpha({\vec p}^{(r)})[i] = ({(\alpha({\vec p}))}^{\acirc (r)})[i]$,
which, by idempotence, equals
$\alpha({\vec p})[i]$. Reading backwards, $\alpha({\vec p}[i]) = \alpha({\vec p}^{(r)})[i] = \alpha(x_i) + p(\vec x, \tau)$, so
${\vec p}[i]$ also has the form $x_i + q(x,\tau)$ (where $q$ may differ from $p$).
\item By assumption, ${\vec p}[i]$ depends on $x_k$. Since $k\ne i$, we conclude that $q$ depends on $x_k$.
Also by the assumption of the cycle in $G({\vec p})$,  ${\vec p}^{(r-1)}[k]$ depends on $x_i$.
This shows that $q\circ {\vec p}^{(r-1)}$ depends on $x_i$.
\item
Now, \[
  {\vec p}^{(r)}[i] = ({\vec p}\circ {\vec p}^{(r-1)})[i] =  {\vec p}[i]\circ {\vec p}^{(r-1)} =
       (x_i + q)\circ {\vec p}^{(r-1)} = {\vec p}^{(r-1)}[i] + (q\circ {\vec p}^{(r-1)})  \,.
\]
We argue that the last expression has at least \emph{two} occurrences of $x_i$.
First, by the same token as (\ref{stp1}), $ {\vec p}^{(r-1)}[i]$ has an occurrence of $x_i$.  Secondly,
$q\circ {\vec p}^{(r-1)}$ also has one.
\item Thus ${\vec p}^{(r)}$ generates exponential growth when iterated, after all.
\end{enumerate}
We conclude that a cycle as assumed cannot exist.
\end{proof}

The realizability lemmas are intended to be applied under the assumption of Theorem~\ref{thm:sdlp}, namely
 that our loop is polynomially bounded; therefore we can rely on the properties guaranteed by Lemmas~\ref{obs:no-doublin} and~\ref{lem:nearly-dag}.
We focus on a particular idempotent realizable AMP $\vec p$.
Then $G({\vec p})$, with self-loops removed, is a DAG\@. We assume, w.l.o.g., that the variables
are indexed in an order consistent with $G({\vec p})$, so that if $x_i$ depends on $x_j$ then $j\le i$.
We shall refer to an (abstract) MP satisfying the properties in Lemma~\ref{obs:no-doublin}, Lemma~\ref{lem:nearly-dag} and with variables
so re-indexed (if necessary) as being \emph{neat}.\footnote{Readers who like linear algebra may draw some intuition about neat MPs
from thinking about triangular matrices whose diagonal elements are in $\{0,1\}$.}

\begin{defi}
${\vec p}\in \tppol$ is called \emph{neat} if  (a) $G({\vec p})$, with self-loops removed, is a DAG;\@
(b) for all $i$,  ${\vec p}[i]$  has no monomial divisible by $x_i$ except (possibly) $x_i$.
\end{defi}

Note that $\vec p$ is not required to be idempotent; neat MPs that are not necessarily idempotent come up in the upper-bound proof.
In the rest of this subsection we study neat idempotent AMPs,
establishing (as Corollary~\ref{cor:lb-nsd} below) a property important for the subsequent realizability lemma.
This property involves the $n$th iterate ${\vec p}^{(n)}$.
It is not hard to prove that $({\vec p}^{(n)})(s') \ge  {\vec p}(s)$ if
$s'[i] \ge s[i]$ for all $i$. We skip the proof, however, since
we can establish a sharper result, for states $s$, $s'$ where $s'[i]\ge s[i]$ is only asserted for $i\in \sd{\vec p}$ (i.e., $x_i$ is self-dependent in $\vec p$).

We introduce the following notation: for ${\vec p}\in \tabsppol$, $\chi_{\vec p}(i)$ is $1$ if $i\in \sd{\vec p}$ and $0$ otherwise.

\begin{lem}%
\label{lem:nsd-induction}
Let ${\vec p}\in \tabsppol$ be idempotent and neat. Let $s,s'\in \nats^n$ be any state vectors such that
for $i\in \sd{\vec p}$, $s'[i] \ge s[i]$ (while for $i\notin \sd{\vec p}$ no relation is asserted).  Then for all $\ell\le n$, 
\( {\vec q}_{\ell} \eqdef {(c\mnl {\vec p})}^{(\ell)} \)
has the following property: for all $i\le \ell$, for all $t\ge 0$,
\[
  {\vec q}_{\ell}[i](s',t) \ge (c^{{(d+1)}^i}\mnl {\vec p})[i](s,t) + (s'[i]-s[i])\cdot \chi_{\vec p}(i),
\]
where $d$ is the maximum degree in $\vec p$.
\end{lem}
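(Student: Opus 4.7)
The plan is to induct on $i$, taking advantage of the DAG ordering on the variables (after self-loops are removed), which guarantees that $\vec p[i]$ depends only on $x_j$ with $j \le i$. For each $i$, I shall prove the inequality for every $\ell \in \{i,\dots,n\}$. The shape of the target exponent $(d+1)^i$ is dictated by Lemma~\ref{lem:mnl-properties}, which contributes a multiplicative factor of $d+1$ to the exponent of $c$ at each level of the DAG.

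For the base case $i = 1$, the DAG ordering plus neatness force $\vec p[1] = \chi_{\vec p}(1)\,x_1 + r_1(\tau)$, where $r_1$ depends only on $\tau$. A direct calculation then gives $\vec q_\ell[1] = \chi_{\vec p}(1)\,x_1 + \ell c\,r_1(\tau)$ when $1 \in \sd{\vec p}$, and $c\,r_1(\tau)$ otherwise. Since $c \ge c^{d+1}$ and the correction $(s'[1]-s[1])\,\chi_{\vec p}(1)$ is absorbed by the leading $x_1$ term, the claim follows in both cases.

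For the inductive step, write $\vec p[i] = \chi_{\vec p}(i)\,x_i + r_i$; by neatness $r_i$ has no monomial divisible by $x_i$, so (by the DAG ordering) $r_i$ depends only on $x_j$ for $j<i$ and on $\tau$. The recursion
\[
\vec q_\ell[i] \;=\; \chi_{\vec p}(i)\,\vec q_{\ell-1}[i] \;+\; (c \mnl r_i)\circ \vec q_{\ell-1}
\]
unrolls, when $\chi_{\vec p}(i)=1$, to $\vec q_\ell[i](s',t) = s'[i] + \sum_{k=0}^{\ell-1}(c \mnl r_i)(\vec q_k(s',t),t)$; the case $\chi_{\vec p}(i)=0$ is analogous but without the leading $s'[i]$. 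Since $\ell \ge i$, the summand at $k = i-1$ is available; the inductive hypothesis applied to every $j<i$, after a uniform weakening of the exponents to $(d+1)^{i-1}$ (valid because $c\le 1$), together with monotonicity of $r_i$ in its non-negative arguments, transfers a lower bound on each $\vec q_{i-1}[j]$ into the outer $(c \mnl r_i)$.

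The key algebraic manoeuvre combines both halves of Lemma~\ref{lem:mnl-properties}: part~1 pulls the outer $c\mnl$ across the composition at the price of one factor of $c$ in the exponent, while part~2 absorbs the inner $c^{(d+1)^{i-1}}\mnl$ at the price of $c^{(d+1)^{i-1}\deg r_i}$. Because $\deg r_i \le d$, the resulting exponent is at most $1 + (d+1)^{i-1}\,d \le (d+1)^i$, matching the target exactly. Idempotence of $\vec p$, in the form $\vec p \circ \vec p \ge \vec p \acirc \vec p = \vec p$, is then used to discard the residual inner composition and recover $r_i$ evaluated at $s$. The main obstacle I expect is the simultaneous bookkeeping of three things—the exponent on $c$, the propagation of each correction $(s'[j]-s[j])\,\chi_{\vec p}(j)$ through $r_i$ into coordinate $i$, and the need to keep the inequality at the level of $\tau$-polynomials rather than at the level of their numerical values—particularly in coordinating the subcases according to whether the monomials of $r_i$ are self-dependent or not, where idempotence must be invoked delicately so that the $s$-evaluated right-hand side is indeed dominated.
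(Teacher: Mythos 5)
Your proposal is correct and takes essentially the same route as the paper's proof: the paper inducts on $\ell$ and peels off a single composition ${\vec q}_{\ell}=(c\mnl{\vec p})\circ{\vec q}_{\ell-1}$ rather than inducting on $i$ and unrolling the whole sum, but both arguments rest on the identical ingredients---the triangular structure of a neat MP, the split ${\vec p}[i]=x_i+r_i$, Fact~\ref{lem:inductive-mnl} to obtain the exponent ${(d+1)}^{i}$, and idempotence together with Lemma~\ref{lem:circ-vs-acirc} to pass from $r_i\circ{\vec p}$ back to $r_i$ evaluated at $s$. (One cosmetic slip: when $\chi_{\vec p}(i)=0$ the unrolled expression is the single term $(c\mnl r_i)\circ{\vec q}_{\ell-1}$ rather than a sum, but the estimate is unaffected.)
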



\noindent
The following fact is useful for proving the lemma; it follows from Lemma~\ref{lem:mnl-properties}.

\begin{fact}%
\label{lem:inductive-mnl}
Let ${\vec p} \in \tppol$, ${\vec q}\in \tppol$.  Suppose that for all  $i<\ell$, we have ${\vec q}[i] \ge c^{{(d+1)}^i} \mnl {\vec p}[i]$, were $d$
is the maximum degree in $\vec p$.
And suppose that $r$ is a $\tau$-polynomial depending only on $x_1,\dots,x_{\ell-1}$, also of degree at most $d$.
Then $(c\mnl r) \circ {\vec q} \ge c^{{(d+1)}^\ell}\mnl (r\circ {\vec p})$.
\end{fact}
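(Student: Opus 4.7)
The plan is to combine the two inequalities of Lemma~\ref{lem:mnl-properties} with monotonicity, sandwiching the hypothesis about ${\vec q}[i]$ in the middle. Concretely, I would first peel off the outer $c\mnl$ using the first part of Lemma~\ref{lem:mnl-properties}, obtaining
\[
 (c\mnl r)\circ {\vec q} \ge c\mnl (r\circ {\vec q}).
\]
Then I would replace ${\vec q}$ inside by a uniform lower bound, using the hypothesis. Since $r$ only mentions $x_1,\dots,x_{\ell-1}$, only ${\vec q}[1],\dots,{\vec q}[\ell-1]$ are involved, and among the hypothesized constants $c^{(d+1)^i}$ with $i<\ell$, the weakest (smallest, since $c\le 1$) is $c^{(d+1)^{\ell-1}}$. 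Setting $c' = c^{(d+1)^{\ell-1}}$, monotonicity of polynomials in non-negative variables gives ${\vec q}[i] \ge c'\mnl {\vec p}[i]$ uniformly for $i<\ell$, hence $r\circ {\vec q} \ge r\circ (c'\mnl {\vec p})$.

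Next I would apply the second part of Lemma~\ref{lem:mnl-properties} to this inner composition:
\[
 r\circ (c'\mnl {\vec p}) \ge (c')^{\deg r}\mnl (r\circ {\vec p}) \ge c^{d(d+1)^{\ell-1}}\mnl (r\circ {\vec p}),
\]
using $\deg r \le d$. Chaining everything and observing that $c\mnl$ and $c''\mnl$ compose on the same polynomial by multiplying their constants (linear monomials pass through either operation untouched), I get
\[
 (c\mnl r)\circ {\vec q} \;\ge\; c^{1 + d(d+1)^{\ell-1}}\mnl (r\circ {\vec p}).
\]

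The remaining step is the combinatorial check that the exponent works out. Using $c\le 1$, it suffices to verify $1 + d(d+1)^{\ell-1} \le (d+1)^\ell$. Expanding the right side as $(d+1)\cdot (d+1)^{\ell-1} = d(d+1)^{\ell-1} + (d+1)^{\ell-1}$, this reduces to $1 \le (d+1)^{\ell-1}$, which holds for $\ell \ge 1$. The main obstacle, such as it is, is purely bookkeeping: keeping track of how the $c\mnl$ operator interacts with composition (including the subtle point that applying $c\mnl$ twice on the same polynomial yields $(cc')\mnl$, because linear monomials are preserved by either action), and making sure one uses the weakest of the several $c^{(d+1)^i}$ bounds available from the hypothesis so that a single application of the second part of Lemma~\ref{lem:mnl-properties} suffices.
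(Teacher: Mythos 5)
Your derivation is correct and is exactly the route the paper intends: the paper offers no explicit proof of this Fact beyond the remark that it ``follows from Lemma~\ref{lem:mnl-properties},'' and your chain---part~(1) of that lemma to peel off the outer $c\mnl$, the weakest hypothesis constant $c^{(d+1)^{\ell-1}}$ pushed through $r$ by monotonicity, part~(2) with $\deg r\le d$, and the check $1+d(d+1)^{\ell-1}\le(d+1)^{\ell}$---supplies the missing details in the expected way. The only caveat is that chaining the two parts requires reading the intermediate inequalities monomial-wise (so that applying $c\mnl$ again just multiplies coefficients), a point you flag and which the paper itself treats at the same level of informality.
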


\begin{proof}[Proof of Lemma~\ref{lem:nsd-induction}]
We use induction on $\ell$. Since $\vec p$ is neat we know that ${\vec p}[1] = x_1$.
Thus
\[
  (c\mnl{\vec p})[1](s',t) = s'[1]  = s[1] + (s'[1]-s[1])\cdot \chi_{\vec p}(1)  = c^{d+1}\mnl{\vec p}[1](s,t) + (s'[1]-s[1])\cdot \chi_{\vec p}(1)  .
\]

Next, for  $\ell > 1$, consider ${\vec q}_{\ell} = (c\mnl {\vec p}) \circ {{\vec q}_{\ell-1}}$.
If $i\le \ell$ and $x_i$ is not self-dependent in $\vec p$, then ${\vec p}[i]$ only depends on variables $x_j$ with $j<i$, so:
\begin{align*}
  {\vec q}_{\ell}[i](s',t) &= (c\mnl {\vec p}[i] \circ {{\vec q}_{\ell - 1}})(s',t) \\
 & \ge (c\mnl {\vec p}[i] \circ c^{{(d+1)}^{i-1}}\mnl {\vec p})(s,t) && \text{by IH} \\
 & \ge (c^{{(d+1)}^i} \mnl ({\vec p}[i] \circ {\vec p}))(s,t) && \text{by Fact~\ref{lem:inductive-mnl}}\\
 &  \ge (c^{{(d+1)}^i}\mnl {\vec p}[i])(s,t)  && \text{by idempotence} \\
&= (c^{{(d+1)}^i}\mnl {\vec p}[i])(s,t)  + (s'[i]-s[i])\cdot \chi_{\vec p}(i)
\end{align*}
where the next-to-last step uses the idempotence of $\vec p$ in $\tabsppol$ and Lemma~\ref{lem:circ-vs-acirc}.

If $x_i$ is self-dependent in $\vec p$, then  ${\vec p}[i] = x_i + r(\vec x, \tau)$ where $r$ only depends on variables $x_j$ with $j<i$. Now by IH,
and Fact~\ref{lem:inductive-mnl}:
\begin{align*}
  {\vec q}_{\ell}[i](s',t) &= (c\mnl {\vec p}[i] \circ {{\vec q}_{\ell - 1}})(s',t) \\
&= {{\vec q}_{\ell - 1}}[i](s',t) + (c\mnl r \circ {{\vec q}_{\ell - 1}})(s',t) \ge s'[i] + (c\mnl r \circ (c^{{(d+1)}^{i-1}}\mnl {\vec p}))(s,t)  \\
&=  s[i] + (s'[i]-s[i])\cdot \chi_{\vec p}(i)   + (c\mnl r \circ (c^{{(d+1)}^{i-1}}\mnl {\vec p}))(s,t)  \\
&= (c\mnl {\vec p}[i] \circ (c^{{(d+1)}^{i-1}}\mnl {\vec p}))(s,t) + (s'[i]-s[i])\cdot \chi_{\vec p}(i)  \\
&\ge c^{{(d+1)}^i} \mnl ( {\vec p}[i] \circ {\vec p})(s,t) + (s'[i]-s[i])\cdot \chi_{\vec p}(i)  \\
&\ge  (c^{{(d+1)}^i}\mnl {\vec p}[i])(s,t) + (s'[i]-s[i])\cdot \chi_{\vec p}(i)  \,. \qedhere
\end{align*}
\end{proof}

\noindent
Letting $\ell=n$ in this lemma we obtain:

\begin{cor}%
\label{cor:lb-nsd}
 Let ${\vec p}\in \tabsppol$ be idempotent and neat.   Let $s,s'\in \nats^n$ be any state vectors such that
for $i\in \sd{\vec p}$, $s'[i] \ge s[i]$.  Then, for all $i\le n$ and all $t$, ${(c\mnl {\vec p})}^{(n)}[i](s',t) \ge  (c^{{(d+1)}^n}\mnl {\vec p})[i](s,t)+ (s'[i]-s[i])\cdot \chi_{\vec p}(i)$.
\end{cor}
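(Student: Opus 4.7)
The corollary is essentially a uniformization of Lemma~\ref{lem:nsd-induction} at its largest admissible value of $\ell$. The plan is to instantiate the lemma with $\ell = n$ and then observe that the per-index exponents $(d+1)^i$ arising there can all be replaced by the single, larger exponent $(d+1)^n$ without invalidating the inequality.

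Concretely, applying Lemma~\ref{lem:nsd-induction} with $\ell = n$ gives, for every $i \le n$ and every $t \ge 0$,
\[
{(c\mnl{\vec p})}^{(n)}[i](s',t) \;\ge\; (c^{(d+1)^i}\mnl{\vec p})[i](s,t) + (s'[i]-s[i])\cdot \chi_{\vec p}(i).
\]
Since $i \le n$ we have $(d+1)^i \le (d+1)^n$, and because $0 < c \le 1$ this forces $c^{(d+1)^n} \le c^{(d+1)^i}$. It remains to convert this inequality between scalar coefficients into an inequality between the values $(c^{\ast}\mnl{\vec p})[i](s,t)$ for the two exponents.

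For that, one notes that $c\mnl q$, viewed as a function of $c \in (0,1]$, is pointwise nondecreasing in $c$ whenever it is evaluated on a state in $\nats^n$ (and on $t \in \nats$): only the nonlinear monomials $\nunimono{\vec p}[i]$ are scaled by $c$, the linear part $\unimono{\vec p}[i]$ being untouched, and all monomials evaluate to nonnegative values. Hence $(c^{(d+1)^i}\mnl{\vec p})[i](s,t) \ge (c^{(d+1)^n}\mnl{\vec p})[i](s,t)$, and chaining this with the displayed inequality above yields exactly the statement of the corollary.

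The argument is almost entirely routine once Lemma~\ref{lem:nsd-induction} is in hand; the only non-automatic point is the monotonicity remark about $c \mnl q$, but that is immediate from the definition $c\mnl q = \unimono{q} + c\,\nunimono{q}$ together with the fact that all polynomials involved have nonnegative coefficients and are evaluated on nonnegative arguments. I anticipate no obstacle beyond making these two small observations explicit.
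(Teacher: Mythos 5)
Your proposal is correct and matches the paper's approach: the paper derives the corollary simply by ``letting $\ell=n$'' in Lemma~\ref{lem:nsd-induction}, leaving implicit the routine weakening of the constant from $c^{(d+1)^i}$ to $c^{(d+1)^n}$ that you spell out via the monotonicity of $c\mnl q$ in $c$ on nonnegative arguments. Making that small step explicit is harmless and arguably an improvement in rigor.
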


\subsubsection{Realizability lemma for generalization}

This is what we have been preparing for.

\begin{lem}%
\label{lem:tau-realizable}
Let $\vec p$ be a realizable, idempotent and neat abstract $\tau$-MP\@.
Then ${\vec p}^\tau$ is realizable.
\end{lem}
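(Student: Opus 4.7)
The plan is to realize ${\vec p}^\tau$ by taking the pattern $\pi;c$ that realizes $\vec p$ and wrapping it with a star. In the principal case, where $\pi$ is star-free, I would set $\pi' = \pi^\ast$; at parameter $t$ this generates the trace $\pi(1)^t$, executing the $\vec p$-realizing code block $t$ times in sequence. Iterating the realization inequality for $\vec p$ and using monotonicity of polynomials over $\nats$, the resulting final state $\vec y$ satisfies $\vec y \ge {(c\mnl \vec p)}^{(t)}(\vec x)$.

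The crux is then to show that ${(c\mnl \vec p)}^{(t)}(\vec x)$ dominates ${\vec p}^\tau(\vec x,t)$ position-by-position, up to a uniform constant independent of $t$. For non-self-dependent $i$ this is immediate since ${\vec p}^\tau[i] = {\vec p}[i]$. For $i \in \sd{\vec p}$, I would split ${\vec p}[i]$ according to Definition~\ref{def:MPnotation} and track each piece across the iterations: the non-self-dependent remainder ${\vec p}[i]'''$ stabilizes after a single iteration by idempotence of $\vec p$; the $\tau$-free self-dependent part ${\vec p}[i]''$ is re-added to position $i$ on every iteration, accumulating to $\Theta(t)\cdot{\vec p}[i]''(\vec x)$ and matching the term $\tau{\vec p}[i]''$ of ${\vec p}^\tau$ at $\tau=t$; and the $\tau$-full term $\tau{\vec p}[i]'$ is reinjected per iteration, so its cumulative contribution is at least $\tau{\vec p}[i]'(\vec x,t)$ as required.

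The rigorous accounting goes through Corollary~\ref{cor:lb-nsd}. A naive step-by-step induction over $t$ iterations would produce a constant of the form $c^{(d+1)^t}$ in front of the non-linear monomials, which collapses to zero with growing $t$. The corollary sidesteps this by bundling $n$ iterations into a single step whose constant $c^{(d+1)^n}$ depends only on the program, while its additive carry-over $(s'[i]-s[i])\cdot\chi_{\vec p}(i)$ propagates the accumulated growth from one bundle to the next without further degradation. Partitioning the $t$ applications into roughly $t/n$ such bundles yields the desired $\Theta(t)$ accumulation at every self-dependent position, uniformly in $t$, and assembling the pieces gives a realizing constant $c' > 0$ for ${\vec p}^\tau$.

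The hardest part is precisely this constant management, which is why Corollary~\ref{cor:lb-nsd} rather than its pointwise predecessor Lemma~\ref{lem:nsd-induction} is essential here. A secondary technical wrinkle is that when $\vec p$ itself already carries $\tau$, the realizing $\pi$ must contain a star and the no-nested-stars restriction on patterns forbids $\pi^\ast$ directly; I would handle this separately by observing that in many such cases ${\vec p}[i]'' = 0$ (so ${\vec p}^\tau = \vec p$ and $\pi;c$ already realizes ${\vec p}^\tau$), and in the remaining cases by splicing the additional iterations into the star-free portion of $\pi$ so as to manufacture the missing linear accumulation without nesting stars.
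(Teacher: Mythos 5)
Your treatment of the principal ($\tau$-free) case is essentially sound and close in spirit to the paper's: the paper likewise stars a pattern to accumulate the $\tau$-free self-dependent increments ${\vec p}[i]''$, and likewise relies on Corollary~\ref{cor:lb-nsd} precisely to avoid the $c^{{(d+1)}^t}$ degradation you identify. The accounting differs in detail: rather than chaining roughly $t/n$ applications of the corollary, the paper runs a direct induction on $t$ showing $s_t[i]\ge s_0[i]+t(c\mnl{\vec p}[i]''(s_0))$ for $i\in\sd{\vec p}$ (this costs only a single factor $c$ per step, because ${\vec p}[i]''$ depends only on self-dependent, hence non-decreasing, variables), and then invokes Corollary~\ref{cor:lb-nsd} exactly once, on a terminal block of $n$ steps, to convert the accumulated gain into $c^{{(d+1)}^n}\mnl{\vec p}^\tau[i](s_0,t)$. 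Your repeated-bundling variant can be made to work, but you would still owe the reader the cases $t<n$ and the remainder when $n$ does not divide $t$, since realizability demands one uniform constant valid for every $t\ge1$.

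The genuine gap is the general case, where $\vec p$ carries $\tau$'s and hence the realizing pattern $\pi$ already contains stars. Your fallback is not a construction: ``splicing the additional iterations into the star-free portion of $\pi$'' is undefined when $\pi$ has no star-free portion (e.g.\ $\pi={\vec a}^*{\vec b}^*$), and in general altering $\pi$ internally changes what it realizes in ways you do not control; nor does the dichotomy on ${\vec p}[i]''=0$ reduce the problem, since the hard instances are exactly those with ${\vec p}[i]'\ne 0$ and ${\vec p}[i]''\ne 0$ simultaneously. The paper's key idea, which your proposal is missing, is the pattern ${(\pi(1))}^*\pi^n$: the starred prefix uses the star-free collapse $\pi(1)$, which realizes ${\vec p}[1/\tau]$ (still idempotent and neat) and still accumulates the $\tau$-free monomials ${\vec p}[i]''$ at each repetition, while the suffix $\pi^n$, whose internal stars are instantiated with $t$, restores the genuine $\tau$-monomials $\tau{\vec p}[i]'$ and the terms ${\vec p}[i]'''$ evaluated at $\tau=t$; Corollary~\ref{cor:lb-nsd} applied to that suffix then merges the two contributions without losing the prefix's accumulated gain. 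Without this (or an equivalent) construction the lemma is established only for $\tau$-free $\vec p$, which is not enough, because generalization is re-applied to $\tau$-MPs produced in earlier rounds of the algorithm.
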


Note that if $\vec p$ is $\tau$-closed, then ${\vec p}^\tau = {\vec p}$ and the statement is trivial.
But if $\vec p$ is not $\tau$-closed, then we have to construct a pattern that causes additive terms to accumulate in the self-dependent
variables in order to justify the replacement of sums  $x_i + {\vec p}[i]''$ by $x_i  + \tau{\vec p}[i]''$.
The interesting case is when $\vec p$ is not $\tau$-free:
then $\pi$ already has stars, however nested stars are not allowed in a pattern, so we cannot iterate $\pi$.
The solution is to use ${(\pi(1))}^*$.  However, in reducing $\pi$ to
the star-free $\pi(1)$ we pull the rug under $\tau$-monomials already present in $\vec p$. Therefore we use a pattern that includes both ${(\pi(1))}^*$ and
$\pi$;  actually, not $\pi$ but ${\pi}^n$, in order to use Corollary~\ref{cor:lb-nsd}. Note that realizability of $\vec p$ by $\pi$ means that
$\pi(1)$ realizes ${\vec p}[1/\tau]$ (which is $\vec p$ with all $\tau$'s erased). The following observation is useful:
\begin{obs}
If ${\vec p}\in \tabsppol$ is idempotent, so is ${\vec p}[1/\tau]$; and similarly for neat.
\end{obs}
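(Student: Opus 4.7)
The plan is to show that both operations at stake — abstract composition $\acirc$ and the structural conditions defining neatness — are \emph{unaffected} by substituting $1$ for $\tau$, so that each property transfers from $\vec p$ to $\vec p[1/\tau]$ by a direct syntactic argument.

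For idempotence, the key observation is that abstract composition $\vec q\acirc \vec r$ is defined by substituting the entries $\vec r[j]$ for the program variables $x_j$ appearing in $\vec q$; the variable $\tau$ is inert during this process. Therefore, for any $\vec q,\vec r\in\tabsppol$, the operations of ``collapse $\tau$ to $1$'' and ``compose'' commute, i.e., $(\vec q\acirc\vec r)[1/\tau] = \vec q[1/\tau]\acirc\vec r[1/\tau]$ (both sides produce, in the abstract setting, exactly the set of monomials obtained by performing the $x_j$-substitutions and then erasing $\tau$). Taking $\vec q=\vec r=\vec p$ and using $\vec p\acirc\vec p=\vec p$ gives $\vec p[1/\tau] \acirc \vec p[1/\tau] = (\vec p\acirc\vec p)[1/\tau] = \vec p[1/\tau]$, as required.

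For neatness, I would check the two clauses of the definition separately. The dependency graph $G(\vec q)$ records precisely which $x_i$'s occur in each $\vec q[j]$; since setting $\tau:=1$ only erases occurrences of $\tau$ and does not remove any $x_i$ factor, a variable $x_i$ appears in $\vec p[1/\tau][j]$ if and only if it appears in $\vec p[j]$. Hence $G(\vec p[1/\tau]) = G(\vec p)$, and clause~(a) of neatness is inherited verbatim. For clause~(b), every monomial $\concretemonomial$ of $\vec p[1/\tau][i]$ is the image of at least one monomial $\concretemonomial'$ of $\vec p[i]$ under the substitution $\tau\mapsto 1$; divisibility by $x_i$ is unaffected by this substitution, so $\concretemonomial$ is divisible by $x_i$ iff $\concretemonomial'$ is. By neatness of $\vec p$, the only such $\concretemonomial'$ is $x_i$ itself, which maps to $x_i$; so the only monomial of $\vec p[1/\tau][i]$ divisible by $x_i$ is $x_i$.

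There is essentially no obstacle here; the only point requiring a moment's care is that in the abstract setting distinct monomials of $\vec p[i]$ can fuse when $\tau$ is set to $1$ (for instance $x_j$ and $\tau x_j$ both yield $x_j$), so ``preimage under $\tau\mapsto 1$'' is not single-valued. This fusion is harmless for both clauses — it can only identify monomials, never introduce new $x_i$-dependencies nor new monomials divisible by $x_i$ — and it is also the one place where the argument depends on the abstract (coefficient-free) semantics rather than treating $[1/\tau]$ as a purely syntactic rewrite.
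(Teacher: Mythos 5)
Your proposal is correct; the paper states this observation without any proof, treating it as self-evident, and your argument --- that $[1/\tau]$ commutes with abstract composition (so idempotence transfers) and that erasing $\tau$'s changes neither the dependency graph nor which monomials are divisible by $x_i$ (so neatness transfers) --- is exactly the routine verification the paper leaves implicit. Your side remark about monomials fusing under $\tau\mapsto 1$ correctly identifies the only point needing any care, and you handle it properly.
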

\newcommand{\pmt}{\ensuremath{{\vec p}[1/\tau]}}

\begin{proof}[Proof of Lemma~\ref{lem:tau-realizable}]
 Let $\pi ; c$ realize $\vec p$. Note that $\pi$ will include stars iff $\vec p$ includes $\tau$'s. We are handling the general case so we treat
 $\vec p$ as a $\tau$-MP;\@ the arguments also apply to the case that it is $\tau$-free.
So, our assumption is that:
\begin{equation} \label{eq:ass2}
\text{if }  \tseq{\vec{x}}{{\pi(t)}}{\vec{y}} \text{   then  } \vec y \ge c\mnl {\vec p}(\vec x, t) \,.
\end{equation}

We will show that ${\vec p}^\tau$ is realized by the pattern ${(\pi(1))}^* \pi^n$.
Now,  traces in $L({(\pi(1))}^{*} \pi^n )$ have the form $\sigma_t =  {(\pi(1))}^t \pi^n(t)$.
We look at concrete traces \[\tseq{s_{0}}{\pi(1)}{s_1}\tseq{\cdots}{\pi(1)}{s_{t}}\tseq{\:}{\pi^n(t)}{s_{t+1}} \,.\]

We first prove by induction on $t$,
for $x_i$  self-dependent in $\vec p$, and $t\ge 0$,
\begin{equation} \label{eq:add-t}
s_{t}[i] \ge  s_{0}[i]  + t (c\mnl {\vec p}[i]''(s_0))
\end{equation}
(note that the omission of the $\tau$ parameter in ${\vec p}[i]''(s_0)$ reminds us that ${\vec p}[i]''$ is $\tau$-free).
For $t=0$,
the statement is trivial.
For the inductive case,
\smallskip
\[
\begin{alignedat}{2}
s_{t}[i] &\ge c\mnl {\vec p}[i](s_{t-1}, 1) &\quad&\  \notag\\
&\ge s_{t-1}[i] +  c\mnl ( {\vec p}[i]'(s_{t-1}, 1) + {\vec p}[i]''(s_{t-1}) + {\vec p}[i]'''(s_{t-1},1)) & 
                 &  \text{by structure of $\vec p[i]$} \notag\\
&\ge s_{t-1}[i] +   c\mnl {\vec p}[i]''(s_{t-1}) \notag\\
&\ge s_0[i] +  (t-1)(c\mnl {\vec p}[i]''(s_0)) + c\mnl {\vec p}[i]''(s_{t-1}) & & \hspace{-1cm} \text{by IH} \notag\\
&\ge s_0[i] + (t-1) (c\mnl {\vec p}[i]''(s_0))
    +  c\mnl {\vec p}[i]''(s_0) &
             &   \hspace{-1cm}\text{(s.d.~variables are non-decreasing)} \notag\\
&\ge s_0[i] + t {\vec p}[i]''(s_0) \notag
\end{alignedat}
\]
\noindent
Now, for $i\in \sd{\vec p}$,
\begin{alignat*}{2}
 s_{t+1}[i] &\ge {(c\mnl {\vec p})}^{(n)}[i](s_{t}, t) &\quad& \\
 &\ge (c^{{(d+1)}^n}\mnl {\vec p})[i](s_0,t) + (s_t[i]-s_0[i]) && {\text{by Cor.~\ref{cor:lb-nsd}}} \\
  & = s_{0}[i] + c^{{(d+1)}^n}\mnl (t{\vec p}[i]' (s_{0},t) + {\vec p}[i]''(s_{0}) + {\vec p}[i]'''(s_{0}, t) )  + (s_t[i]-s_0[i]) && \text{by Def.~\ref{def:MPnotation}} \\ 
  & \ge s_{0}[i] +  c^{{(d+1)}^n}\mnl (t{\vec p}[i]' (s_{0},t) + {\vec p}[i]''(s_{0}) + {\vec p}[i]'''(s_{0}, t) ) + t {\vec p}[i]''(s_0) && \text{by~\eqref{eq:add-t}} \\ 
 & \ge s_{0}[i] +  c^{{(d+1)}^n}\mnl (t{\vec p}[i]' (s_{0},t) + t{\vec p}[i]''(s_{0}) + {\vec p}[i]'''(s_{0}, t) ) \\ 
  &  = c^{{(d+1)}^n}\mnl {\vec p}^\tau[i](s_0, t) \,.
 \end{alignat*}
\noindent
For $i\notin \sd{\vec p}$,
\begin{alignat*}{2}
 s_{t+1}[i] &\ge {(c\mnl {\vec p})}^{(n)}[i](s_{t}, t) &\quad& \\
 &\ge (c^{{(d+1)}^n}\mnl {\vec p})[i](s_0,t) + (s_t[i]-s_0[i])\cdot \chi_{\vec p}(i) && {\text{by Cor.~\ref{cor:lb-nsd}}} \\
 & = (c^{{(d+1)}^n}\mnl {\vec p})[i](s_0,t) \,.
 \end{alignat*}
\noindent
We conclude that the pattern ${(\pi(1))}^*\pi^n$ realizes ${\vec p}^\tau$.
\end{proof}

\noindent
We can now complete our lower-bound correctness proof:
\begin{thm}%
\label{thm:SDL-lb}
Consider any set of MPs $\setS$.
Then the set of $\tau$-MPs returned by \procSDL{}$(\setS)$ is realizable over $\setS$.
\end{thm}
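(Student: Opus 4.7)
My plan is to prove the theorem by induction on the stages of \procSDL, showing that the set $T$ maintained by the algorithm consists only of realizable abstract $\tau$-MPs at every iteration. Since the output $\setB$ is the fixed-point value of $T$, this will immediately yield the theorem. Throughout, I rely on the polynomial-boundedness precondition ensured by the preprocessing step described in Section~\ref{sec:solve}; this is what will justify invoking Lemmas~\ref{obs:no-doublin} and~\ref{lem:nearly-dag} inside the inductive step.

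For the base case, $T = \alpha(\setS)$ and every element is realized by the corresponding singleton pattern. For the inductive step I consider one iteration of the outer repeat-loop. The closure sub-step replaces $T$ by $\closure{T}$; since $\idppol$ is realized by the empty pattern and realizability is preserved by abstract composition (Lemma~\ref{lem:acomp-realizable}, which rests on Lemmas~\ref{lem:comp-realizable} and~\ref{lem:circ-vs-acirc}), the closure corollary stated just before Section~6.1.2 guarantees that all members of $\closure{T}$ remain realizable.

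The generalization sub-step is where the real work lies, and it is the step I expect to require the most care in presentation. For each idempotent ${\vec p}\in T$ the algorithm adds ${\vec p}^\tau$, and by the inductive hypothesis $\vec p$ is realizable; to apply Lemma~\ref{lem:tau-realizable} it remains to verify that $\vec p$ is neat. Under polynomial boundedness this is precisely what Lemmas~\ref{obs:no-doublin} and~\ref{lem:nearly-dag} deliver: the former supplies condition~(b) of neatness (no monomial divisible by $x_i$ in ${\vec p}[i]$ other than $x_i$), and the latter supplies condition~(a) (dependency DAG modulo self-loops). Since realizability is invariant under permutation of variable indices, the reindexing convention used in the supporting proofs does not disturb the induction. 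With $\vec p$ realizable, idempotent, and neat, Lemma~\ref{lem:tau-realizable} yields the realizability of ${\vec p}^\tau$, closing the induction. The heavy lifting has been done by the preceding lemmas; what remains is simply to align the induction with the control structure of \procSDL and to observe that each preservation lemma matches exactly one construction performed by the algorithm.
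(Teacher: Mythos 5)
Your proof is correct and follows essentially the same route as the paper's: an induction over the constructions performed by \procSDL, with composition steps handled by Lemma~\ref{lem:comp-realizable} (via Lemma~\ref{lem:acomp-realizable}) and generalization steps by Lemma~\ref{lem:tau-realizable}. The paper's own proof is a two-line appeal to exactly these lemmas; you additionally spell out the verification that idempotent elements are neat under the polynomial-boundedness precondition, which the paper delegates to the surrounding discussion of Lemmas~\ref{obs:no-doublin} and~\ref{lem:nearly-dag} rather than to the proof itself.
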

\begin{proof}
This follows  by induction on the number of operations used to construct each resulting $\tau$-MP, being either composition steps or generalization
steps (justified, respectively, by Lemma~\ref{lem:comp-realizable} or Lemma~\ref{lem:tau-realizable}).
\end{proof}

\noindent
This has a crucial corollary:

\begin{cor}%
\label{cor:termination}
Consider any set of MPs $\setS$, that represent a SDL in which all variables are polynomially bounded.
Then the set of $\tau$-MPs generated by \procSDL{}$(\setS)$ is finite, and therefore obtained in a finite number of steps.
\end{cor}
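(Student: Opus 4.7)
The plan is to derive termination directly from Theorem~\ref{thm:SDL-lb}: since every $\tau$-MP ever added to $T$ during a run of \procSDL is realizable over $\setS$, it suffices to show that under the polynomial-boundedness hypothesis, only finitely many realizable elements of $\tabsppol$ exist. Because closure and generalization only add to $T$, a finite upper bound on its contents forces the outer loop of \procSDL to stabilize.

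The main step is to bound the total degree (in $x_1, \ldots, x_n, \tau$ jointly) of any monomial appearing in a realizable AMP. Fix a realizable $\vec p$ with realizing pattern $\pi$ and constant $c$. If $\vec p[i]$ contains a non-linear monomial $\tau^a \prod_j x_j^{b_j}$ of total degree $k = a + \sum_j b_j$, then Definition~\ref{def:realized} guarantees that for arbitrarily large $t$ and suitably large $\vec x$, the trace $\pi(t)$ produces $y_i \ge c \cdot t^a \cdot \prod_j x_j^{b_j}$. Polynomial boundedness of the SDL, as ensured by the preprocessing of Section~\ref{sec:solve}, gives a uniform polynomial upper bound $P_i(\vec x, t)$ on $y_i$ for every trace of length $t$, whose degree caps $k$. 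An additional structural constraint comes from Lemma~\ref{obs:no-doublin}: no monomial in $\vec p[i]$ may carry $x_i$ beyond the bare $x_i$ itself. Together these give a finite search space.

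With the degree bound in hand, the monomials in any realizable AMP lie in a finite set---monomials of bounded total degree over $\{x_1,\ldots,x_n,\tau\}$, further constrained by Lemma~\ref{obs:no-doublin}. Since abstract polynomials are coefficient-free subsets of a finite monomial set, the realizable portion of $\tabsppol$ is finite, and \procSDL halts after adding each of those finitely many elements at most once. The main obstacle I anticipate is making ``polynomial boundedness of the SDL'' precise: polynomial boundedness is really a property of the enclosing loop after $\tau$ has been substituted by the loop bound expression, so I would need to argue that the classification from~\cite{BJK08} translates into a joint polynomial bound in $\vec x$ and $t$ at the SDL level---equivalently, that super-polynomial growth with respect to $(\vec x, t)$ here would contradict the outer polynomial bound after substitution.
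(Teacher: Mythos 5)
Your proposal is correct and follows essentially the same route as the paper: the paper's proof likewise invokes the realizability of all generated $\tau$-MPs (Theorem~\ref{thm:SDL-lb}) and observes that, since abstract MPs are coefficient-free, an infinite set would force unbounded exponents, which realizability turns into a contradiction with polynomial boundedness. The concern you flag about the meaning of ``polynomially bounded'' at the SDL level is resolved exactly as you suggest---the bound is taken jointly in $\vec x$ and the trace length, as in Clause~1 of Theorem~\ref{thm:sdlp}---and the paper treats this point with the same (implicit) level of detail.
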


\begin{proof}
Since we are working within $\tabsppol$, it is clear that if the set of $\tau$-MPs computed by the algorithm is infinite, unbounded exponents must
appear. Since all these MPs are realizable, this indicates that some variable cannot be bounded polynomially over all executions of the loop.
 Taking the contrapositive, we conclude
that if all variables are known to be polynomially bounded, the set is finite.
\end{proof}

\subsection{Upper-Bound Correctness}%
\label{sec-proof-ub}

The upper-bound correctness establishes a correspondence between the set of AMPs computed by our algorithm and a set of concrete polynomials that
actually provide upper bounds for all loop executions (as in Theorem~\ref{thm:sdlp}, Clause~1).
Implicitly, this proof provides \emph{an algorithm to compute concrete upper bounds}, while ensuring that the abstractions of these bounds
remain within our set of AMPs. But we do not attempt to explicate the algorithm as such, since our main contribution is the algorithm already given.
Firstly, we have to provide a finer definition of ``upper bound'' in terms of \emph{weighted traces}.
 The main technical part of the section defines a special class of $\tau$-MPs,
called \emph{iterative}, and shows some properties of the class. Finally, the main part of the proof uses a corollary of Simon's
\emph{Forest Factorization Theorem} to construct the desired upper bounds.
The application of this theorem is justified by Corollary~\ref{cor:termination}.
Before diving into upper bounds, we give a useful definition and some related lemmas.

 \begin{defi}%
 \label{def:join}
 For multiivariate polynomials $p,q$ we define their \emph{join} $p \join q$ as the polynomial
obtained by setting the coefficient of every monomial
to the largest of its coefficients in $p$ and in $q$.
\end{defi}


\noindent
For example: $(2x_1+x_1 x_2) \join (x_1+x_2+3x_1 x_2) = 2x_1 + x_2 + 3x_1x_2 \,$.

\bigskip\noindent
It is easy to see that $\join$ is the join (least upper bound) operation in a natural partial order on polynomials, which we denote by $\lepoly$.
Note that this order is ``syntactic'' and is a part of the ``semantic'' relation
$p \le q$ defined  by treating them as functions over $\nats^n$. To see that the relations do not coincide, consider $xy$ versus $x^2+y^2$.
With multi-polynomials, we apply the relation component-wise.

\begin{lem}%
\label{lem:lepoly-and-circ}
Suppose that ${\vec p} \lepoly {\vec q}$ and ${\vec r} \lepoly {\vec s}$. Then
${\vec p} \circ {\vec r} \lepoly {\vec q} \circ {\vec s}$.
\end{lem}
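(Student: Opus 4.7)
The plan is to reduce the claim to monotonicity of the basic ring operations with respect to $\lepoly$, and then lift this to substitution (which is what composition of multi-polynomials really is). Throughout, we stay within the cone of polynomials with non-negative integer coefficients, where $\lepoly$ is the coefficient-wise order; this non-negativity is essential, since otherwise multiplication by a ``larger'' factor could cancel monomials rather than grow them.

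First I would establish two elementary facts at the level of single polynomials in $\nats[\vec x]$: if $a \lepoly b$ and $c \lepoly d$, then both $a+c \lepoly b+d$ and $ac \lepoly bd$. The sum case is immediate from the definition (coefficients add, and inequalities add). The product case follows by expanding in monomials: the coefficient of a monomial $\concretemonomial$ in $ac$ is $\sum_{\concretemonomial_1\concretemonomial_2 = \concretemonomial} [a:\concretemonomial_1]\cdot[c:\concretemonomial_2]$ (writing $[p:\concretemonomial]$ for the coefficient of $\concretemonomial$ in $p$), and each summand is dominated by the corresponding one in $bd$.

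Next I would lift these to substitution. For a fixed polynomial $p\in\nats[\vec x]$ with non-negative integer coefficients, an induction on the structure of $p$ (built from variables, $+$, and $\cdot$) shows that the substitution map $(r_1,\ldots,r_n)\mapsto p(r_1,\ldots,r_n)$ is $\lepoly$-monotone in each $r_j$; this is just repeated application of the two facts above. Separately, if $p \lepoly q$, then writing $q = p + (q\mathop{-}p)$ where $q\mathop{-}p$ also has non-negative coefficients, we get $p(\vec r) + (q\mathop{-}p)(\vec r) = q(\vec r)$, so $p(\vec r) \lepoly q(\vec r)$ for any $\vec r$ with non-negative coefficients. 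Combining, if $p \lepoly q$ and $r_j \lepoly s_j$ for all $j$, transitivity through $p(\vec r) \lepoly q(\vec r) \lepoly q(\vec s)$ yields $p(\vec r) \lepoly q(\vec s)$.

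Applying this component-wise to the multi-polynomials gives the lemma: for each $i$, $({\vec p}\circ {\vec r})[i] = {\vec p}[i](r_1,\ldots,r_n)$, with ${\vec p}[i]\lepoly {\vec q}[i]$ and $r_j\lepoly s_j$, so $({\vec p}\circ{\vec r})[i] \lepoly ({\vec q}\circ{\vec s})[i]$. The only mildly subtle point—where some care is needed—is the merging of like monomials after substitution: distinct source monomials of $\vec p[i]$ can substitute into the same target monomial, so we must verify that these contributions still sum to something bounded by the analogous sum for ${\vec q}\circ{\vec s}$; but this is exactly what the monotonicity of $+$ under $\lepoly$ gives us, so no real obstacle remains.
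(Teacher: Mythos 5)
Your proof is correct. The paper's own proof is a single terse argument carried out directly at the level of monomials: every monomial of ${\vec p}\circ{\vec r}$ arises by substituting, for each occurrence of $x_i$ in some monomial of ${\vec p}$, a monomial of ${\vec r}[i]$; since ${\vec p}\lepoly{\vec q}$ and ${\vec r}\lepoly{\vec s}$ guarantee that the same monomials occur in ${\vec q}$ and ${\vec s}[i]$ with coefficients at least as large, the resulting monomial occurs in ${\vec q}\circ{\vec s}$ with at least the same coefficient. Your route factors the claim differently: you first prove monotonicity of $+$ and $\cdot$ under $\lepoly$, lift this by structural induction to monotonicity of substitution in the inner arguments, handle monotonicity in the outer polynomial separately via the decomposition $q = p + (q\mathop{-}p)$, and then chain the two by transitivity through $p(\vec r)\lepoly q(\vec r)\lepoly q(\vec s)$. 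Both arguments rest on the same essential fact---non-negativity of coefficients rules out cancellation, so coefficient domination survives substitution---but your organization makes explicit the one point the paper's proof glosses over: when several distinct source monomials substitute into the same target monomial, their contributions add, and one must check that these sums remain dominated; you address this directly via monotonicity of $+$, whereas the paper's phrase that the monomial ``(times some constant) appears'' in ${\vec q}\circ{\vec s}$ leaves the summation implicit. The trade-off is length: the paper's one-shot combinatorial step is shorter, while your version is more readily formalizable.
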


\begin{proof}
Every monomial  $\mathfrak m$ of ${\vec p} \circ {\vec r}$ is the result of substituting a certain monomial of ${\vec r}[i]$ for every occurrence
of $x_i$ in $\vec p$ (where $x_i^d$ counts as $d$ occurrences). As the same monomials (up to coefficients)
appear in, respectively, ${\vec s}[i]$ and $\vec q$, we conclude that $\mathfrak m$ (times some constant) appears in ${\vec q}\circ{\vec s}$.
\end{proof}

Using the join operator, we thus have: ${\vec p} \circ ({\vec r} \join {\vec s})  \gepoly ({\vec p} \circ {\vec r}) \join ({\vec p} \circ {\vec s})$.

\begin{lem}%
\label{lem:afplusbg}
For any pair of polynomials $f,g$, and numbers $a,b\ge 0$, 
\[
 af + bg \le (a+b) (f \join g) \,.
\]
\end{lem}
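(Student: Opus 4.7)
The plan is to prove the inequality monomial-by-monomial and then lift to an inequality of functions, relying only on the non-negativity of all coefficients and variable values.

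First, I would fix an arbitrary monomial $\mathfrak{m}$ in $x_1,\dots,x_n$, and let $\alpha$ denote its coefficient in $f$ and $\beta$ its coefficient in $g$ (where $\alpha$ or $\beta$ may be zero). By definition, the coefficient of $\mathfrak{m}$ in $af + bg$ is $a\alpha + b\beta$, and by Definition~\ref{def:join} the coefficient of $\mathfrak{m}$ in $f \join g$ equals $\max(\alpha, \beta)$, so the coefficient of $\mathfrak{m}$ in $(a+b)(f\join g)$ is $(a+b)\max(\alpha,\beta)$.

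The core arithmetic fact is then $a\alpha + b\beta \leq (a+b)\max(\alpha,\beta)$, which follows immediately from $a,b\geq 0$: one writes $a\alpha \le a\max(\alpha,\beta)$ and $b\beta \le b\max(\alpha,\beta)$ and adds. Summing over all monomials gives a coefficient-wise inequality between the polynomials $af+bg$ and $(a+b)(f\join g)$. Finally, since we are evaluating polynomials with non-negative coefficients on non-negative arguments (our standing convention for $\nats[\vec x]$), a coefficient-wise inequality lifts directly to the pointwise functional inequality asserted by the lemma.

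There is no real obstacle here; the statement is essentially a bookkeeping lemma that isolates a convexity-style estimate to be reused later. The only point worth noting is that the lemma is asserted as a semantic (functional) inequality, but what the proof actually produces is the stronger syntactic inequality $af+bg \lepoly (a+b)(f\join g)$ in the partial order on polynomials introduced just before the statement, from which the semantic inequality follows for free.
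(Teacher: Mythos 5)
Your proposal is correct and follows essentially the same route as the paper: a monomial-by-monomial comparison of coefficients, reducing to the arithmetic fact $a\alpha + b\beta \le (a+b)\max(\alpha,\beta)$ for $a,b\ge 0$. The only cosmetic difference is that the paper partitions the monomials into three cases (only in $f$, only in $g$, in both) whereas you handle them uniformly by allowing a zero coefficient, and you make explicit the (correct) final step of lifting the coefficient-wise inequality to a pointwise one.
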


\begin{proof}
 Partition the monomials of $f$ $\join$ $g$ into three groups: those that appear only in $f$; they are multiplied by $a$ on the LHS
and by $(a+b)$ on the RHS\@.   Those that appear only in $g$; they are multiplied by $b$ on the LHS
and by $(a+b)$ on the RHS\@.
For a monomial $\mathfrak m$ that appears in both polynomials---possibly with different coefficients, say $c$ in $f$ and $d$ in $g$,
we find $ac{\mathfrak m} + bd{\mathfrak m}$ in the LHS, and $(a+b)\max(c,d){\mathfrak m}$ in the RHS\@.
\end{proof}

\paragraph{\bf Weighted traces and upper bounds}

We extend our notion of traces by defining \emph{weighted traces}.
The need for this definition arises because, in the proof, we argue about a decomposition of an actual trace into segments, and we need
to abstract from the intermediate states within a segment.  The number of actual transitions in the segment matters, however,
and will be represented by its weight.
Thus a weighted step, denoted $\wtrans{{\vec p}}{s}{s'}{w}$, indicates that state $s$ has evolved into step $s'  = {\vec p}(s,w)$ in $w$ actual
transitions, for an integer $w\ge 1$.
A weighted concrete trace $\sigma$ is composed of such steps in the same way as an ordinary trace is; a weighted abstract trace specifies the bounds and
weights but leaves the states unspecified, e.g., $({\vec p}_1|w_1)({\vec p}_2|w_2)({\vec p_3}|w_3)$.
The total weight of a trace $\sigma$ is denoted by $\wgt{\sigma}$ and
 calculated by adding up the weights of the steps. Thus we have
$\wgt{\sigma\rho} = \wgt{\sigma} + \wgt{\rho}$.
Note that since the MPs that label a weighted step are, in general, upper bounds obtained in previous analysis steps, the value
$s' = {\vec p}(s,w)$ represents, in general, a bound and not a state that is actually reachable; however,
we argue that due to the monotonicity of the functions computable in our language, we do not lose soundness by considering $s'$ instead of a set of values
bounded by $s'$.

We denote the set of weighted traces over $\setS$ by $\wtseqs{\setS}$.
Note that weights are associated with abstract transitions and all concretizations of a weighted trace have the same weight.


\begin{defi}[bound for a trace]
Let $\sigma\in \concrete{\wtseqs{\mathcal S}}$.
We say that $\sigma$ admits
a $\tau$-polynomial $p$ as an upper bound on variable $j$, or that $p$ bounds variable $j$ in $\sigma$,
if the following holds:
\begin{equation} \label{eq:description}
  \tseq{\vec{x}}{\sigma}{\vec{y}} \land
  {t \ge \wgt{\sigma}}
  \ \Longrightarrow\
 y_j\le p(\vec{x}, t)
 \,.
\end{equation}
%
%
%
If $\vec p$ bounds \emph{all} variables in a concrete weighted trace $\tseq{\vec x}{\sigma}{\vec y}$,
we say that $\vec p$ bounds this trace. If $\vec p$ bounds all concretizations of $\sigma$, we say it bounds $\sigma$.
\end{defi}

\noindent
The following lemma follows from the monotonicity of all our polynomials.

\begin{lem}%
\label{lem:comp-ub}
If $\vec p$ bounds $\sigma$ and
   $\vec q$ bounds $\rho$, then
   ${\vec q}\circ{\vec p}$ bounds $\sigma\rho$.
\end{lem}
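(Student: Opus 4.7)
The plan is to take an arbitrary concretization of $\sigma\rho$, split it at the boundary between $\sigma$ and $\rho$, apply the two hypotheses in succession, and close the gap between them with monotonicity. Since the statement asks us to bound $\sigma\rho$ (rather than a specific concrete trace), I would start by fixing an arbitrary concrete trace $\tseq{\vec x}{\sigma\rho}{\vec z}$, which by definition decomposes as $\tseq{\vec x}{\sigma}{\vec y}\tseq{\:}{\rho}{\vec z}$ for some intermediate state $\vec y$.

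Next I would choose $t \ge \wgt{\sigma\rho}$ arbitrarily. Because weights add under concatenation, $\wgt{\sigma\rho} = \wgt{\sigma} + \wgt{\rho}$, so this single $t$ simultaneously satisfies $t \ge \wgt{\sigma}$ and $t \ge \wgt{\rho}$. This uniformity is the reason the argument works with the same $\tau$-value on both halves of the trace. Applying the hypothesis that $\vec p$ bounds $\sigma$ to the prefix gives $\vec y \le \vec p(\vec x, t)$ component-wise; applying the hypothesis that $\vec q$ bounds $\rho$ to the suffix gives $\vec z \le \vec q(\vec y, t)$.

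The remaining step is to use monotonicity. Every multi-polynomial in $\tppol$ has non-negative integer coefficients and is therefore weakly increasing in each of its $x_i$ arguments (as the paper remarks early on). Consequently $\vec q(\vec y, t) \le \vec q(\vec p(\vec x, t), t)$, and by the definition of composition in $\tppol$ (which substitutes $\vec p[i]$ for $x_i$ in $\vec q$ while leaving $\tau$ alone) the right-hand side equals $(\vec q \circ \vec p)(\vec x, t)$. Chaining the two inequalities yields $\vec z \le (\vec q \circ \vec p)(\vec x, t)$, which is precisely the condition for $\vec q \circ \vec p$ to bound $\sigma\rho$.

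There is no substantive obstacle here: the lemma is the straightforward compositional consequence of the definition of ``bound for a trace'' together with monotonicity of polynomials with non-negative coefficients. The one point to guard against is the temptation to instantiate each half with its own $\tau$-value; this is avoided by observing that $t \ge \wgt{\sigma\rho}$ automatically majorises both sub-weights, so the same $t$ can be threaded through the whole argument.
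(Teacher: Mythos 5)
Your proof is correct and follows essentially the same route as the paper's: decompose the concrete run at the intermediate state, apply the two hypotheses, and close the gap by monotonicity of polynomials with non-negative coefficients. The only (harmless) difference is that you thread a single $t \ge \wgt{\sigma\rho}$ through both halves, whereas the paper instantiates with the exact sub-weights $\wgt{\sigma}$ and $\wgt{\rho}$ and then raises the parameter to their sum; your version handles the universally quantified $t$ in the definition slightly more explicitly.
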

 \begin{proof}
Let $t_s = \wgt{\sigma}$, $t_r = \wgt{\rho}$. Consider a concrete run of $\sigma\rho$ and name the states  thus:
 $\tseq{\vec x}{\sigma}{\vec y}\tseq{\:}{\rho}{\vec z}$.  Then
\(
  z_i  \le  {\vec q}[i](\vec y, t_r)
\)
(by assumption) and for any $j$,
\(
  y_j \le  {\vec p}[j](\vec x, t_s)
\).
It clearly follows that
\(
 z_i \le  ({\vec q}[i] \circ \vec p)(\vec x, t_s+t_r)
\).
Note that $t_s+t_r$ is precisely $\wgt{\sigma\rho}$.
\end{proof}

\paragraph{\bf Iterative MPs and upper bounds for sequences of similar MPs}

\begin{defi}
$\vec p \in \tppol$ is  called \emph{iterative} if all its entries depend only on self-dependent variables. Moreover, each ${\vec p}[i]$ depends
only on variables with indices $j\le i$.
\end{defi}

\noindent
\begin{exa}
 $\tuple{x_1, x_2+x_1, x_2 x_1}$ is iterative, while $\tuple{x_1, x_1, x_3+x_2+x_1}$ is not
(note that the last MP is idempotent when abstracted to $\absppol$, which demonstrates that this property does not imply iterativity; the converse does not hold, either).
\end{exa}
We abbreviate ``iterative multi-polynomial'' to IMP\@.
We next give some technical definitions and results regarding IMPs. The upshot of this part is the computation of tight upper bounds for the end-state
of certain traces composed of IMPs.
We now attempt to give some intuition.  Consider a sequence of MPs
which are idempotent in their abstract form.  Thus if we compose their abstractions in $\absppol$ we just get the same AMP again.  This is a nice
situation which suggests that we can extrapolate the bounds for a single step to bounds for any number of steps.
But there may be
some values that grow and we will have to account for this in our upper bounds.  Consider the following MP:\@
\[ \tuple{x_1+x_2+x_3, \ x_3, \ x_3} \]
Note that it is idempotent in $\absppol$.  However, in concrete computation, the increments accumulate.  But hastily changing the first component
to $x_1 + \tau(x_2+x_3)$ overshoots the upper bound. The correct result is $x_1 + x_2 + x_3 + \tau(2 x_3)$.  We find this result by extracting an iterative
MP from the given set of MPs.  In the above example, the extracted IMP will be
$ \tuple{x_1 + 2x_3, \ x_3, \ x_3} $; we will see later why.

Moving to more precise details, we are going to consider sequences of \emph{similar} MPs, where similarity means that we get the same
MP if we ignore the coefficients as well as $\tau$ symbols, as expressed by the following set of definitions.

\begin{defi}
Let $\vec p \in \tppol$. We define $\tra{\vec p}$ to be $\alpha({\vec p}[1/\tau])$.
\end{defi}

\noindent
For example, $\tra{\tuple{x_1+\tau x_2, x_1^2 + \tau x_1^2}} = \tuple{x_1+x_2, x_1^2}$.

\begin{defi}
We call ${\vec p}$, $\vec q$ \emph{similar} if $\tra{\vec p} = \tra{\vec q}$.
\end{defi}

\begin{obs}%
\label{obs:tra-composition}
Suppose that $\tra{\vec p} = \tra{\vec q}$. Then for any $\vec t$, $\tra{{\vec p}\circ {\vec t}} = \tra{{\vec q}\circ {\vec t}} =
\tra{\vec p} \acirc \tra{\vec t}$.
\end{obs}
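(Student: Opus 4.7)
The statement reduces to showing that $\tra{\cdot}$ is a homomorphism from $(\tppol,\circ)$ to $(\absppol,\acirc)$; once we have $\tra{{\vec p}\circ{\vec t}} = \tra{\vec p}\acirc\tra{\vec t}$ and symmetrically for $\vec q$, the hypothesis $\tra{\vec p}=\tra{\vec q}$ gives the displayed chain of equalities immediately. So the plan is to prove this single homomorphism identity, and to do it by splitting the operator $\tra{\cdot} = \alpha\circ (\cdot)[1/\tau]$ into its two stages and showing each is compatible with composition.

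The first stage is the substitution $[1/\tau]$. Here the key point was already made in Section~\ref{sec:tau-poly}: composition of $\tau$-polynomial transitions substitutes only into the program variables $x_1,\dots,x_n$, leaving occurrences of $\tau$ untouched. Consequently replacing $\tau$ by $1$ commutes with composition, i.e., $({\vec p}\circ{\vec t})[1/\tau] = {\vec p}[1/\tau]\circ{\vec t}[1/\tau]$ holds as an identity on (concrete) $\tau$-free multi-polynomials. This is straightforward to verify by applying both sides componentwise and observing that the $\tau$'s in $\vec p$ and $\vec t$ never interact under composition.

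The second stage is the abstraction $\alpha$. By clause~(2) of the definition of abstract polynomials (Definition~\ref{def:abspol}), $\alpha$ respects sums and products: $\alpha(p+q)=\alpha(p)+\alpha(q)$ and $\alpha(p\cdot q)=\alpha(p)\cdot\alpha(q)$. Since composition of multi-polynomials is built out of sums and products (the substitution of one polynomial into another), it follows that $\alpha({\vec r}\circ {\vec s}) = \alpha({\vec r})\acirc\alpha({\vec s})$ for any $\vec r,\vec s$; this is just the unpacking of clause~(6) of that definition. Combining with the first stage, we get $\tra{{\vec p}\circ{\vec t}} = \alpha({\vec p}[1/\tau]\circ{\vec t}[1/\tau]) = \alpha({\vec p}[1/\tau])\acirc\alpha({\vec t}[1/\tau]) = \tra{\vec p}\acirc\tra{\vec t}$, and the same for $\vec q$, completing the proof.

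There is no real obstacle here---the observation is essentially bookkeeping. The only care needed is to check once and for all that both $[1/\tau]$ and $\alpha$ are monoid homomorphisms with respect to composition, which is what the paragraphs above do.
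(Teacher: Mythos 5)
Your proof is correct, and it supplies exactly the routine verification that the paper leaves implicit (the observation is stated without proof): the substitution $[1/\tau]$ commutes with composition because $\tau$ is not a state variable, and $\alpha$ commutes with composition by clauses~(2) and~(6) of the definition of abstract polynomials, the absence of cancellation being guaranteed by non-negative coefficients. Nothing further is needed.
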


Whenever an idempotent MP is considered, we also assume it is realizable, which in turn allows us to assume that
 its variables are indexed in topological order, as in the previous section, making it neat.
One could worry that the assumptions of topological order might contradict each other in an argument that involves
 several different idempotents,
but this is not a problem since all the development below only concerns a set of similar MPs,
so they agree on the topological indexing. In such a discussion we may say, for example, that some $x_i$ is self-dependent, without specifying
which MP of the set is concerned.

\begin{defi}
Let $\vec p \in \tppol$ be such that $\tra{\vec p}$ is neat. Define its \emph{self-dependent cut} $\sdpart{\vec p}$ as follows:
for all $i$ self-dependent in $\vec p$, $\sdpart{\vec p}[i] = x_i$. For all other $i$, $\sdpart{\vec p}[i] = {\vec p}[i]$.
\end{defi}


\begin{lem}%
\label{lem:sdm-property}
Let ${\vec p}_1,\dots,{\vec p}_n$ be similar $\tau$-MPs such that $\tra{{\vec p}_1}=\tra{{\vec p}_2}=\dots$ is neat.  Then for all $\ell\le n$,
\( {\vec q}_\ell \eqdef {\vec p}_1 \circ \sdpart{{\vec p}_2}\cdots \circ \sdpart{{\vec p}_\ell} \)
has the following property: for all $i$, ${\vec q}_\ell[i]$ only depends on variables $x_j$ which are either self-dependent
 or else $j \le i-\ell$.
\end{lem}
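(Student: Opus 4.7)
The proof will proceed by induction on $\ell$. Throughout, I will use the fact that since all ${\vec p}_k$ are similar and $\tra{{\vec p}_k}$ is neat, every ${\vec p}_k$ has the same dependency graph as $\tra{{\vec p}_k}$ (the $\tau$'s and coefficients discarded by $\alpha$ and the substitution $[1/\tau]$ do not affect which variables appear). In particular, the set of self-dependent indices is the same for all ${\vec p}_k$, the variables are indexed in a common topological order, and each ${\vec p}_k[i]$ depends only on $x_j$ with $j\le i$, with $j=i$ occurring only when $i$ is self-dependent.

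For the base case $\ell=1$, we have ${\vec q}_1 = {\vec p}_1$, and the neatness of $\vec p_1$ immediately gives the desired property: each ${\vec q}_1[i]$ depends only on $x_j$ with $j<i$ or with $j=i$ self-dependent. For the inductive step, I would write ${\vec q}_{\ell+1} = {\vec q}_\ell \circ \sdpart{{\vec p}_{\ell+1}}$ and unfold the composition: the variables appearing in ${\vec q}_{\ell+1}[i]$ are exactly the union, over indices $k$ appearing in ${\vec q}_\ell[i]$, of the variables appearing in $\sdpart{{\vec p}_{\ell+1}}[k]$.

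By the induction hypothesis, each such $k$ is either self-dependent or satisfies $k \le i-\ell$. In the first case, $\sdpart{{\vec p}_{\ell+1}}[k] = x_k$, which contributes only a self-dependent variable. In the second case, $k$ is non-self-dependent, so $\sdpart{{\vec p}_{\ell+1}}[k] = {\vec p}_{\ell+1}[k]$, and since $k$ is not self-dependent, neatness forces ${\vec p}_{\ell+1}[k]$ to depend only on $x_j$ with $j \le k-1 \le i-\ell-1 = i-(\ell+1)$. Combining, every variable appearing in ${\vec q}_{\ell+1}[i]$ is either self-dependent or has index at most $i-(\ell+1)$, completing the induction.

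The main obstacle, such as it is, is just bookkeeping around the distinction between self-dependent and non-self-dependent indices: the former are ``frozen'' by $\sdpart{\cdot}$ and contribute nothing new, while the latter strictly decrease by at least one unit per composition step because neatness forbids a non-self-dependent entry from mentioning its own index. Once this is set up cleanly using Observation~\ref{obs:tra-composition} to justify the stable dependency structure across the similar family $\{{\vec p}_k\}$, the induction is routine.
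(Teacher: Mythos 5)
Your proof is correct and follows essentially the same route as the paper's: induction on $\ell$ with the decomposition ${\vec q}_{\ell+1} = {\vec q}_\ell \circ \sdpart{{\vec p}_{\ell+1}}$, splitting on whether an intermediate index is self-dependent (frozen to $x_k$ by the cut) or not (in which case neatness and the topological indexing force a strict drop to $j \le k-1$). The only difference is your explicit preliminary remark that similarity gives all the ${\vec p}_k$ a common dependency structure, which the paper disposes of in the surrounding discussion rather than inside the proof.
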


\begin{proof}
We use induction on $\ell$. For $\ell=1$, the result is immediate, by virtue of the topological indexing.
  Next, for  $\ell > 1$, consider ${\vec q}_{\ell-1} = {\vec p}_1 \circ \cdots \circ \sdpart{{\vec p}_{\ell-1}}$.
Now ${\vec q}_\ell = {\vec q}_{\ell-1} \circ \sdpart{{\vec p}_\ell}$.
By IH, ${\vec q}_{\ell-1}[i]$ only depends on variables $x_j$ which are either self-dependent or have
$j\le i - \ell + 1$.   In the case that $x_j$ is self-dependent,
then $\sdpart{{\vec p}_\ell}[j] = x_j$, so ${\vec q}_\ell[i]$ also depends on $x_j$ which agrees with the lemma.  In the case that $x_j$ is not
self-dependent,   $\sdpart{{\vec p}_\ell}[j]$ must depend on variables $x_k$ with $k \le j-1 \le i - \ell$.
\end{proof}

\noindent
This lemma is actually formulated for induction; what we are really interested in is the MP denoted by ${\vec q}_n$, for which we  introduce
a special notation while enunciating the following corollary:

\begin{cor}
Let ${\vec p}_1,\dots,{\vec p}_n$ be  similar $\tau$-MPs such that $\tra{{\vec p}_1}=\tra{{\vec p}_2}=\dots$ is neat. Then
\( \sdm{{\vec p}_1\cdots  {\vec p}_n} \eqdef {\vec p}_1 \circ \sdpart{{\vec p}_2}\cdots \circ \sdpart{{\vec p}_n} \)
is iterative.
\end{cor}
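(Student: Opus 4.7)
The plan is to derive both defining properties of iterativity for $\sdm{{\vec p}_1\cdots{\vec p}_n}$ directly from Lemma~\ref{lem:sdm-property} together with the neatness assumption. Recall that iterativity asks that every entry ${\vec q}[i]$ depend (a)~only on self-dependent variables and (b)~only on variables $x_j$ with $j\le i$.

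First I would apply Lemma~\ref{lem:sdm-property} at $\ell=n$. This says that for every index $i$, the entry $\sdm{{\vec p}_1\cdots{\vec p}_n}[i] = {\vec q}_n[i]$ depends only on variables $x_j$ that are either self-dependent or satisfy $j\le i-n$. Since $i\in[n]$ we have $i-n\le 0$, and there is no variable with index $\le 0$, so the second disjunct is vacuous. This immediately yields property~(a).

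For property~(b) I would argue by a short separate induction on $\ell$, in parallel with the induction already carried out in Lemma~\ref{lem:sdm-property}, showing that each ${\vec q}_\ell[i]$ depends only on variables with index at most $i$. The base case is immediate because ${\vec p}_1$ is similar to a neat MP, so its variables are topologically indexed and ${\vec p}_1[i]$ depends only on $x_j$ with $j\le i$. For the inductive step, note that $\sdpart{{\vec p}_{\ell+1}}$ has the same dependency structure as $\tra{{\vec p}_{\ell+1}} = \tra{{\vec p}_1}$ (with some entries replaced by $x_j$, which only improves the situation), so $\sdpart{{\vec p}_{\ell+1}}[j]$ depends only on $x_k$ with $k\le j$. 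Composing ${\vec q}_\ell$ with $\sdpart{{\vec p}_{\ell+1}}$ then substitutes, for each $x_j$ appearing in ${\vec q}_\ell[i]$ (with $j\le i$), an expression in variables $x_k$ with $k\le j\le i$, so the bound on indices is preserved.

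The only mildly subtle point is the bookkeeping of indices in the composition step; but this is immediate from neatness and the fact that taking the self-dependent cut $\sdpart{\cdot}$ can only \emph{reduce} the set of variables on which a component depends. With these two properties in hand, $\sdm{{\vec p}_1\cdots{\vec p}_n}$ satisfies the definition of iterative, completing the proof. No new ingredient beyond Lemma~\ref{lem:sdm-property} and neatness is needed, so I do not expect any genuine obstacle here --- the corollary is essentially a repackaging of the lemma at $\ell=n$.
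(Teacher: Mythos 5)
Your proposal is correct and follows exactly the route the paper intends: the corollary is stated without proof precisely because it is the $\ell=n$ instance of Lemma~\ref{lem:sdm-property}, where $i-n\le 0$ makes the second disjunct vacuous, leaving only self-dependent variables. Your additional induction for the index-ordering condition (b) is a sound and worthwhile spelling-out of a point the paper leaves implicit, relying as it does only on the topological indexing from neatness and the fact that composition and the self-dependent cut cannot introduce dependencies on higher-indexed variables.
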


\noindent
\emph{Notation}: if $S$ is a set of neat, similar $\tau$-MPs as above, we let
\[ \sdc{S} \eqdef \{ \sdm{{\vec p}_1\:\dots\:{\vec p}_{n}} \, \mid \, {\vec p}_1,\dots,{\vec p}_{n} \in S \} \, . \]

Note that the set $\sdc{S}$ consists of IMPs (by the above corollary) and it is easy to see that they are neat and similar to each other.
Thus, we have  a way to extract a set of IMPs from a certain set of similar MPs.  We will next move to the way in which we compute upper bounds
for sequences of similar MPs using these IMPs.
Recall that a $\tau$-MP is \emph{$\tau$-closed} if ${\vec p}^\tau = {\vec p}$.  In particular, $\tau$-MPs that result directly from generalization
(i.e., ${\vec q}^\tau$ for some $\vec q$) are $\tau$-closed.
We next define a special composition operator for $\tau$-closed IMPs, and show an interesting property that it has.
Note that, using the notation of Definition~\ref{def:MPnotation} (Page~\pageref{def:MPnotation}),
if ${\vec p}$ is a $\tau$-closed IMP then its self-dependent entries have the form
$ {\vec p}[i] = x_i + \tau {\vec p}[i]' $.

\begin{defi}[$\tau$-absorbing composition]
We introduce a non-standard composition operation on similar, $\tau$-closed IMPs, denoted by the operator $\taucomp$ (differing
from both ordinary composition $\circ$ and abstract composition $\acirc$).
Specifically, $({\vec q}\taucomp {\vec p})[i]$ is defined as follows:
\begin{enumerate}
\item If $i\in \sd{\vec q} = \sd{\vec p}$, then
$({\vec q}\taucomp {\vec p})[i] \eqdef  {\vec p}[i] \join  \tau({\vec q}[i]' \circ {\vec p}) = x_i  +  \tau ({\vec p}[i]' \join  ({\vec q}[i]' \circ {\vec p}))$.
\item Otherwise, $({\vec q}\taucomp {\vec p})[i] \eqdef ({\vec q}[i]\circ {\vec p})$.
\end{enumerate}
\end{defi}
\noindent
Observe that the difference between ${\vec q}\taucomp {\vec p}$ and ${\vec q} \circ {\vec p}$ is only in the numeric coefficient of some monomials.
In particular, they have the same abstraction: $\alpha({\vec q}\taucomp {\vec p}) = \alpha({\vec q} \circ {\vec p}) = \alpha({\vec q})\acirc \alpha({\vec p})$.

\begin{exa}%
\label{ex:tauproduct}
Consider the following $\tau$-polynomial:
\[
\begin{array}{lrlllll}
& {\vec p} = \langle & x_1, & x_2+\tau x_1, & x_3 + \tau x_2 + \tau^2 x_1,  & x_4 \rangle \\[1\jot]
\llap{\text{Then:}} \\
& {\vec p}\taucomp {\vec p} = \langle & x_1, & x_2+\tau x_1, & x_3 + \tau x_2 + \tau^2 x_1, & x_4 \rangle & = {\vec p} \\[1\jot]
\llap{\text{while:}} \\
& {\vec p}\circ {\vec p} = \langle & x_1, & x_2+ 2\tau x_1, & x_3 + 2\tau x_2 + 3\tau^2 x_1, & x_4 \rangle .
\end{array}
\]
\end{exa}

Thus, the operation
keeps (some) coefficients from growing when we compose $\tau$-MPs.  We will see next that this has the desirable result that a sequence of
``$\taucomp$ powers'' $\setT^{\taucomp (i)}$ (defined analogously to Definition~\ref{def:itcomp} on Page~\pageref{def:itcomp})
 only includes a finite number of MPs. On the other
hand we will prove that it gives a sound upper bound for a sequence of transitions.

\begin{lem}%
\label{lem:taucomp-ub}
Let $\vec p$, $\vec q$ be $\tau$-closed, similar IMPs and let $\sigma$, $\rho$ be weighted traces such that
$\vec p$ bounds $\sigma$ and
   $\vec q$ bounds $\rho$. Then
   ${\vec q}\taucomp {\vec p}$ bounds $\sigma\rho$.
\end{lem}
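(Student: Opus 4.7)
My plan is to split on whether the index $i$ is self-dependent, since those are the only indices where $\taucomp$ differs from ordinary composition $\circ$. Fix a concrete weighted trace $\tseq{\vec x}{\sigma}{\vec y}\tseq{\:}{\rho}{\vec z}$, set $t_s = \wgt{\sigma}$, $t_r = \wgt{\rho}$, $t = t_s + t_r = \wgt{\sigma\rho}$, and let $t' \ge t$ be any admissible time parameter. From the hypotheses we have, component-wise, $\vec y \le {\vec p}(\vec x, t_s)$ and $\vec z \le {\vec q}(\vec y, t_r)$.

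For $i \notin \sd{\vec p}=\sd{\vec q}$, the argument is the one already used in Lemma~\ref{lem:comp-ub}: by monotonicity in all arguments, $z_i \le {\vec q}[i](\vec y, t_r) \le {\vec q}[i]({\vec p}(\vec x, t_s), t') = ({\vec q}[i]\circ {\vec p})(\vec x, t')$, which equals $({\vec q}\taucomp{\vec p})[i](\vec x, t')$ by definition.

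For $i \in \sd{\vec p}$, I would use the fact that both $\vec p$ and $\vec q$ are $\tau$-closed IMPs, so their self-dependent entries have the shape ${\vec p}[i] = x_i + \tau {\vec p}[i]'$ and ${\vec q}[i] = x_i + \tau {\vec q}[i]'$. Plugging in gives
\[
z_i \le y_i + t_r\, {\vec q}[i]'(\vec y, t_r) \le x_i + t_s\, {\vec p}[i]'(\vec x, t') + t_r\, ({\vec q}[i]' \circ {\vec p})(\vec x, t'),
\]
where the second step uses monotonicity together with $\vec y \le \vec p(\vec x, t_s) \le \vec p(\vec x, t')$. Now I apply Lemma~\ref{lem:afplusbg} with $a = t_s$, $b = t_r$, $f = {\vec p}[i]'$, $g = {\vec q}[i]' \circ {\vec p}$, evaluated at $(\vec x, t')$, to bound the right-hand side by
\[
x_i + (t_s + t_r)\bigl({\vec p}[i]' \join ({\vec q}[i]' \circ {\vec p})\bigr)(\vec x, t') \le x_i + t'\bigl({\vec p}[i]' \join ({\vec q}[i]' \circ {\vec p})\bigr)(\vec x, t'),
\]
which is exactly $({\vec q}\taucomp{\vec p})[i](\vec x, t')$ unfolded per the definition.

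I do not anticipate a deep obstacle: the key ingredients are already packaged as Lemma~\ref{lem:afplusbg} (the whole point of introducing the join) and monotonicity of polynomials with non-negative coefficients. The one subtlety worth care is that we must bound at the total weight $t$, not at $t_s$ or $t_r$ separately; this is handled by weakening each occurrence of $t_s$ or $t_r$ inside a polynomial to $t'$ before invoking the join inequality, so that the two additive contributions can be merged coefficient-wise.
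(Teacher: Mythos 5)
Your proposal is correct and follows essentially the same route as the paper's proof: split on self-dependence, use the $\tau$-closed IMP shape $x_i+\tau{\vec p}[i]'$ for self-dependent entries, weaken the time arguments to the total weight, and invoke Lemma~\ref{lem:afplusbg} to merge $t_s{\vec p}[i]'+t_r({\vec q}[i]'\circ{\vec p})$ into $(t_s+t_r)$ times the join. The only (harmless) difference is that you carry a general $t'\ge t_s+t_r$ explicitly where the paper evaluates at $t_s+t_r$ and leaves the extension to larger $t$ to monotonicity.
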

\noindent
Compared to Lemma~\ref{lem:comp-ub}, this lemma poses stronger conditions, but has a stronger conclusion since
${\vec q}\taucomp {\vec p}$ is, in general, lower than ${\vec q}\circ {\vec p}$.

\begin{proof}
The definition of ${\vec q}\taucomp {\vec p}$ makes the statement trivial for variables that are not self-dependent.
So, let $x_i$ be self-dependent.  Suppose that $\tseq{\vec x}{\sigma}{\vec y}\tseq{\:}{\rho}{\vec z}$.
Let $t_s = \wgt{\sigma}$, $t_r = \wgt{\rho}$.
By assumption,
\begin{align*}
 z_i & \le  {\vec q}(\vec y, t_r) & \\ &
  = y_i +  t_r{\vec q}[i]'(\vec y, t_r) \\  &
\le \left( x_i +  t_s{\vec p}[i]'(\vec x, t_s) \right) +  t_r{\vec q}[i]'(\vec y, t_r) \\  &
\le \left( x_i +  t_s{\vec p}[i]'(\vec x, t_s+t_r) \right) +  t_r({\vec q}[i]' \circ \vec p)(\vec x, t_s+t_r) \\  &
\le x_i +  (t_s+t_r)  ({\vec p}[i]' \join ({\vec q}[i]' \circ \vec p))(\vec x, t_s+t_r) \tag{by Lemma~\ref{lem:afplusbg}} \\
&  =  ({\vec q}\taucomp {\vec p})[i](t_s+t_r) \,. \tag*{\qedhere}
\end{align*}
\end{proof}

\begin{lem}[The Finite Closure Lemma]%
\label{lem:tau-finiteness}
Let $\setT$ be a finite set of $\tau$-closed, similar
 IMPs.  Then the set
\[
 \setT^{\bigstar} = \bigcup_{k=1}^\infty \setT^{\taucomp (k)}
 \]
 is finite.
Hence,  there is a number $\ell$ such that $\setT^{\bigstar} = \setT^{\taucomp (\ell)}$.
\end{lem}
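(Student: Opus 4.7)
The plan is to contain $\setT^{\bigstar}$ inside an explicit finite set; then, since $\setT^{\taucomp(k)}$ is a monotonically increasing chain of subsets (by a union-style definition analogous to Definition~\ref{def:itcomp}), stabilization at some finite $\ell$ is automatic, giving the second half of the statement.

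First I would show that similarity and $\tau$-closedness are preserved by $\taucomp$. Since $\alpha(\vec q \taucomp \vec p) = \alpha(\vec q \circ \vec p)$, Observation~\ref{obs:tra-composition} gives $\tra{\vec q \taucomp \vec p} = \tra{\vec q} \acirc \tra{\vec p} = \vec r \acirc \vec r$, where $\vec r \eqdef \tra{\vec p}$. In the intended use of the lemma, $\setT$ arises from $\sdc{\cdot}$ applied to idempotent AMPs, so $\vec r$ is abstractly idempotent and the right-hand side collapses to $\vec r$; similarity is thereby preserved. A short inspection of the $\taucomp$ definition then confirms that the result is again a $\tau$-closed IMP: its self-dependent entries have the form $x_i + \tau(\cdots)$ with no $\tau$-free self-dependent monomials other than $x_i$ and no non-self-dependent monomials, so applying generalization leaves them unchanged.

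Second, within the class of $\tau$-closed IMPs similar to $\vec r$, the collection of monomials that can appear in each entry is fixed and finite (determined by $\vec r$), so every such MP is encoded by natural-number coefficients over this fixed monomial basis. It remains to bound these coefficients uniformly across $\setT^{\bigstar}$. The key mechanism is $\join$, which takes pointwise maxima rather than sums. Using the neat DAG structure of IMPs (so the composition $\vec q[i]' \circ \vec p$ unwinds through variables of strictly smaller index, with self-loops contributing only the identity term $x_i$) together with the shared abstract shape $\vec r$, I would argue by induction on the variable index $i$ that the coefficients of $(\vec q \taucomp \vec p)[i]$ stay bounded by a constant determined only by $\setT$ and $\vec r$. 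Once this is established, $\setT^{\bigstar}$ is confined to a finite set, completing the proof.

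The main obstacle is precisely this coefficient-bounding step: $\join$ caps coefficients additively, but the inner composition $\vec q[i]' \circ \vec p$ can sum coefficients, so without care one would not get a uniform bound. The argument must exploit both the DAG ordering of variables in an IMP (to make the induction terminate without feedback) and the similarity constraint (so that every operand encountered has the same abstract monomial structure $\vec r$), so that any coefficient growth introduced by composition is absorbed by a subsequent $\join$ and the overall magnitude saturates at a fixed ceiling.
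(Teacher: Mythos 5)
Your proposal matches the paper's own proof in its essentials: the paper makes your ``fixed ceiling'' explicit as the sequence $T_{(0)}=\idppol$, $T_{(k+1)}=T_{(k)}\join(\bigsqcup\setT)\circ T_{(k)}$, and shows by a double induction (on the variable index $k$, exploiting exactly the DAG ordering you invoke, and on the number of $\taucomp$-compositions) that every ${\vec r}\in\setT^{\bigstar}$ satisfies ${\vec r}[i]\lepoly T_{(i)}[i]$, finiteness following because the $\lepoly$-down-set of a single polynomial is finite. The one imprecision is your claim that the monomial basis is fixed by $\tra{\vec p}$ alone---similar MPs may differ in the $\tau$-degrees of their monomials, and composition can raise these---but the same ceiling bounds the $\tau$-degrees as well as the coefficients, so the argument is unaffected.
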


\begin{exa}
Consider $\setT = \{{\vec p}, {\vec q}\}$ with $\vec p$ from Example~\ref{ex:tauproduct} and
\[
\begin{array}{lrlllll}
& {\vec q} = \langle & x_1 + \tau x_4, & x_2, & x_3 + \tau x_4,  & x_4 \rangle .
\end{array}
\]
Then
\[
\begin{array}{lrlllll}
& {(\setT)}^{\taucomp (2)} = & \setT \taucomp \setT  = & \{  {\vec p}\taucomp {\vec p},  & {\vec q}\taucomp {\vec p},   &{\vec q}\taucomp {\vec q}, &   {\vec p}\taucomp {\vec q} \} = \\
&&& \{ {\vec p}, & {\vec r}, & {\vec q}, & {\vec r} \}
\end{array}
\]
Where
\[
\begin{array}{lrlllll}
& {\vec r} = \langle & x_1 + \tau x_4, & x_2 + \tau x_1, & x_3 + \tau x_2 + \tau^2 x_1 + \tau x_4,  & x_4 \rangle .
\end{array}
\]
Next, ${(\setT)}^{\taucomp (3)}$ consists of 8 products, which contribute only one new element to the accumulated union, namely
\[
\begin{array}{lrlllll}
& {\vec p}\taucomp {\vec r} = \langle & x_1 + \tau x_4, & x_2 + \tau x_1, & x_3 + \tau x_2 + 2\tau^2 x_1 +  \tau x_1 +\tau x_4,  & x_4 \rangle .
\end{array}
\]
Adding  ${(\setT)}^{\taucomp (4)}$ to the union, we only have to check ${\vec p}\taucomp {\vec p}\taucomp {\vec r}$ and ${\vec q}\taucomp {\vec p}\taucomp {\vec r}$.
It turns out that these $\tau$-MPs have appeared already.  Therefore, we have reached $\setT^{\bigstar} $.
\end{exa}

\begin{proof}
Define the following sequence:
\begin{align*}
T_{(0)} & = \idppol \\
T_{(k+1)} & =  T_{(k)}  \join \    ( \bigsqcup \setT ) \circ T_{(k)} \, .    
\end{align*}

We prove the following statement by complete induction on $k$:
\begin{equation} \label{eq:Tstar-sdv}
  {\vec r} \in \setT^{\bigstar},\  i\le k  \ \Rightarrow {\vec r}[i] \lepoly  T_{(k)} [i] \,.
\end{equation}
This implies the desired result because it shows that for any $i\le n$, the set of values assumed by ${\vec r}[i]$, when $\vec r$ ranges
over $\setT^{\bigstar}$, is finite, because it is contained in the down-set of a single polynomial, namely $T_{(i)}[i]$,
and such a down-set is always finite. Hence, $\setT^{\bigstar}$ is finite.
It is worthwhile to observe that since the members of $\setT$ are similar IMPs,  then,
whenever $i$ is a self-dependent variable (in any of them---and then in all), $\bigsqcup \setT$ also has the form $x_i + \text{(function of lower variables)}$
and, consequently, so do the $T_{(k)}$.

To prove~\eqref{eq:Tstar-sdv},
fix $k$  and $i\le k$.  We now employ induction on the number $t$ of compositions used to construct $\vec r$ (following the definition of $\setT^{\bigstar}$).
The base case is $t=1$, i.e., ${\vec r} \in \setT$ where the result is immediate.
In the general case,
\[
 {\vec r} = {\vec q}\taucomp {\vec f}
 \]
 where ${\vec q} \in \setT$ and ${\vec f}\in \setT^{\taucomp (t-1)}$.

\emph{Case~}1: we consider ${\vec r}[i]$ when $i\in \sd{\vec q}$.
Write
 \[
  {\vec q}[i] = x_i + \tau {\vec q}[i]'
 \]
 then
 \[
  {\vec r}[i] = (x_i + \tau {\vec f}[i]') \join \tau ( {\vec q}[i]' \circ {\vec f} )
 \]
 Note that ${\vec q}[i]'$ only depends on variables $x_j$ with $j < i \le  k$.
By the induction hypothesis (of the induction on $k$) we have
\[
 {\vec f}[j] \lepoly T_{(k-1)} [j]
\]
Consequently
 \[
  \tau ({\vec q}[i]' \circ {\vec f}) 
      \lepoly \left({\vec q}\circ T_{(k-1)}\right) [i]
       \lepoly T_{(k)} [i] \, .
 \]
By the induction hypothesis on $t$,
\[
 {\vec f}[i] = x_i + \tau {\vec f}[i]' \lepoly T_{(k)} [i]
\]
and we conclude that
\[
    (x_i + \tau {\vec f}[i]') \join \tau ( {\vec q}[i]' \circ {\vec f} )   \lepoly
       T_{(k)} [i] \, \join \, T_{(k)} [i] =
       T_{(k)} [i] \, .
 \]

\emph{Case~}2: consider $i\notin \sd{\vec q}$.
Then ${\vec q}[i]$ only depends on self-dependent variables lower than $i$, hence lower than $k$.
By the induction hypothesis on $k$, when $j<k$ we have
\[
 {\vec f}[j] \lepoly T_{(k-1)} [j]
\]
Consequently
 \[
   {\vec r}[i] = {\vec q}[i] \circ {\vec f}  \lepoly  T_{(k)} [i] \, . \qedhere
 \]
\end{proof}

We are heading towards a central part of the proof, where we construct upper bounds for sequences of MPs taken from a set $\setS$ of
 similar $\tau$-MPs. Moreover, we restrict attention to MPs $\vec p$ such that $\tra{\vec p}$ is idempotent.
Lemma~\ref{lem:tau-finiteness} is used to show that a finite set of upper bounds covers such traces of any length. However, it requires the MPs to be
$\tau$-closed IMPs.  We close this gap by demonstrating that we can obtain upper bounds for such traces using  $\sdm{\setS}^\tau$
instead of $\setS$ (the $\tau$ superscript over a set name means that we apply generalization to all members of the set).


\begin{lem}%
\label{lem:bounding-by-sdm}
Let $\setS$ be a set of  similar $\tau$-MPs such that $\tra{\vec p}$, for ${\vec p}\in \setS$, is neat.
For any $t>n$, 
consider a concrete trace
\[ {\concrete{\sigma}} = \wtrans{{\vec p}_1}{s_{0}}{s_1}{w_1}\dots\wtrans{{\vec p}_t}{s_{t-1}}{s_t}{w_t} \,.\]
Let ${\vec r} = \sdm{{\vec p}_{t}\:\dots\:{\vec p}_{t-n+1}}$.
Then $s_t \le {\vec r}(s_{t-1}, w')$ where $w' = \sum_{z=t-n+1}^t w_z$.   In other words, ${\vec r}$ bounds the last step of $\sigma$ provided that its weight is
suitably increased.
\end{lem}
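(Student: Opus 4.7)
The plan is to unfold ${\vec r}(s_{t-1}, w') = {\vec p}_t(u_{n-1}, w')$, where $u_0 = s_{t-1}$ and $u_k = \sdpart{{\vec p}_{t-n+k}}(u_{k-1}, w')$ for $k = 1, \dots, n-1$, and to compare this value against $s_t = {\vec p}_t(s_{t-1}, w_t)$ coordinate by coordinate via monotonicity. The immediate observation is that $\sdpart{{\vec q}}[i] = x_i$ for every self-dependent $i$, so an easy induction on $k$ gives $u_k[i] = s_{t-1}[i]$ for all self-dependent $i$ and all $k$. In particular, the self-dependent coordinates of $u_{n-1}$ agree with those of $s_{t-1}$.

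The heart of the proof is then the claim $u_{n-1} \ge s_{t-1}$ componentwise. Given this claim, since $w' \ge w_t$ and all polynomials have non-negative coefficients, monotonicity of ${\vec p}_t$ yields
\[
{\vec r}(s_{t-1}, w') = {\vec p}_t(u_{n-1}, w') \ge {\vec p}_t(s_{t-1}, w_t) = s_t \,.
\]

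To prove the claim for a non-self-dependent coordinate $i$, I would induct on the topological order of variables in the dependency DAG (which is shared by all ${\vec p}_z$ since they are similar with neat $\tra$). Since $i$ is non-self-dependent, $\sdpart{{\vec p}_{t-1}}[i] = {\vec p}_{t-1}[i]$, so $u_{n-1}[i] = {\vec p}_{t-1}[i](u_{n-2}, w')$, while $s_{t-1}[i] = {\vec p}_{t-1}[i](s_{t-2}, w_{t-1})$. By monotonicity of ${\vec p}_{t-1}[i]$ and $w' \ge w_{t-1}$, it suffices to show $u_{n-2}[k] \ge s_{t-2}[k]$ for every $k$ in the support of ${\vec p}_{t-1}[i]$, which by neatness has $k < i$. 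For self-dependent $k$, $u_{n-2}[k] = s_{t-1}[k] \ge s_{t-2}[k]$ since self-dependent coordinates are non-decreasing along the trace (an easily verified separate induction using monotonicity of every ${\vec p}_z[k]$ and the self-dependent form ${\vec p}_z[k] = x_k + \ldots$). For non-self-dependent $k$, a joint induction peels off one $\sdpart{\cdot}$ at a time, matching it against the corresponding step of the actual trace.

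The main obstacle will be discharging the non-self-dependent inductive step: the intermediate $u_m$'s are built from different ${\vec p}_z$'s in $\setS$ which, while similar (same $\tra$), may differ as $\tau$-MPs (different $\tau$-exponents on shared monomials). Thus the invariant ``$u_m[k]$ dominates the trace value at the appropriate point'' is not an instance of evaluating one fixed polynomial at two related arguments, and needs to track both the weight accounting---using that $w' = \sum_{z=t-n+1}^{t} w_z$ dominates each individual $w_z$---and the shared monomial skeleton given by similarity. Once this is established, the conclusion $s_t \le {\vec r}(s_{t-1}, w')$ follows by the monotonicity argument described above.
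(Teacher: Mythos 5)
Your strategy is sound and, once the final induction is pinned down, it yields a correct proof; it is best described as a reorganization of the paper's argument rather than a different one. The paper inducts on the prefix length $d$, proving $s_{t-n+d}[i]\le {\vec r}_d(s_{t-n+d-1},w'_d)$ for $i\le d$, where ${\vec r}_d=\sdm{{\vec p}_{t-n+d}\dots{\vec p}_{t-n+1}}$; because each ${\vec r}_d$ is anchored at a different state $s_{t-n+d-1}$, the inductive step must shift an evaluation point from $s_{t-n+d-2}$ to $s_{t-n+d-1}$, and that is exactly where Lemma~\ref{lem:sdm-property} is invoked. You anchor everything at $s_{t-1}$ and compare the intermediate values $u_{n-1-m}$ against progressively earlier trace states $s_{t-1-m}$; this avoids the shift (and any appeal to Lemma~\ref{lem:sdm-property}), at the cost of carrying the comparison across $m$ steps of the trace, which is paid for precisely by the monotonicity of self-dependent coordinates along the trace, as you note.

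The ``main obstacle'' you flag at the end is not actually an obstacle, and your own phrase ``matching it against the corresponding step of the actual trace'' already contains its resolution. The indices align: $u_{n-1-m}$ is obtained from $u_{n-2-m}$ by applying $\sdpart{{\vec p}_{t-1-m}}$, and $s_{t-1-m}$ is obtained from $s_{t-2-m}$ by applying the very same ${\vec p}_{t-1-m}$, with which $\sdpart{{\vec p}_{t-1-m}}$ agrees on non-self-dependent coordinates. So every level of the peeling \emph{is} one fixed polynomial evaluated at two related arguments; similarity is needed only so that all the ${\vec p}_z$ share one self-dependent set and one topological order. Concretely, prove by downward induction on $m$ (from $n-1$ to $0$) that $u_{n-1-m}[k]\ge s_{t-1-m}[k]$ for every $k$ that is self-dependent or satisfies $k\le n-m$. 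In the base case $m=n-1$ only self-dependent $k$ or $k=1$ occur, and both are handled by trace-monotonicity of self-dependent coordinates (recall $u_0=s_{t-1}$, and $x_1$ is self-dependent by neatness). In the inductive step, self-dependent $k$ is again trace-monotonicity; a non-self-dependent $k\le n-m$ forces $m\le n-2$, so the peeling is legal, ${\vec p}_{t-1-m}[k]$ depends only on $x_e$ with $e<k$, hence $e\le n-(m+1)$ and the induction hypothesis applies, and $w'\ge w_{t-1-m}$ (here $t-1-m\ge t-n+1$, so $w_{t-1-m}$ is indeed a summand of $w'$) finishes the step by monotonicity. Setting $m=0$ gives $u_{n-1}\ge s_{t-1}$, and the conclusion follows exactly as you describe.
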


The point is to replace ${\vec p}_t$ by $\vec r$ as a bound on the last step, the gain being the convenient form of $\vec r$, namely an IMP\@.

\begin{proof}
We use induction on $d$ to prove the following claim for all $0 < d \le n$:

\bigskip\noindent
Let  ${\vec r}_d = \sdm{{\vec p}_{t-n+d}\:\dots\:{\vec p}_{t-n+1}}$ and let $w'_{d} = \sum_{z=t-n+1}^{t-n+d} w_z$.

\bigskip\noindent
Then for $i\le d$, $s_{t-n+d}[i] \le {\vec r}_d (s_{t-n+d-1},w'_{d})$.

\bigskip\noindent
Note that this claim implies the lemma's statement (setting $d=n$). We now move to its proof.

\bigskip\noindent
\emph{Base case}: $d = 1$. Hence ${\vec r}_d = {\vec p}_{t-n+1}$. The claim only concerns $i=1$, and
$s_{t-n+1}[i] \le {\vec p}_{t-n} (s_{t-n+1},w'_{1})[i]$ by definition.

\bigskip\noindent
\emph{Inductive case}: $d > 1$.  Then
\[
 {\vec r}_d  =  {\vec p}_{t-n+d} \circ \sdpart{{\vec p}_{t-n+d-1}} \circ \dots \circ \sdpart{{\vec p}_{t-n+1}} \,.
\]
Consider $s_{t-n+d}[i]$, for some  $i \le d$.  By assumption,
\begin{equation} \label{eq:st-l}
 s_{t-n+d}[i] = {\vec p}_{t-n+d} [i] (s_{t-n+d-1},w_{t-n+d}) \,.
\end{equation}
Let us consider the entries $s_{t-n+d-1}[j]$ on which the above expression may depend;
thus  $j\le i$.
Let us focus first on the case that $x_j$ is self-dependent. We observe that
\begin{equation} \label{eq:j-sd}
s_{t-n+d-1}[j]  =   ( \sdpart{{\vec p}_{t-n+d-1}} \circ \dots \circ \sdpart{{\vec p}_{t-n+1}} ) [j] (s_{t-n+d-1},w)
\end{equation}
for any $w$, since $ \sdpart{\vec p}[j] = x_j$ for all ${\vec p}\in \setS$
 (note that both sides of the equation refer to $s_{t-n+d-1}$).

When $x_j$ is not self-dependent, we have (using the induction hypothesis, and the definition of self-dependent cut)
\begin{equation} \label{eq:j-not-sd}
\begin{aligned}
s_{t-n+d-1}[j]  &\le   ( {\vec p}_{t-n+d-1} \circ \sdpart{{\vec p}_{t-n+d-2}} \circ \dots \circ \sdpart{{\vec p}_{t-n+1}} ) [j] (s_{t-n+d-2},w'_{d-1})
                 \\        &=   ( \sdpart{{\vec p}_{t-n+d-1}} \circ \dots \circ \sdpart{{\vec p}_{t-n+1}} ) [j] (s_{t-n+d-2},w'_{d-1})
\end{aligned}
\end{equation}
We now wish to replace $s_{t-n+d-2}$ in the last expression by $s_{t-n+d-1}$.  To this end
we apply Lemma~\ref{lem:sdm-property}, establishing that the entries $s_{t-n+d-2}[e]$ that influence the last expression are either self-dependent
or have index $e\le j-d+1 \le 1$. But in the latter case $e=1$ and $x_1$ is certainly self-dependent. We conclude that
${\vec p}_{t-n+d-1}[e] \ge x_ e$, hence $s_{t-n+d-1}[e] \ge s_{t-n+d-2}[e]$.  We now obtain from~\eqref{eq:j-not-sd} that
\begin{align*}
s_{t-n+d-1}[j]  &\le     ( \sdpart{{\vec p}_{t-n+d-1}} \circ \dots \circ \sdpart{{\vec p}_{t-n+1}} ) [j] (s_{t-n+d-2},w'_{d-1})  \\
                        &\le ( \sdpart{{\vec p}_{t-n+d-1}} \circ \dots \circ \sdpart{{\vec p}_{t-n+1}} ) [j] (s_{t-n+d-1},w'_{d-1}) \,.
\end{align*}

\noindent
We substitute this in~\eqref{eq:st-l} to obtain
\begin{align*}
 s_{t-n+d}[i] &\le  ( {\vec p}_{t-n+d} [i] \circ  ( \sdpart{{\vec p}_{t-n+d-1}} \circ \dots \circ \sdpart{{\vec p}_{t-n+1}} ) )
(s_{t-n+d-1},  w'_{d-1} + w_{t-n+d}) \\
 &=  {\vec r}_d (s_{t-n+d-1},w'_{d})  \,. \qedhere
\end{align*}
\end{proof}


\begin{lem}%
\label{lem:idempotent-ub}
Let $\setS$ be a set of similar $\tau$-MPs such that $\tra{\vec p}$, for all ${\vec p}\in \setS$, is neat.
 Then every weighted trace over $\setS$ of length greater than $n$ has an upper bound in the set
\( \left(\left.(\sdc{\mathcal S}^\tau)\right.^{\bigstar}   \circ {\mathcal S}^{(n)}\right) [n\tau / \tau]  \).
\end{lem}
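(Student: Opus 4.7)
The plan is to decompose any weighted trace $\sigma$ over $\setS$ of length $t > n$ as $\sigma = \rho\eta$, where the prefix $\rho$ consists of the first $n$ transitions and the suffix $\eta$ of the remaining $t - n$. The prefix is easy: by Lemma~\ref{lem:comp-ub}, the ordinary composition of its $n$ transitions bounds $\rho$, and by definition this composition lies in ${\mathcal S}^{(n)}$. The work then reduces to bounding $\eta$ by some element of $(\sdc{\mathcal S}^\tau)^{\bigstar}$; a further appeal to Lemma~\ref{lem:comp-ub} combines the two bounds into one lying in $(\sdc{\mathcal S}^\tau)^{\bigstar} \circ {\mathcal S}^{(n)}$, modulo the subsequent $[n\tau/\tau]$ substitution needed to compensate for weight inflation.

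To bound $\eta$, I would apply Lemma~\ref{lem:bounding-by-sdm} at each suffix position $i \in \{n+1,\dots,t\}$: this replaces the single transition ${\vec p}_i$ by the IMP ${\vec r}_i = \sdm{{\vec p}_i\:\cdots\:{\vec p}_{i-n+1}} \in \sdc{\mathcal S}$, at the cost of inflating the step's weight from $w_i$ to $w'_i = \sum_{z=i-n+1}^i w_z$. Since generalization only adds monomials with non-negative coefficients, ${\vec r}_i^\tau \in \sdc{\mathcal S}^\tau$ also bounds this inflated step, and ${\vec r}_i^\tau$ is $\tau$-closed by construction. The elements of $\sdc{\mathcal S}^\tau$ are pairwise similar (inheriting similarity from $\setS$) and neat, so Lemma~\ref{lem:taucomp-ub} applies iteratively: the $\taucomp$-composition ${\vec r}_t^\tau \taucomp \cdots \taucomp {\vec r}_{n+1}^\tau$ bounds $\eta$ and by definition lies in $(\sdc{\mathcal S}^\tau)^{\bigstar}$.

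Finally, the weight inflation needs to be reconciled. Each original weight $w_j$ enters at most $n$ of the inflated sums $w'_i$ (since $w_j$ contributes to $w'_i$ iff $j \le i \le j + n - 1$), so the total effective weight used by the suffix bound is at most $n\wgt{\eta}$; together with the prefix's uninflated weight $\wgt{\rho} \le \wgt{\sigma}$, the composed $\tau$-MP is valid whenever its $\tau$ parameter is at least $n\wgt{\sigma}$. Substituting $n\tau$ for $\tau$ throughout yields a bound valid for every $\tau \ge \wgt{\sigma}$, which is exactly what Definition of ``bounds'' requires, and the result lives in the claimed set $\bigl((\sdc{\mathcal S}^\tau)^{\bigstar} \circ {\mathcal S}^{(n)}\bigr)[n\tau/\tau]$.

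The main obstacle I anticipate is bookkeeping: verifying that the preconditions of each invoked lemma---similarity and neatness of the IMPs in $\sdc{\mathcal S}$, $\tau$-closedness after generalization, and that replacing $\vec r_i$ with $\vec r_i^\tau$ preserves the pointwise upper bound---propagate soundly through the chain. A secondary concern is making sure the weight accounting is tight enough that a single factor of $n$ suffices: the per-step inflation by at most $n$ must not compound when we $\taucomp$-compose a possibly long suffix, and it is precisely the $\taucomp$ operator (rather than ordinary $\circ$) that prevents coefficient blow-up and keeps the cumulative factor bounded by $n$.
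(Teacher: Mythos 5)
Your proposal is correct and follows essentially the same route as the paper's proof: replace each step beyond the $n$th by an IMP from $\sdc{\mathcal S}$ via Lemma~\ref{lem:bounding-by-sdm} with inflated weight, pass to its generalization to obtain $\tau$-closedness, combine via Lemmas~\ref{lem:comp-ub} and~\ref{lem:taucomp-ub}, and absorb the weight inflation (total at most $n\wgt{\sigma}$, counting the prefix weights that also enter the inflated sums) by the substitution $[n\tau/\tau]$.
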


\begin{proof}
Write such a trace as  $\sigma = \wtrans{{\vec p}_1}{s_{0}}{s_1}{w_1}\dots\wtrans{{\vec p}_t}{s_{t-1}}{s_t}{w_t}$.
Let $\sigma_i$ be the $i$th step, namely $\wtrans{{\vec p}_i}{s_{i-1}}{s_i}{w_i}$.
For each $i > n $, by the last lemma,  the $i$th transition is bounded by a $\tau$-MP  ${\vec r}\in  \sdc{\mathcal S}$,
with modified weight.  If $\vec r$ bounds a certain transition, then ${\vec r}^\tau$ certainly does.
Consequently,
by Lemmas~\ref{lem:comp-ub} and~\ref{lem:taucomp-ub},
$\sigma_1 \dots \sigma_n  \sigma_{n +1}' \dots \sigma_{t}'$  (where the primed transitions use modified weights)
is bounded by a $\tau$-MP ${\vec q} \in \left.(\sdc{\mathcal S}^\tau)\right.^{\bigstar} \circ {\mathcal S}^n$.
In order to get rid of the modified weight, we note that
\begin{align*}
 \wgt{\sigma_1 \dots \sigma_n   \sigma_{n +1}'\dots \sigma_t'}  & = \wgt{\sigma_1 \dots \sigma_n  } + \sum_{i=n +1}^t \wgt{\sigma_{i}'}  \\
  &  =  \sum_{i=1}^n  w_i +  \sum_{i=n +1}^t \,\sum_{z=i-n+1}^{i} w_z \\
   & \le n \wgt{\sigma}
 \end{align*}
 Thus we see that ${\vec q}[n \tau/\tau]$ bounds $\sigma$ using its original weight.
 \end{proof}

Let $\setS$ be the SDL under analysis, and $\alpha(\setS)$ map it into $\absppol$. The latter is a monoid with respect to the composition operation,
and we obtain a monoid homomorphism $\alpha: \tseqs{\setS} \to \absppol$ from abstract traces to $\absppol$.
The \emph{Factorization Forest Theorem}
of Imre Simon~\cite{Simon:TCS:90} shows that such a homomorphism induces a useful structure on the traces, provided that the codomain is a
\emph{finite} monoid.  Now, $\absppol$ is, of course, infinite; but if we assume that the loop under analysis is polynomially bounded, then,
as argued in the proof of termination (Corollary~\ref{cor:termination}), the closure $\closure{\alpha(\setS)}$ is finite. Note that it is a sub-monoid of $\absppol$.
Thus, Simon's theorem can be applied. Instead of the original formulation,
we can use a convenient corollary of Simon's theorem, stated by Boja\'nczyk~\cite{Bojanczyk:DLT:09} (actually we will not use his formulation but
a simplified one, as we do not need its full power).

In the statement of this result, we refer to $\alpha(\sigma)$, with $\sigma\in \tseqs{\setS}$,  as the \emph{type} of $\sigma$.
We consider regular expressions constructed using the operators: concatenation, union and Kleene-plus (where $E^+$ generates the union of all
languages generated by $E^i$ for $i>0$).
A regular expression $E$ over the alphabet $\setS$ is \emph{well-typed} if for each of its sub-expressions $F$ (including $E$),
all words (traces) generated by $F$ have the same type, which is then the type of the expression.

\begin{thm}[Boja\'nczyk]%
\label{thm:regexp}
The existence of a homomorphism $\alpha: {\setS}^* \to M$, where $M$ is a finite monoid, implies that $\setS^*$ can be generated by
a finite union of well-typed regular expressions.
\end{thm}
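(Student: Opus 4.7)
The plan is to derive this theorem as a direct corollary of Simon's Factorization Forest Theorem~\cite{Simon:TCS:90}. Simon's theorem asserts that for any homomorphism $\alpha\colon \setS^* \to M$ into a finite monoid $M$, every nonempty $w \in \setS^+$ admits a \emph{factorization forest}: a rooted ordered tree whose leaves, read left-to-right, spell $w$, and in which every internal node either is binary, or has three-or-more children all of whose subtrees have the same $\alpha$-image, that common image necessarily being an idempotent of $M$. The quantitative content of Simon's theorem is that the height of such a forest can be bounded by a constant $h = h(|M|)$ depending only on $|M|$.

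First I would fix this $h$ and enumerate the finite set of \emph{templates}: labeled rooted ordered trees of height at most $h$ in which every node carries a type from $M$, the type of every internal node equals (or is consistent with) the product of its children's types, and every $k$-ary node with $k \ge 3$ has children all of identical template, whose common root type is an idempotent of $M$. Since both $|M|$ and $h$ are finite, the number of templates is finite. Each template $\mathcal{T}$ is translated into a regular expression $E_{\mathcal{T}}$ by structural induction: a leaf of type $m$ becomes the finite union $\bigcup \{a \in \setS : \alpha(a) = m\}$; a binary internal node becomes the concatenation of the expressions for its two children; and a $k$-ary internal node with $k \ge 3$ becomes the Kleene-plus of the single expression associated with its (identical) children.

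Well-typedness of each $E_{\mathcal{T}}$ follows by induction on the template: every subexpression corresponds to a subtemplate with a definite root type $m$, and one checks that every word generated by the subexpression has $\alpha$-image equal to $m$. The binary case uses the multiplicativity of $\alpha$, while the Kleene-plus case uses precisely the idempotence of the common child-type. Finally, every nonempty $w \in \setS^+$ is generated by some $E_{\mathcal{T}}$ because Simon's theorem hands us a factorization forest of height at most $h$ for $w$, and stripping its underlying shape and node-types yields one of the enumerated templates. The finite union $\bigcup_{\mathcal{T}} E_{\mathcal{T}}$, adjoined with a trivial expression for $\eps$ if $\setS^*$ is taken to contain the empty word, therefore generates $\setS^*$. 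The main obstacle is purely bookkeeping---carefully defining the notion of template so that the translation to regular expressions preserves well-typedness---with no substantive mathematical difficulty beyond invoking Simon's theorem itself.
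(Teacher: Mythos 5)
Your derivation is correct: the paper does not prove this theorem but cites it as a known corollary of Simon's Factorization Forest Theorem due to Boja\'nczyk, and your template-based argument is exactly the standard derivation from that theorem (bounded-height forests give finitely many tree shapes, binary nodes become concatenations, wide nodes with idempotent type become Kleene-plus, and well-typedness follows from multiplicativity of $\alpha$ plus idempotence). No gaps worth noting beyond the bookkeeping you already acknowledge.
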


Now, all we have to do is prove that the set of AMPs returned by our algorithm provides upper bounds for all the traces generated by each of these
regular expressions.  Note that this theorem highlights the role of \emph{idempotence} in $\absppol$, since for an expression $E^+$ to be well-typed,
$\alpha(\sigma)$ has to be the same for all words $\sigma$ generated by $E^+$, which implies that it is an idempotent element.

\begin{thm}%
\label{thm:SDL-ub}
Let $\setS$ be a polynomially-bounded SDL\@.  Let $\setA$ be the set of abstract $\tau$-MPs returned by Algorithm \procSDL{}.  Then
there is a finite set $\setB \subset \tppol$ such that $\alpha(\setB) \subseteq \setA$
and for any trace $\sigma\in \tseqs{\setS}$
there exists
${\vec p}\in\setB$ that bounds $\sigma$.
\end{thm}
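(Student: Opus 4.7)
The plan is to combine the finiteness of the abstract closure with Boja\'nczyk's regular-expression decomposition, and then induct on the structure of well-typed regular expressions. First, Corollary~\ref{cor:termination} gives that $\closure{\alpha(\setS)}$ is a finite submonoid of $\absppol$, so the type map $\alpha \colon \tseqs{\setS}\to\closure{\alpha(\setS)}$ is a homomorphism into a finite monoid. Theorem~\ref{thm:regexp} then produces finitely many well-typed regular expressions $E_1,\ldots,E_K$ over $\setS$ with $\tseqs{\setS}=\bigcup_k L(E_k)$. It suffices to construct, for each $E_k$, a finite set $\setB_{E_k}\subset\tppol$ whose members bound every $\sigma\in L(E_k)$ and satisfy $\alpha(\setB_{E_k})\subseteq\setA$; the final $\setB$ is then the union.

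I induct on a well-typed $E$. For an atom $E=\vec{p}\in\setS$, take $\setB_E=\{\vec{p}\}$; for a union $E=E_1\cup E_2$, take $\setB_E=\setB_{E_1}\cup\setB_{E_2}$; for a concatenation $E=E_1 E_2$, take $\setB_E=\{\vec{b}_2\circ \vec{b}_1\mid (\vec{b}_1,\vec{b}_2)\in \setB_{E_1}\times\setB_{E_2}\}$. The bounding property for concatenation is immediate from Lemma~\ref{lem:comp-ub}, and the identity $\alpha(\vec{b}_2\circ\vec{b}_1)=\alpha(\vec{b}_2)\acirc\alpha(\vec{b}_1)$ together with the algorithm's closure step keeps the abstraction inside $\setA$.

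The essential case is Kleene-plus $E=F^+$. Well-typedness forces all traces in $L(F)$ to share one common type $\eta$, and since $\eta\acirc\eta$ is also this common type, $\eta$ is idempotent; realizability (Theorem~\ref{thm:SDL-lb}) combined with Lemma~\ref{lem:nearly-dag} allows us to assume $\eta$ is neat. Decomposing $\sigma\in L(F^+)$ as $\sigma_1\cdots\sigma_m$ with each $\sigma_i\in L(F)$ bounded by some $\vec{b}_i\in\setB_F$, I view $\sigma$ as a weighted trace over $\setB_F$ and apply Lemma~\ref{lem:idempotent-ub} to obtain a bound in $\bigl((\sdc{\setB_F}^\tau)^{\bigstar}\circ\setB_F^{(n)}\bigr)[n\tau/\tau]$, which is a finite set by Lemma~\ref{lem:tau-finiteness}.

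The hard part will be verifying that the abstractions of these Kleene-plus bounds all lie in $\setA$. Three correspondences are needed: the substitution $[n\tau/\tau]$ is invisible under $\alpha$; $\taucomp$ and $\circ$ share the same abstraction (noted directly after the definition of $\taucomp$), so the $\bigstar$-closure abstracts to an $\acirc$-closure captured by the algorithm's closure step; and the self-dependent cut construction $\sdm{\cdot}$, composed with MPs whose common abstraction is the idempotent neat $\eta$, abstracts back to $\eta$ itself, so $\alpha(\sdm{\cdots}^\tau)=\eta^\tau$---exactly what the algorithm's generalization step produces on $\eta\in\setA$. This last correspondence is the delicate one, hinging on how $\sdpart$ interacts with $\acirc$-idempotence in the neat setting.
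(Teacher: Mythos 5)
Your overall architecture matches the paper's proof: the finiteness of $\closure{\alpha(\setS)}$ (Corollary~\ref{cor:termination}) licenses Theorem~\ref{thm:regexp}, the sets $\setB_E$ are built by structural induction with the atom, union and concatenation cases handled exactly as you describe, and the Kleene-plus case goes through Lemmas~\ref{lem:idempotent-ub} and~\ref{lem:tau-finiteness}. The gap is precisely in the step you flag as delicate, and the identity you propose to prove there is false: in general $\alpha(\sdm{{\vec p}_1\cdots{\vec p}_n}) \ne \eta$. The self-dependent cut replaces the self-dependent entries of ${\vec p}_2,\dots,{\vec p}_n$ by $x_i$ but leaves the non-self-dependent entries intact, and composition then substitutes those entries for the non-self-dependent variables occurring in ${\vec p}_1$; monomials of $\eta$ involving non-self-dependent variables are thereby rewritten into different monomials rather than preserved. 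The paper's own motivating example exhibits this: for the idempotent $\vec p=\tuple{x_1+x_2+x_3,\ x_3,\ x_3}$ (where $x_2$ is not self-dependent) the extracted IMP is $\tuple{x_1+2x_3,\ x_3,\ x_3}$, whose abstraction $\tuple{x_1+x_3,\ x_3,\ x_3}$ differs from $\eta$. Hence $\alpha(\sdm{\cdots}^\tau)\ne\eta^\tau$, and worse, these abstractions need not lie in $\setA$ at all, since the algorithm never forms self-dependent cuts (in the one-element example above, every non-identity AMP the algorithm produces retains the monomial $x_2$ in its first component).

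The paper closes this step differently: it never abstracts the $\sdm{\cdot}$-based bounds. After Lemma~\ref{lem:tau-finiteness} replaces $(\sdc{\setB_F}^\tau)^{\bigstar}$ by $(\sdc{\setB_F}^\tau)^{\taucomp (\ell)}$, the bound is \emph{relaxed upward} to $({\setB_F}^\tau)^{\taucomp (\ell)}$, i.e., the generalizations of the original similar MPs are substituted for the generalizations of their self-dependent cuts---sound because an upper bound may always be weakened. Only then is $\alpha$ applied, using $\alpha({\vec q}\taucomp{\vec p})=\alpha({\vec q})\acirc\alpha({\vec p})$ and the commutation of $\alpha$ with generalization to land on $({\alpha(\setB_F)}^\tau)^{\acirc(\ell)}\acirc{\alpha(\setB_F)}^{\acirc(n)}$, which the algorithm produces by closure, generalization and closure again. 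You would also need to carry realizability of the members of each $\setB_E$ as an explicit induction invariant, as the paper does: Theorem~\ref{thm:SDL-lb} speaks only about $\setA$, not about the bounds constructed in this proof, and realizability is what gives neatness via Lemma~\ref{lem:nearly-dag} and hence the applicability of Lemma~\ref{lem:idempotent-ub} to $\setB_F$.
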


\begin{proof}
We consider the regular expressions established by Theorem~\ref{thm:regexp} and all their sub-expressions: this is a finite set.
We construct a set of bounds ${\setB}_E$, with $\alpha({\setB}_E)\subseteq \setA$,
 for each such sub-expression $E$, by structural induction on the expressions. Clearly, this proves the theorem.
Importantly, the construction maintains these properties:
\begin{itemize}
\item If expression $E$ has type ${\vec a}\in \absppol$ then every ${\vec p}\in {\setB}_E$ has
$\tra{\vec p} = {\vec a}$
 (intuitively, the algorithm does not change the shape of the multi-polynomials except by adding $\tau$'s).
\item Moreover, $\vec p$ is realizable (this is not hard but requires a bit of attention since Theorem~\ref{thm:SDL-lb} does not refer to the
 bounds constructed in the current proof, but only to $\setA$).
\end{itemize}
If $E$ is a single transition $\vec p$ we set $\setB = \{\vec p\}$.

\bigskip\noindent
If $E$ is $FG$ we compose ${\setB}_F$ with ${\setB}_G$. Note that $\alpha({\setB}_G \circ{\setB}_F) = \alpha({\setB}_G) \acirc \alpha({\setB}_F) \subseteq \setA$
 thanks to the closure computation.  Realizability follows from Lemma~\ref{lem:comp-realizable}.

\bigskip\noindent
If $E$ is  $F+G$ we unite ${\setB}_F$ with ${\setB}_G$. Again, the abstraction of the result is in $\setA$.

\bigskip\noindent
It remains to consider an expression of the form $F^+$.
Consider ${\setB}_F$. By IH, it consists of realizable $\tau$-MPs and are all similar to a single idempotent AMP,
so the requirements of Lemma~\ref{lem:idempotent-ub} are satisfied. Let $\Phi$ be the set of traces generated by $F$. Then every $\sigma\in\Phi$
has a bound in ${\setB}_F$.  For a concatenation of $i\le n$ such traces we have a bound in ${({\setB}_F)}^{(i)}$. For a concatenation of
more than $n$ traces, consider each of these traces as a weighted transition where the weight represents the trace's length.
By Lemma~\ref{lem:idempotent-ub}, the concatenation of the traces has a bound in
\begin{equation}
    \left(\left.(\sdc{\setB_F}^\tau)\right.^{\bigstar}   \circ {\setB_F}^{(n)}\right) [n\tau / \tau] \,. \label{eq:bounds}
\end{equation}
By Lemma~\ref{lem:tau-finiteness},
we can replace $\left.(\sdc{\setB_F}^\tau)\right.^{\bigstar}$ with
$\left.(\sdc{\setB_F}^\tau)\right.^{\taucomp (\ell)}$ for some $\ell > 0$.
Moreover, it is sound to relax the upper bound to $\left.({\setB_F}^\tau)\right.^{\taucomp (\ell)}$. Now~\eqref{eq:bounds} becomes
\[
    \left(\left.({\setB_F}^\tau)\right.^{\taucomp (\ell)}   \circ {\setB_F}^{(n)}\right) [n\tau / \tau] \,.
\]
We claim that this set of upper bounds satisfies all our requirements. For realizability, all we need is the observation that substituting
$n\tau$ for $\tau$ does not affect realizability; plus Lemma~\ref{lem:comp-realizable} and the fact
${\vec q}\taucomp {\vec p} \le {\vec q}\circ {\vec p}$.

Finally we look at the abstractions of these $\tau$-MPs, namely the set
\[
    \alpha\left(\left(\left.({\setB_F}^\tau)\right.^{\taucomp (\ell)}   \circ {\setB_F}^{(n)}\right) [n\tau / \tau]\right) =
   \left.({\alpha({\setB_F})}^\tau)\right.^{\acirc (\ell)}  \acirc {(\alpha({\setB_F}))}^{\acirc (n)}
\,.
\]
These AMPs are included
in the result of \procSDL{}, since they they are produced by closure, generalization and closure again.
Note also that the bounds conform with the type of $F$ (when $\tau$'s are ignored).
\end{proof}

\newcommand{\cscc}[0]{\boxplus}

\section{On the Computational Complexity of our Problem}%
\label{sec-complexity}

Our main goal in this research was to establish that the problem of computing tight bounds is solvable.  However, once proved solvable,
the question of its complexity arises.
For simplicity we assume that we are only dealing with programs where all variables are polynomially bounded.
We consider the complexity in terms of three parameters: $|P|$, the size of the program; $n$, the number of variables; and $d$, the highest degree reached.
We note that if the user wishes to verify a desired degree bound $d$, say check that a program is at most of cubic time complexity,
it is possible to use an abstraction that truncates exponents higher than $d$ and the complexity will be reduced accordingly.
First, we give an upper bound.%
\footnote{To avoid any confusion, we are obliged to point out that in the preliminary version (proceedings of FoSSaCS 2019) a wrong expression for the
upper bound was given.}

\begin{thm}
Our algorithm runs in time polynomial in $|P|\cdot {2^{n^{d+1}}}$.
\end{thm}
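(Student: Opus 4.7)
The plan is to bound the size of the abstract domain $\tabsppol$ and then argue that the algorithm performs only polynomially many operations (in that bound) per loop, with polynomially many loops processed overall. The whole analysis is essentially a saturation in $\tabsppol$, so once the domain bound is in hand, everything else is a routine bookkeeping argument.

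First, I would count the number of distinct monomials that can appear. A monomial over $x_1,\dots,x_n,\tau$ of total degree at most $d$ is specified by an $(n+1)$-tuple of non-negative exponents summing to at most $d$, so there are at most $\binom{n+1+d}{d} = O((n+d+1)^d)$ of them; call this quantity $M$. Since an abstract polynomial in $\tabspol$ is just a \emph{set} of such monomials (Definition~\ref{def:abspol}), $|\tabspol| \le 2^{M}$. An abstract multi-polynomial is an $n$-tuple of these, so
\[
  |\tabsppol| \;\le\; 2^{nM} \;\le\; 2^{O(n (n+d+1)^d)} \;=\; 2^{O(n^{d+1})}.
\]
Call this bound $N = 2^{O(n^{d+1})}$; this is the ceiling on the number of distinct objects the algorithm can ever compute within a single call to \procSDL{}.

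Next, I would analyse one invocation of \procSDL{}$(\setS)$. The outer repeat-loop of the procedure strictly grows $T \subseteq \tabsppol$ until a fixed point is reached (by Corollary~\ref{cor:termination} this happens, under the polynomial-boundedness assumption), so at most $N$ outer iterations occur. Each iteration does two things: (a) a closure $\closure{T}$, which can be computed by the standard worklist saturation, requiring at most $|T|^2 \le N^2$ abstract compositions $\vec q \acirc \vec p$; (b) a generalization step that examines each $\vec p \in T$, tests idempotence via one composition, and (if applicable) emits $\vec p^\tau$, for a total of $O(N)$ operations. Each individual abstract composition or generalization manipulates AMPs whose representation size is polynomial in $M$ and hence in $n^d$; so each such operation runs in time polynomial in $N$. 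Therefore \procSDL{} runs in time polynomial in $N$, i.e., polynomial in $2^{n^{d+1}}$.

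Finally, I would lift this to the whole program by structural induction on the command, following the reduction of Section~\ref{sec:basic}. Symbolic evaluation of straight-line code and of \texttt{choose} only applies composition or union to the accumulated set of MPs, which keeps us inside (at worst) a set of size $N$ and each step costs time polynomial in $N$. Each syntactic \texttt{loop} triggers one call to \procSDL{}. Since the command has $|P|$ syntactic nodes, there are at most $|P|$ such operations over the course of the analysis, yielding a total running time polynomial in $|P| \cdot 2^{n^{d+1}}$. The only subtle point to double-check is the monomial count leading to the exponent $n^{d+1}$ (the ``$+1$'' comes from multiplying the $O(n^d)$ monomial bound by the $n$ components of a multi-polynomial); I would expect this to be the one place where a careless estimate could lose the claimed bound.
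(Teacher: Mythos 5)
Your proposal is correct and follows essentially the same route as the paper: bound the number of distinct AMPs by roughly $2^{n^{d+1}}$ via a monomial count, observe that the saturation in \procSDL{} performs at most polynomially many operations in that bound, and charge the remaining bookkeeping to the $|P|$ syntactic nodes of the program. Your monomial count is slightly more careful (it accounts for $\tau$ and uses a binomial coefficient rather than the paper's $n^d$ estimate), but this only changes constants in the exponent and leaves the stated bound intact.
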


\begin{proof}
We estimate the complexity of our algorithm, based on bounding the number of different AMPs that may be encountered.
The number of monomials over $n$ variables is bounded (for $n>2$) by $n^d$ (think of a monomial as a product of at most $d$ variables.
For uniqueness list the indices in descending order. The number of descending lists of length at most $d$ is bounded by $n^d$).
 Since a MP is an $n$-tuple of sets of monomials, the number of possible AMPs
is less than ${(2^{n^d})}^n = 2^{n^{d+1}}$.

Next, we consider the running time of the analysis of a loop (independent of the position of the loop
in the program, so we can later just multiply this time bound by the number of loops in the program).
 In  procedure \procSDL{}, a set of AMPs is maintained and repeatedly enlarged (by applying closure and generalization), until stable.
The time to perform each round of enlargement is polynomial in the size of the resulting set (more precisely its representation, but with any reasonable implementation
this does not change much) and the number of rounds is clearly bounded by the size of the final set.   So we deduce that the total time is polynomial in
$2^{n^{d+1}}$.

Finally we should add the time to  represent non-looping code as a set of AMPs, and other ``book-keeping'' operations, but clearly they contribute at most a polynomial in $|P|$ and the number of AMPs.
\end{proof}

Is this a satisfactory upper bound?  It seems high, and is probably not tight. We know, however, that a solution to our problem must use at least exponential time
in the worst case,
because it has a potentially high \emph{output size}.
\begin{clm}
There is a  command, of size polynomial in $n$ and $d$, which requires \[{(\lfloor n/(2d)\rfloor)}^{\lfloor nd/2\rfloor }\] AMPs  to describe its result.
\end{clm}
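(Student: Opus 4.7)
The plan is to exhibit a command $P_{n,d}$ with $|P_{n,d}|=O(n^2)$ that realizes $m^{\lfloor nd/2\rfloor}$ pairwise incomparable AMPs, where $m=\lfloor n/(2d)\rfloor$; tightness (Theorem~\ref{thm:mainresult}, clause~(2)) then forces each such AMP to require its own representative in any correct bound set $\setB$. Among the $n$ variables, designate $m$ ``inputs'' $x_1,\dots,x_m$ (never reassigned), $d$ ``markers'' $z_1,\dots,z_d$, $d$ auxiliaries $w_1,\dots,w_d$, and $s$ ``slot'' variables $y_1,\dots,y_s$ with $s$ chosen so that $sd\ge\lfloor nd/2\rfloor$ (easily arranged when $n$ is large enough compared to $d$; below that threshold $m\le 1$ makes the claim vacuous). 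A straight-line prefix of size $O(d^2)$ computes $w_k:=z_k^{d-1}$ for each $k$. Then, for every pair $(j,k)$ in a fixed set $B$ of exactly $\lfloor nd/2\rfloor$ index pairs, append an $m$-way disjunctive block (expressed as $m-1$ nested binary \texttt{choose}'s):
\[
 y_j:=y_j+x_1\cdot w_k\ \ \texttt{or}\ \dots\ \texttt{or}\ \ y_j:=y_j+x_m\cdot w_k .
\]
Total code size $O(d^2 + |B|\cdot m)=O(n^2)$, polynomial in $n$ and $d$.

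\textbf{Distinct, pairwise incomparable AMPs.} An execution is determined by a choice function $c:B\to[m]$, and there are exactly $m^{|B|}=m^{\lfloor nd/2\rfloor}$ of them. Execution $c$ produces the MP $\vec q_c$ whose $j$-th slot evaluates to $x_{?_j}+\sum_{k:(j,k)\in B}x_{c(j,k)}\cdot z_k^{d-1}$, where $x_{?_j}$ is the initial value of $y_j$. The ``flag'' monomials $\{x_i\cdot z_k^{d-1}\}_{(i,k)\in[m]\times[d]}$ are pairwise incomparable in the componentwise exponent order, because any two distinct such monomials either involve different $x$-variables or different $z$-variables, and hence neither divides the other. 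Consequently, if $c\ne c'$, picking any $(j,k)\in B$ with $c(j,k)\ne c'(j,k)$ exhibits a monomial $x_{c(j,k)}z_k^{d-1}$ of $\vec q_c[j]$ that has no dominator among the monomials of $\vec q_{c'}[j]$. In particular, no $\vec q_{c'}$ asymptotically dominates $\vec q_c$ unless $c'=c$.

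\textbf{From incomparability to $|\setB|\ge m^{\lfloor nd/2\rfloor}$.} Let $\setB$ satisfy the two clauses of Theorem~\ref{thm:mainresult}. Clause~(1) assigns to each $c$ some $\vec p_c\in\setB$ with $c_{\vec p_c}\vec p_c\ge \vec q_c$ on all sufficiently large inputs; the standard substitution $x_i\mapsto t^{N_i}$ with generic positive $N_i$, letting $t\to\infty$, forces every monomial of $\vec q_c$ to be dominated (componentwise in exponents) by some monomial of $\vec p_c$. Clause~(2) applied to $\vec p_c$, combined with pigeonhole over the finite set of execution paths, gives a single execution $\vec q_{c''}$ whose values asymptotically dominate $\vec p_c$ along a $t\to\infty$ direction, and the same substitution trick then shows that $\vec q_{c''}$ monomially dominates $\vec p_c$, hence $\vec q_c$; by the incomparability above, $c''=c$. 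If $\vec p_c=\vec p_{c'}$ for $c\ne c'$ were possible, the same reasoning with $\vec q_{c'}$ in place of $\vec q_c$ would force $c''=c'$, contradicting $c''=c$. Thus $c\mapsto\vec p_c$ is injective, yielding $|\setB|\ge m^{\lfloor nd/2\rfloor}$. The main obstacle is the ``asymptotic $\Rightarrow$ monomial-wise'' step and the pigeonhole to get a single witnessing execution, both handled by the generic-exponent substitution; the rest is bookkeeping.
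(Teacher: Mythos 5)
Your construction differs from the paper's: there, each of $n/2$ output variables is assigned a \emph{product} of $d$ factors, each factor chosen non-deterministically from a block of $n/(2d)$ input variables, so every execution path deposits a single degree-$d$ monomial per output and the $(\lfloor n/(2d)\rfloor)^{\lfloor nd/2\rfloor}$ outcome AMPs are pairwise incomparable immediately (distinct monomials of equal degree never divide one another). Your additive variant with marker monomials $x_i z_k^{d-1}$ also yields that many pairwise incomparable outcome AMPs, and would serve the same purpose; do check the variable budget, though, since $\lfloor n/(2d)\rfloor$ inputs plus $2d$ markers and auxiliaries plus roughly $n/2$ slots exceeds $n$ on part of the non-vacuous range $n\ge 4d$.

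The genuine gap is in the counting argument, in two places. First, the step ``asymptotic domination (even everywhere, let alone along one generic ray) forces every monomial of the dominated polynomial to be componentwise dominated by some monomial of the dominating one'' is false: the paper's own example $xy$ versus $x^2+y^2$ shows the semantic order $\le$ does not refine the syntactic order $\lepoly$. Second, and more fundamentally, the statement you set out to prove---that \emph{any} $\setB$ satisfying the two clauses of Theorem~\ref{thm:mainresult} has at least $m^{\lfloor nd/2\rfloor}$ elements---is false for your command (and for the paper's): the singleton $\setB=\{\vec p\}$ with $\vec p[y_j]=y_j+\sum_k(\sum_i x_i)z_k^{d-1}$ satisfies clause~(1) with $c_{\vec p}=1$ and clause~(2) with $d_{\vec p}=1/m$, the witness being the execution that always selects the index maximizing $x_i$; because the disjunctive choices are independent, one execution simultaneously comes within a factor $1/m$ of the sum in every component. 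This is precisely where your clause-(2) step collapses: the witnessing execution $c''$ is this argmax path, not $c$, so $c\mapsto\vec p_c$ need not be injective. The claim must be read as the paper intends it---a lower bound on the number of pairwise incomparable, individually \emph{realizable} AMPs in the explicit disjunctive representation the algorithm outputs (see the surrounding discussion of output size and the remark that a $\max$ operator would make the description compact)---and under that reading the pairwise incomparability of the outcome AMPs, which both constructions deliver, is essentially the whole content; the two clauses of Theorem~\ref{thm:mainresult} alone cannot yield the bound.
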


\begin{proof}
We assume that $n$ is even and a multiple of $d$, so we can avoid the ``floor'' signs.
We use $m = n/2$ variables called $\X_1,\dots,\X_{m}$ and $m$ variables called $\Y_1,\dots,\Y_{m}$.
As usual let $x_i$ denote the initial value in $\X_i$.
For each $j\le m$ we write a piece of code that computes
\[  (x_1 \lor x_2 \lor \dots \lor x_{m/d}) \ast \cdots \ast (x_{m-(m/d)+1} \lor  \dots \lor x_{m}), \]
where the disjunction operator represents a non-deterministic choice. Hence we choose $d$ values and multiply them together.
It should be clear that this produces one of ${(m/d)}^d$ uncomparable monomials, and the result of analysing this command will have to
list all of them.  We assign the product to $\Y_j$.  Since the non-deterministic
choices are made independently for each $j$, to describe the outcome in $\Y_1,\dots,\Y_{m}$ we need ${({(m/d)}^d)}^m = {(m/d)}^{md}$ uncomparable AMPs.
\end{proof}

Thus, as long as we use an explicit AMP representation for the output of our algorithm, worst-case
complexity exponential in $nd$ is unavoidable---at least for $d$ smaller than $n$.
This does not necessarily rule out the applicability of the algorithm, as some algorithms susceptible to combinatorial explosion still prove usable in static analysis
applications (consider the closure-based algorithm
for Size-Change Termination).  At any rate, we intend to complement the work presented in this article by further research into improving the algorithm,
which we consider only a starting point, being the first complete algorithm for this problem.  We now list some speculations about this future work.
The reader may have noticed that for the ``bad'' example we used above, the description would be very compact if we were allowed to use the $\textbf{max}$ 
operator and embed it in expressions (for our output bound for each $\Y_j$ is just a product of $d$ ``max'' expressions). But this does not seem to be a panacea,
and we suspect that exponential output size---and running time, for sure---may be necessary even with \textbf{max}. We leave this as an open problem;
of course, the central problem is to identify the complexity class of the analysis problem.
We can also ask what happens if we redefine our problem so that the output size is small---a natural example is the case of a univariate bound;
or ask for an output-size dependent complexity function. In these cases
the output-size excuse for exponential complexity does not apply.
Another open problem that is raised by the above considerations is: what is the best bound on $d$ in terms of $|P|$ and $n$ alone?

\section{Algorithm Extensions and Open Problems}%
\label{sec:extensions}

In this section we list some ideas about how this research might be extended, specifically in terms of adding features to the subject language
(while keeping the completeness of the algorithm!), and whether we believe that our current approach
suffices for solving these extensions.

\subsection{``Unknown'' value}

When abstracting real-world code into a restricted language, it is common to have cases in which a value
has to be treated as ``unknown.'' It might be really unknown (determined by the environment) or the result of computations
that we cannot model in the restricted language (note that in our case, if we can \emph{bound} a computation by a polynomial expression we
are happy enough). It seems useful to extend our language by a special ``unknown'' value.  Another example of its usage,
following~\cite{JK08}, is to analyze the growth of variables in loops for which no iteration bound is known;
in this situation one cannot obtain a time bound for the program, but we can
still  bound computed values and may be able to draw conclusions regarding other quantities of interest, perhaps space complexity.
We simulate such a loop using the ``unknown'' value as the loop bound.
Concretely, this extension can be implemented by using a dedicated variable $x_u$ for anything unknown, and, throughout the algorithm,
replacing any expression that includes $x_u$ immediately by $x_u$. This prevents an explosion of the MP set (or even failure to reach a fixed point) because
of expressions including $x_u$.

\subsection{Resets}

\cite{B2010:DICE} extended the decidability result from~\cite{BJK08} to a language that contains the \emph{reset} command 
\verb/X := 0/.  This addition may seem trivial at first, however for the problem of deciding polynomial boundedness it caused the
complexity of decision to jump from PTIME to PSPACE-complete. The increased complexity arises from the need to recognize the situation
that a variable ``is definitely zero,'' subsequent to a particular execution path. In this case, a loop that has this variable as counter will
\emph{definitely} not execute. So the algorithm has to deal with tracking these 0's around. However, our algorithm
already tracks  data-flow rather precisely and the algorithm is exponential anyway. We believe that
our algorithm can be extended to handle
resets without further raising the complexity of the solution.

\subsection{Flowchart programs}

In~\cite{BAPineles:2016} the results of~\cite{BJK08} (and, implicitly, also~\cite{B2010:DICE}) are extended from a structured language,
that can be analyzed in a compositional manner (as we have done), to a ``flowchart'' language, where a program is presented as a
\emph{control-flow graph}, or flowchart, of arbitrary shape, together with \emph{annotations} that convey information regarding
iteration bounds. We argued there that this program form is more general and closer to the form used by several analysis tools for
real-world programs. The results of~\cite{BJK08} were carried over to this language by transforming flowchart programs into programs
in a well-structured language LARE that is  slightly more expressive than our core language. It seems that the same development should be doable
with precise polynomial-bound analysis, we only have to extend our algorithm to the language LARE\@. We have not investigated this in detail yet.

\subsection{Deterministic loops}

In~\cite{BK11}, Kristiansen and Ben-Amram looked at a variant of our core language where loops are deterministic: the semantics of
\verb/loop X {C}/ is to perform \pgt{C} \emph{precisely} as many times as the value of \pgt{X}.  It was shown that the decision problem
of polynomial boundedness, addressed in~\cite{BJK08}, becomes undecidable in this case; however the undecidability proof exploits
worst-case scenarios where some variables are constant while others grow.  It is conjectured that the problem is decidable if one only
asks about bounds that are either univariate, or multivariate but asymptotic in all variables. We propose the same conjecture with respect
to the problem of tight polynomial bounds.

\subsection{Increments and Decrements}

In our opinion, the feature that most strikingly marks our language as weak is the absence of increments and decrements (and explicit constants
in general; but if you have increment and reset you can generate other constants). However, anyone familiar with counter machines will realize that
including them would bring
 our language very close to counter machines and hence to undecidability; still, without deterministic loops, this model falls just \emph{a little}
short of counter machines. We pose the decidability of the polynomial bound problem in such a language (and the investigation of the model from a
computability viewpoint, in general) as a challenging open problem.

\subsection{Procedures}

Due to the compositional form of our algorithm, we suppose that extension of the language with first-order, non-recursive procedures is not hard, but have not
investigated this further.  A much greater challenge is to allow recursive procedures (one has to figure out what is a good way to do that since, of course,
we cannot allow unbounded recursion). Another one is to include high-order functions.
Avery, Kristiansen and Moyen outline in~\cite{AKM09} a possible approach
for promoting analyses like~\cite{BJK08} to higher-order programs, but a definite decidability result is not obtained.

\section{Related Work}%
\label{sec:rw}

Bound analysis, in the sense of finding symbolic bounds for data values, iteration bounds and related quantities,
is a classic field of program analysis~\cite{Wegbreit:75,Rosendahl89,ACE}. It is also an area of active research, with tools being
currently (or recently) developed including \tool{COSTA}~\cite{Albert-et-al:TCS:2011}, \tool{AProVE}~\cite{APROVE-JAR2017},
\tool{CiaoPP}~\cite{CiaoPP-TPLP2018} 
, $C^4B$~\cite{CHS:pldi2015},  
 \tool{Loopus}~\cite{SZV:jar2017}---and this is just a sample of tools for imperative programs. There is also work on functional and logic programs,
term rewriting systems, recurrence relations, etc.~that we cannot attempt to survey here.  In the rest of this section we point out work that is more directly
related to ours, and has even inspired it.

The LOOP language is due to Meyer and Ritchie~\cite{MR:67}, who note that it computes only primitive recursive functions, but complexity can rise very fast, even for programs with nesting-depth 2.
Subsequent work concerning similar languages~\cite{KasaiAdachi:80,
KN04,NW06,JK08}
attempted to analyze such programs more precisely;
most of them proposed syntactic criteria, or analysis algorithms, that are
sufficient for ensuring that the program lies in a desired class (often, polynomial-time programs),
but are not both necessary and sufficient: thus, they do not prove
decidability (the exception is~\cite{KN04} which has a decidability result for a weak ``core'' language).
The core language we use in this paper is from~Ben-Amram et al.~\cite{BJK08}, who observed that by introducing weak bounded loops instead of concrete loop commands and
non-deterministic branching instead of ``\texttt{if},'' we have weakened the semantics just enough to obtain decidability of polynomial growth-rate.
This research was motivated by observing that all the previous algorithms, although they implicitly relax the semantics (since they do not analyze
conditionals, etc.), do not provide completeness over the core language.
 Justifying the necessity of these relaxations,~\cite{BK11} showed undecidability for a language that can only do addition and definite loops (that cannot exit early).

In the vast literature on bound analysis in various forms,
 there are a few other works that give a complete solution for a weak language.
 \emph{Size-change programs} are considered by~\cite{CDZ-MFCS14,Zuleger-CSR15}. Size-change programs abstract away nearly everything in the program,
leaving a control-flow graph annotated with assertions about variables that decrease (or do not increase) in a transition. Thus, it does not assume structured
and explicit loops, and it cannot express information about values that increase. Both works yield tight bounds on the number of transitions until termination.

Dealing with a somewhat different problem,~\cite{Seidl-polynomial-invariants,Ouaknine-polynomial-invariants}
 both check, or find, \emph{invariants} in the form of polynomial equations. We find it remarkable that they give
complete solutions for  weak languages, where the weakness lies in the non-deterministic control-flow, as in our language.
  If one could give a complete solution for polynomial
\emph{inequalities}, this would imply a solution to our problem as well.

The \emph{joint spectral radius} problem for semigroups of matrices is related to our work as well (though we have not discovered this until recently,
due to the work being done in an entirely different context). Specifically,~\cite{JPB:2008} gives an algorithm that can be expressed in the language of our
work as follows: given a Simple Disjunctive Loop in which all polynomials are linear, the algorithm decides if the loop is polynomially bounded and if it is,
returns the highest degree of $\tau$ in the tight polynomial upper bound (over all variables). A closer inspection shows that it can actually determine the degree
in which $\tau$ enters the bound for every variable. Thus, it solves a certain aspect of the SDL analysis problem. Their algorithm is polynomial-time and uses an
approach similar to~\cite{BJK08}.

\section{Conclusion and Further Work}%
\label{sec:conclusion}

We have solved an open problem in the area of analyzing programs in a simple language with bounded loops. For our language, it has been previously
shown that it is possible to decide whether a variable's value, number of steps in the program, etc.,~are polynomially bounded.
Now, we have an algorithm that computes tight polynomial bounds on the final values of variables in terms of initial values.
The bounds are tight up to constant factors (suitable constants are also computable).
This result improves our understanding of what is computable by, and about, programs of this
form.  An interesting corollary of our algorithm is that as long as variables are \emph{polynomially bounded}, their worst-case bounds are
described tightly by (multivariate) \emph{polynomials}.  This is, of course, not true for  common Turing-complete languages.
Another interesting corollary of the \emph{proofs} is the definition of a simple class of patterns that suffice to realize the worst-case behaviors.

There are a number of possible directions for further work.
\begin{itemize}
\item
As discussed in Section~\ref{sec-complexity}, we have not settled  the computational complexity of the problem we have solved.
\item
We propose to look for decidability results for richer (yet, obviously, sub-recursive) languages. Some possible language extensions
include deterministic loops, variable resets (cf.~\cite{B2010:DICE}), explicit constants, and (recursive) procedures.
The inclusion of explicit constants is a particularly challenging open problem.
\item
Rather than extending the language, we could extend the range of bounds that we can compute.
In light of the results in~\cite{KN04}, it seems plausible that the approach can be extended to classify the
Grzegorczyk-degree of the growth rate of variables when they are super-polynomial. There may also be
room for progress regarding precise bounds of the form $2^{poly}$.
\item
Our algorithm computes bounds on the highest possible values of variables. With our restricted arithmetic we can reduce the calculation of
a ``countable resource'' like the number of steps to bounding a variable.  However,
our weak language seems useless as an abstraction for more advanced resource-analysis problems,
e.g., analysis of expected costs (for programs in which such an analysis is interesting).
So we pose the problem of designing weak languages that adequately abstract some non-trivial cases of more advanced analyses and obtain computability
for them.
\item
Finally, we hope to see the inclusion of our algorithm (or at least the approach) in a system that handles a real-life
programming language. In particular, it would be interesting to see how our method works together with techniques that discover loop bounds,
typically via ranking functions.
\end{itemize}

\bigskip
\paragraph{\bf Acknowledgment.} Amir M. Ben-Amram is grateful for the hospitality at the School of Computing, Dublin City University, where part of this work has been done.  The authors also thank the referees for valuable comments.

\bibliographystyle{alpha}

\begin{thebibliography}{HOPW18}

\bibitem[AAA{\etalchar{+}}09]{AAAGP:aplas09}
Elvira Albert, Diego Alonso, Puri Arenas, Samir Genaim, and German Puebla.
\newblock Asymptotic resource usage bounds.
\newblock In Zhenjiang Hu, editor, {\em Programming Languages and Systems, 7th
  Asian Symposium, {APLAS} 2009, Seoul, Korea, December 14-16, 2009.
  Proceedings}, volume 5904 of {\em Lecture Notes in Computer Science}, pages
  294--310. Springer, 2009.

\bibitem[AAG{\etalchar{+}}12]{Albert-et-al:TCS:2011}
Elvira Albert, Puri Arenas, Samir Genaim, German Puebla, and Damiano Zanardini.
\newblock Cost analysis of object-oriented bytecode programs.
\newblock {\em Theoretical Computer Science}, 413(1):142--159, 2012.

\bibitem[ADFG10]{ADFG:2010}
Christophe Alias, Alain Darte, Paul Feautrier, and Laure Gonnord.
\newblock Multi-dimensional rankings, program termination, and complexity
  bounds of flowchart programs.
\newblock In Radhia Cousot and Matthieu Martel, editors, {\em Static Analysis
  Symposium, SAS'10}, volume 6337 of {\em LNCS}, pages 117--133. Springer,
  2010.

\bibitem[BA10]{B2010:DICE}
Amir~M. Ben-Amram.
\newblock On decidable growth-rate properties of imperative programs.
\newblock In Patrick Baillot, editor, {\em International Workshop on
  Developments in Implicit Computational complExity (DICE 2010)}, volume~23 of
  {\em EPTCS}, pages 1--14, 2010.

\bibitem[BAG14]{Ben-AmramG13jv}
Amir~M. Ben-Amram and Samir Genaim.
\newblock Ranking functions for linear-constraint loops.
\newblock {\em Journal of the ACM}, 61(4):26:1--26:55, jul 2014.

\bibitem[BAK12]{BK11}
Amir~M. Ben-Amram and Lars Kristiansen.
\newblock On the edge of decidability in complexity analysis of loop programs.
\newblock {\em International Journal on the Foundations of Computer Science},
  23(7):1451--1464, 2012.

\bibitem[BAP16]{BAPineles:2016}
Amir~M. Ben-Amram and Aviad Pineles.
\newblock Flowchart programs, regular expressions, and decidability of
  polynomial growth-rate.
\newblock In Geoff Hamilton, Alexei Lisitsa, and Andrei~P. Nemytykh, editors,
  {\em {\rm Proceedings of the Fourth International Workshop on} Verification
  and Program Transformation (VPT), {\rm Eindhoven, The Netherlands}}, volume
  216 of {\em Electronic Proceedings in Theoretical Computer Science}, pages
  24--49. Open Publishing Association, 2016.

\bibitem[BEF{\etalchar{+}}16]{Giesl:toplas2016}
Marc Brockschmidt, Fabian Emmes, Stephan Falke, Carsten Fuhs, and J\"{u}rgen
  Giesl.
\newblock Analyzing runtime and size complexity of integer programs.
\newblock {\em ACM Trans. Program. Lang. Syst.}, 38(4):13:1--13:50, August
  2016.

\bibitem[BG17]{Ben-AmramG17}
Amir~M. Ben{-}Amram and Samir Genaim.
\newblock On multiphase-linear ranking functions.
\newblock In Rupak Majumdar and Viktor Kuncak, editors, {\em Computer Aided
  Verification, CAV'17}, volume 10427 of {\em Lecture Notes in Computer
  Science}, pages 601--620. Springer, 2017.

\bibitem[BHZ08]{BagnaraHZ08}
Roberto Bagnara, Patricia~M. Hill, and Enea Zaffanella.
\newblock The parma polyhedra library: Toward a complete set of numerical
  abstractions for the analysis and verification of hardware and software
  systems.
\newblock {\em Sci. Comput. Program.}, 72(1-2):3--21, 2008.

\bibitem[BJK08]{BJK08}
Amir~M. {Ben-Amram}, Neil~D. Jones, and Lars Kristiansen.
\newblock Linear, polynomial or exponential? complexity inference in polynomial
  time.
\newblock In Arnold Beckmann, Costas Dimitracopoulos, and Benedikt L{\"o}we,
  editors, {\em Logic and Theory of Algorithms, Fourth Conference on
  Computability in Europe, {CiE} 2008}, volume 5028 of {\em LNCS}, pages
  67--76. Springer, 2008.

\bibitem[Boj09]{Bojanczyk:DLT:09}
Miko{\l}aj Boja{\'{n}}czyk.
\newblock Factorization forests.
\newblock In Volker Diekert and Dirk Nowotka, editors, {\em Developments in
  Language Theory, 13th International Conference, {DLT} 2009, Stuttgart,
  Germany, June 30 - July 3, 2009. Proceedings}, volume 5583 of {\em Lecture
  Notes in Computer Science}, pages 1--17. Springer, 2009.

\bibitem[CDZ14]{CDZ-MFCS14}
Thomas Colcombet, Laure Daviaud, and Florian Zuleger.
\newblock Size-change abstraction and max-plus automata.
\newblock In Erzs{\'{e}}bet Csuhaj{-}Varj{\'{u}}, Martin Dietzfelbinger, and
  Zolt{\'{a}}n {\'{E}}sik, editors, {\em Mathematical Foundations of Computer
  Science 2014 - 39th International Symposium, {MFCS} 2014, Budapest, Hungary,
  August 25-29, 2014. Proceedings, Part {I}}, volume 8634 of {\em LNCS}, pages
  208--219. Springer, 2014.

\bibitem[CHS15]{CHS:pldi2015}
Quentin Carbonneaux, Jan Hoffmann, and Zhong Shao.
\newblock Compositional certified resource bounds.
\newblock In {\em Proceedings of the {ACM} {SIGPLAN} 2015 Conference on
  Programming Language Design and Implementation (PLDI)}. ACM, 2015.

\bibitem[GAB{\etalchar{+}}17]{APROVE-JAR2017}
J.~Giesl, C.~Aschermann, M.~Brockschmidt, F.~Emmes, F.~Frohn, C.~Fuhs,
  J.~Hensel, C.~Otto, M.~Pl{\"{u}}cker, P.~Schneider-Kamp, T.~Str{\"{o}}der,
  S.~Swiderski, and R.~Thiemann.
\newblock Analyzing program termination and complexity automatically with
  aprove.
\newblock {\em Journal of Automated Reasoning}, 58(1):3--31, 2017.

\bibitem[HOPW18]{Ouaknine-polynomial-invariants}
Ehud Hrushovski, Jo\"{e}l Ouaknine, Amaury Pouly, and James Worrell.
\newblock Polynomial invariants for affine programs.
\newblock In {\em Proceedings of the 33rd Annual ACM/IEEE Symposium on Logic in
  Computer Science}, LICS '18, pages 530--539, New York, NY, USA, 2018. ACM.

\bibitem[JAM10]{AKM09}
Lars~Kristiansen James~Avery and Jean-Yves Moyen.
\newblock Static complexity analysis of higher order programs.
\newblock In {\em International Workshop on Foundational and Practical Aspects
  of Resource Analysis, {FOPARA '09}, Eindhoven, the Netherlands, Proceedings},
  volume 6324 of {\em Lecture Notes in Computer Science}, pages 84--99.
  Springer, 2010.

\bibitem[JK09]{JK08}
Neil~D. Jones and Lars Kristiansen.
\newblock A flow calculus of mwp-bounds for complexity analysis.
\newblock {\em ACM Trans. Computational Logic}, 10(4):1--41, 2009.

\bibitem[JPB08]{JPB:2008}
Rapha{\"e}l~M. Jungers, Vladimir Protasov, and Vincent~D. Blondel.
\newblock Efficient algorithms for deciding the type of growth of products of
  integer matrices.
\newblock {\em Linear Algebra and its Applications}, 428:2296--2311, 2008.

\bibitem[KA80]{KasaiAdachi:80}
Takumi Kasai and Akeo Adachi.
\newblock A characterization of time complexity by simple loop programs.
\newblock {\em Journal of Computer and System Sciences}, 20(1):1--17, 1980.

\bibitem[K18]{Kincaid:sas18}
Zachary~Kincaid.
\newblock Numerical invariants via abstract machines.
\newblock In A.~Podelski, editor, {\em Static Analysis---25th International
  Symposium, {SAS} 2018, Proceedings},
  volume 11002 of {\em LNCS}, pages 24--42.
  Springer, 2018.

\bibitem[KN04]{KN04}
Lars Kristiansen and Karl-Heinz Niggl.
\newblock On the computational complexity of imperative programming languages.
\newblock {\em Theor. Comp. Sci.}, 318(1-2):139--161, 2004.

\bibitem[LDK{\etalchar{+}}18]{CiaoPP-TPLP2018}
Pedro L{\'{o}}pez{-}Garc{\'{\i}}a, Luthfi Darmawan, Maximiliano Klemen, Umer
  Liqat, Francisco Bueno, and Manuel~V. Hermenegildo.
\newblock Interval-based resource usage verification by translation into horn
  clauses and an application to energy consumption.
\newblock {\em {TPLP}}, 18(2):167--223, 2018.

\bibitem[LM88]{ACE}
Daniel Le~M\'{e}tayer.
\newblock Ace: an automatic complexity evaluator.
\newblock {\em ACM Trans. Program. Lang. Syst.}, 10(2):248--266, 1988.

\bibitem[LJB01]{leejonesbenamram01}
Chin~Soon Lee, Neil~D. Jones, and Amir~M. Ben-Amram.
\newblock The size-change principle for program termination.
\newblock In {\em Proc. 28th {ACM} Symp. on Principles
  of Programming Languages}, 81--92. ACM press, 2001.


\bibitem[May03]{Mayr03}
Richard Mayr.
\newblock Undecidable problems in unreliable computations.
\newblock {\em Theor. Comput. Sci}, 297(1-3):337--354, 2003.

\bibitem[MP18]{McMillanPadon18}
Kenneth~L. McMillan and Oded~Padon.
\newblock Deductive verification in decidable fragments with ivy.
\newblock In A.~Podelski, editor, {\em Static Analysis---25th International
  Symposium, {SAS} 2018, Proceedings},
  volume 11002 of {\em LNCS}, pages 43--55.
  Springer, 2018.

\bibitem[MOS04]{Seidl-polynomial-invariants}
Markus M{\"{u}}ller-Olm and Helmut Seidl.
\newblock Computing polynomial program invariants.
\newblock {\em Information Processing Letters}, 91(5):233--244, 2004.

\bibitem[MR67]{MR:67}
Albert~R. Meyer and Dennis~M. Ritchie.
\newblock The complexity of loop programs.
\newblock In {\em Proc. 22nd ACM National Conference}, pages 465--469,
  Washington, DC, 1967.

\bibitem[NW06]{NW06}
Karl-Heinz Niggl and Henning Wunderlich.
\newblock Certifying polynomial time and linear/polynomial space for imperative
  programs.
\newblock {\em SIAM J. Comput}, 35(5):1122--1147, 2006.

\bibitem[PR04]{PR:04}
Andreas Podelski and Andrey Rybalchenko.
\newblock A complete method for synthesis of linear ranking functions.
\newblock In Bernhard Steffen and Giorgio Levi, editors, {\em VMCAI~2003:
  Verification, Model Checking, and Abstract Interpretation}, volume 2937 of
  {\em LNCS}, pages 239--251. Springer, 2004.

\bibitem[Ros89]{Rosendahl89}
M.~Rosendahl.
\newblock Automatic complexity analysis.
\newblock In {\em Proceedings of the Conference on Functional Programming
  Languages and Computer Architecture {FPCA}'89}, pages 144--156. ACM, 1989.

\bibitem[Sim90]{Simon:TCS:90}
Imre Simon.
\newblock Factorization forests of finite height.
\newblock {\em Theoretical Computer Science}, 72(1):65--94, 1990.

\bibitem[SZV17]{SZV:jar2017}
Moritz Sinn, Florian Zuleger, and Helmut Veith.
\newblock Complexity and resource bound analysis of imperative programs using
  difference constraints.
\newblock {\em Journal of Automated Reasoning}, 59(1):3--45, 2017.

\bibitem[Weg75]{Wegbreit:75}
Ben Wegbreit.
\newblock Mechanical program analysis.
\newblock {\em Communications of the ACM}, 18(9):528--539, 1975.

\bibitem[Zul15]{Zuleger-CSR15}
Florian Zuleger.
\newblock Asymptotically precise ranking functions for deterministic
  size-change systems.
\newblock In {\em Computer Science---Theory and Applications---10th
  International Computer Science Symposium in Russia, {CSR} 2015, Listvyanka,
  Russia, July 13-17, 2015, Proceedings}, volume 9139 of {\em Lecture Notes in
  Computer Science}, pages 426--442. Springer, 2015.

\end{thebibliography}
\newcommand{\etalchar}[1]{$^{#1}$}

\end{document}